\theoremstyle{plain}
\newtheorem{theorem}{Theorem}[section]
\newtheorem{lemma}[theorem]{Lemma}
\newtheorem{claim}[theorem]{Claim}
\newtheorem{corollary}[theorem]{Corollary}
\theoremstyle{definition} 
\newtheorem{remark}[theorem]{Remark}
\newcommand{\1}{\mathbbm{1}} 
\def\@gifnextchar#1#2#3{\let\@tempe#1\def\@tempa{#2}\def\@tempb{#3}%
  \futurelet\@tempc\@gifnch}
\def\@gifnch{\ifx\@tempc\@sptoken\let\@tempd\@tempb%
  \else\ifx\@tempc\@tempe\let\@tempd\@tempa\else\let\@tempd\@tempb\fi\fi\@tempd}
\def\SK@set#1{\left\{#1\right\}}
\def\SK@@set#1#2{\{#1\,:\,
    \begin{array}{@{}l@{}}#2\end{array}
\}}
\def\SK@mset#1{\left\{\!\!\left\{#1\right\}\!\!\right\}}
\def\SK@@mset#1#2{\{\!\!\{#1\,:\,
    \begin{array}{@{}l@{}}#2\end{array}
\}\!\!\}}
\def\BIG@set#1{\Big\{#1\Big\}}
\def\BIG@@set#1#2{\Big\{#1\:\Big|\:
    \begin{array}{@{}l@{}}#2\end{array}
\Big\}}
\newcommand{\Set}[1]{\@gifnextchar\bgroup{\SK@@set{#1}}{\SK@set{#1}}}
\newcommand{\Mset}[1]{\@gifnextchar\bgroup{\SK@@mset{#1}}{\SK@mset{#1}}}
\newcommand{\Bigset}[1]{\@gifnextchar\bgroup{\BIG@@set{#1}}{\BIG@set{#1}}}
\newcommand{\refeq}[1]{(\ref{eq:#1})}
\newcommand{\setdef}[2]{\left\{ \hspace{0.5mm} #1 : \hspace{0.5mm} #2 \right\}}
\newcommand{\of}[1]{\left( #1 \right)}
\newcommand{\function}[2]{:#1 \rightarrow #2}
\newcommand{\bZ}{\mathbb{Z}}
\newcommand{\cP}{\mathcal{P}}
\newcommand{\creq}{\equiv_{\mathrm{CR}}}
\DeclareMathOperator{\aut}{Aut}
\DeclareMathOperator{\core}{core}
\newcommand{\compn}{\mathsf{H}_n}
\newcommand{\coren}{\mathsf{C}_n}
\newcommand{\corencol}{\mathsf{C}'_n}
\title{Canonical labeling of sparse random graphs}
\author{\Large Oleg Verbitsky\thanks{Institut f\"ur Informatik,
  Humboldt-Universit\"at zu Berlin, Unter den Linden 6, D-10099 Berlin.
  Supported by DFG grant KO 1053/8--2.
  On leave from the IAPMM, Lviv, Ukraine.}, \,\,\,\,\, Maksim Zhukovskii\thanks{The University of Sheffield, School of Computer Science, Sheffield S1 4DP, UK.\newline Email: m.zhukovskii@sheffield.ac.uk.}}
\date{}
\begin{document}

\maketitle

\begin{abstract}
  We show that if $p=O(1/n)$, then the Erd\H{o}s-R\'{e}nyi random graph $G(n,p)$ with high probability
  admits a canonical labeling computable in time $O(n\log n)$. Combined with the previous results on
  the canonization of random graphs, this implies that $G(n,p)$ with high probability
  admits a polynomial-time canonical labeling whatever the edge probability function~$p$.
  Our algorithm combines the standard color refinement routine with simple post-processing
  based on the classical linear-time tree canonization. Noteworthy, our analysis of how well
  color refinement performs in this setting allows us to complete the description of
  the automorphism group of the 2-core of~$G(n,p)$.
\end{abstract}

\section{Introduction}
\label{sc:intro}

On an $n$-vertex input graph $G$, a {\it canonical labeling algorithm} computes a bijection
$\lambda_G\function{V(G)}{\{1,\ldots,n\}}$ such that if another graph $G'$ is isomorphic to $G$,
then the isomorphic images of $G$ and $G'$ under respective permutations $\lambda_G$ and $\lambda_{G'}$ are equal.
Given the labelings $\lambda_G$ and $\lambda_{G'}$, it takes linear time to check whether $G$ and $G'$ are isomorphic.
The existence of polynomial-time algorithms for testing isomorphism of two given graphs and, in particular, for producing
a canonical labeling remain open. Babai's breakthrough quasi-polynomial algorithm for testing graph isomorphism \cite{Babai-GI}
was subsequently extended to a canonical labeling algorithm of the same time complexity \cite{Babai-CL}.
In the present paper, we address the canonical labeling problem for the Erd\H{o}s-R\'{e}nyi (or binomial) random graph
$G(n,p)$. Recall that the vertex set of $G(n,p)$ is $\{1,\ldots,n\}$, and each pair of vertices is adjacent with probability
$p=p(n)$, independently of the other pairs.

Babai, Erd\H{o}s, and Selkow~\cite{BES} proved that the simple algorithmic routine known as \emph{color refinement}
(\emph{CR} for brevity) with high probability produces a \emph{discrete} coloring of the vertices of $G(n,1/2)$, that is,
a coloring where the vertex colors are pairwise different. Since the vertex colors are isomorphism-invariant,
this yields a canonical labeling of $G(n,1/2)$ by numbering the color names in the lexicographic order.
Here and throughout, we say that an event happens for $G(n,p)$ \emph{with high probability}
(\emph{whp} for brevity) if the probability of this event tends to 1 as $n\to\infty$.
The result of \cite{BES} has a fundamental meaning: almost all graphs admit an easily computable canonical labeling
and, hence, the graph isomorphism problem has low average-case complexity.

The argument of \cite{BES} can be extended to show \cite[Theorem~3.17]{Bollobas_book} that the
CR coloring of $G(n,p)$ is whp discrete for all $n^{-1/5}\ln n\ll p\leq 1/2$.
Note that it is enough to consider the case of $p\leq 1/2$ since $G(n,1-p)$ has the same
distribution as the complement of $G(n,p)$. Remarkably,
the algorithm of Babai, Erd\H{o}s, and Selkow performs only a bounded number of color refinement steps
and, due to this, works in linear time.

A different algorithm suggested by Bollob\'{a}s~\cite{Bol} works in polynomial time and whp produces a canonical labeling
of $G(n,p)$ in a much sparser regime, namely when $c_1\frac{\ln n}{n}\leq p\leq c_2\, n^{-11/12}$ for some positive
constants $c_1$ and $c_2$. The next improvement was obtained by Czajka and Pandurangan~\cite{CzP} who proved that,
in a bounded number of rounds, CR yields a discrete coloring of $G(n,p)$ whp
when $\frac{\ln^4 n}{n\ln\ln n}\ll p\leq \frac{1}{2}$. Finally, Linial and Mosheiff~\cite{LM}
designed a polynomial time algorithm for canonical labeling of $G(n,p)$ when $\frac{1}{n}\ll p\leq\frac{1}{2}$.
As shown by  Gaudio, R\'{a}cz, and Sridhar \cite{GaudioRS23},
in the subdiapason $p\ge\frac{(1+\delta)\ln n}{n}$ for any fixed $\delta>0$, whp
a canonical labeling can still be provided by CR in a bounded number of rounds.

The decades-long line of research summarized above leaves open the question whether a random
graph $G(n,p)$ admits efficient canonization in the regime $p=O(1/n)$.
Note that the case of $p=o(1/n)$ is easy. Indeed, as long as $pn=1-\omega(n^{-1/3})$,
whp $G(n,p)$ is a vertex-disjoint union of trees and \emph{unicyclic} graphs (i.e., connected graphs containing exactly one cycle)~\cite[Theorem 5.5]{Janson_book}.
Canonization of such graphs is tractable due to the classical linear-time canonical labeling
algorithms for trees \cite{AHU} and even planar graphs (see \cite{Babai81}
for a survey of the early work on graph isomorphism covering these graph classes).
Thus, efficient canonization remains unknown for all $p=p(n)$ such that, for some $C>0$ and all $n$, $1-Cn^{-1/3}\leq pn\leq C$ (even though $G(n,p)$ stays planar with a non-negligible probability as long as $pn=1+O(n^{-1/3})$; see \cite{NoyRR15}).
 Our first result closes this gap, implying that the Erd\H{o}s-R\'{e}nyi random graph $G(n,p)$
 whp admits an efficiently computable canonical labeling, whatever the edge probability function~$p(n)$
 (see a formal proof of this implication in Appendix~\ref{sc:appendix}).

\begin{theorem}\label{thm:CanLab}
  If $p=O(1/n)$, then $G(n,p)$ whp admits a canonical labeling computable in time~$O(n\log n)$.
\end{theorem}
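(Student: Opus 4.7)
The plan is to decompose $G = G(n,p)$ into its 2-core $\coren$ and the pendant forest hanging off $\coren$, canonize the pendant forest with the classical linear-time AHU tree-canonization algorithm, feed the resulting rooted-subtree codes into color refinement on $\coren$ as initial colors, and break any residual symmetries by an explicit rule. All three stages fit inside the $O(n\log n)$ time budget.

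First, compute $\coren$ in time $O(n)$ by iteratively peeling off vertices of degree $1$. The peeling organizes $V(G)\setminus V(\coren)$ into a forest whose trees each root at a unique vertex of $\coren$; write $F_v$ for the rooted forest attached to $v \in V(\coren)$. Running AHU on each rooted tree produces, in total time $O(n)$, a canonical code $\mathrm{code}(F_v)$. Assign $\mathrm{code}(F_v)$ to each $v \in V(\coren)$ as its initial color. The pendant trees now carry a canonical labeling relative to $\coren$, so the task reduces to canonically labeling the vertex-colored 2-core within the remaining $O(n\log n)$ budget.

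Second, run CR on $\coren$ from this initial coloring until it stabilizes; partition-refinement data structures make this $O(n\log n)$. The probabilistic heart of the proof is the structural claim that whp the CR-stable partition of $V(\coren)$ equals the orbit partition of $\aut(\coren)$. I would split by regime. When $pn \le 1 + o(1)$, whp $\coren$ is a disjoint union of a few cycles of total sublinear size and can be canonized directly. When $pn > 1$ is bounded away from~$1$, $\coren$ decomposes into a linearly-sized ``kernel'' of branch vertices (degree $\ge 3$) connected by chains of degree-$2$ vertices, and I would show that CR whp distinguishes all branch vertices; the only symmetries of $\coren$ left unresolved by CR are then the ``local'' ones, namely flipping a single chain or permuting parallel chains of equal length between the same pair of branch vertices.

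Third, convert the stable CR coloring into a canonical labeling of $\coren$: orient each chain from its CR-smaller endpoint to its CR-larger one (breaking further ties by the sequence of pendant codes along the chain), and order parallel equal-length chains lexicographically by those codes. Concatenating with the AHU labels of the pendant trees yields a canonical labeling of $G$ in total time $O(n\log n)$. The main obstacle is the structural claim of the previous step — that CR on the seeded 2-core separates all branch vertices whp. I expect to establish it by a local analysis: since $p = O(1/n)$, the neighborhoods of branch vertices in $\coren$ are asymptotically distributed as modified Galton--Watson trees (with offspring law descended from $\mathrm{Poisson}(pn)$ conditioned to match the 2-core degree constraint), and these neighborhoods together with the attached pendant codes are rich enough that a first-moment estimate bounds the expected number of pairs of CR-equivalent branch vertices by $o(1)$. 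The same analysis simultaneously identifies $\aut(\coren)$ as generated by chain flips and parallel-chain swaps, delivering the description of the 2-core's automorphism group advertised in the abstract.
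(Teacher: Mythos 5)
Your overall architecture --- peel off the 2-core, canonize the pendant forest by AHU, seed CR on the core with the tree codes, and break the residual chain-flip and parallel-chain symmetries by explicit tie-breaking --- is exactly the reduction the paper uses (Section 2.6), and the $O(n\log n)$ accounting is fine. The genuine gap is in the probabilistic heart, in two places. First, your regime split leaves out the hardest window. For $pn=1+\omega(n^{-1/3})$ with $pn=1+o(1)$ (weakly supercritical), whp there is a complex component whose kernel has $\Theta(n\delta_n^3)\to\infty$ vertices of degree at least 3, so $\coren$ is emphatically not ``a disjoint union of a few cycles of total sublinear size''; yet this case is also excluded from your second case, where $pn$ is bounded away from $1$. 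The paper needs separate machinery precisely here: the contiguous model of Ding--Kim--Lubetzky--Peres for the barely supercritical giant, and, close to the critical window, a different argument (Claims 3.7 and 3.9) based on the pattern of degree-2 versus higher-degree vertices along long paths in the whole complex part, not just the core. (In the critical window $pn=1\pm O(n^{-1/3})$ the core also contains bounded-excess components such as theta graphs, not only cycles, though those are indeed easy to canonize directly.)

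Second, the claim that ``a first-moment estimate bounds the expected number of pairs of CR-equivalent branch vertices by $o(1)$'' does not follow from the neighborhoods being approximately Galton--Watson. CR-equivalence of two kernel vertices is governed by their rooted universal covers, i.e.\ by neighborhoods of radius growing with $n$; to union-bound over all $\Theta(N^2)$ pairs (where $N$ is the kernel size, possibly $n\delta_n^3\ll n$) you need the probability that two such growing neighborhoods coincide to be smaller than $N^{-2}$, and local weak convergence to a Galton--Watson tree yields no quantitative bound of this kind. The paper obtains it by working in the contiguous kernel model, where kernel edges are subdivided by i.i.d.\ geometric lengths: it first shows (Claim 3.3, via switchings) that the kernel has no small complex subgraphs, so spheres around ``good'' vertices grow like $2^r$, and then the probability that the multisets of subdivision lengths crossing successive spheres agree is $2^{-\Theta(r)}$ by a local limit theorem, giving a per-pair failure probability of $\exp\left(-\Theta((\log N)^{4/3})\right)$. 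Moreover, kernel vertices whose neighborhoods are not tree-like are not covered by any such local computation; the paper treats them by a separate deterministic argument (Section 3.2.3) that uses the already-individualized good vertices as reference points together with the absence of small complex subgraphs. Without these ingredients, your Step 2 claim --- that the stable partition whp leaves only chain flips and parallel-chain swaps unresolved --- is asserted rather than proved, and it is exactly the content of the paper's Main Lemma.
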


We now recall some highlights of the evolution of the random graph. Erd\H{o}s and R\'{e}nyi~\cite{ER-evolution}
proved their spectacular result that when $p$ passes a certain threshold around $1/n$,
then the size of the largest connected component in $G(n,p)$ rapidly grows from $\Theta(\log n)$ to $\Theta(n)$.
A systematic study of the structure of connected components in the random graph when $p$ is around the critical
value $1/n$ was initiated in the influential paper of Bollob\'{a}s~\cite{Bol-evolution}.
For more details about the phase transition, see, e.g.,~\cite[Chapter~5]{Janson_book}.

A connected graph is called {\it complex}, if it has more than one cycle.
The union of all complex components of a graph $G$ will be called the \emph{complex part} of $G$,
and the union of the other components will be referred to as the \emph{simple part}.
As we already mentioned, if $pn=1-\omega(n^{-1/3})$ then the complex part of $G(n,p)$ is whp empty.
This is the so-called \emph{subcritical phase}.
In the {\it critical phase}, when $pn=1\pm O(n^{-1/3})$,
the complex part of $G(n,p)$ whp has size $O_P(n^{2/3})$ and its structure is thoroughly described in~\cite{Luczak-auto,LPW}.
Here and below, for a sequence of random variables $\xi_n$ and a sequence of reals $a_n$
we write $\xi_n=O_P(a_n)$ if the sequence $\xi_n/a_n$ is stochastically bounded%
\footnote{I.e. for every $\varepsilon>0$, there exists $C>0$ and $n_0$ such that $\mathbb{P}(|\xi_n/a_n|>C)<\varepsilon$ for all $n\geq n_0$.}.
Finally, in the {\it supercritical phase}, when $pn=1+\omega(n^{-1/3})$, whp $G(n,p)$ contains a unique complex connected component and this component has size $\Theta(n^2(p-1/n))$.
 In particular, when $p=O(1/n)$ and $p>(1+\delta)/n$ for a constant $\delta>0$
(we refer to this case as \emph{strictly supercritical regime}), whp this component has linear size $\Omega(n)$.
It is called the \emph{giant component} as all the other connected components have size~$O(\log n)$.

In general, the simple part of a graph $G$ is easily canonizable by the known techniques,
which reduces our problem to finding a canonical labeling for the complex part of $G$.
Furthermore, recall that the 2-core of a graph $H$, which we will for brevity call just \emph{core}
and denote by $\core(H)$, is the maximal subgraph of $H$ that does not have vertices of degree 1.
Equivalently, $\core(H)$ can be defined as the subgraph of $H$ obtained by iteratively pruning all vertices
in $H$ that have degree at most 1 until there are no more such vertices. Thus, if $H$ is the (non-empty)
complex part of $G$, then $H$ consists of $\core(H)$  and some (possibly empty) rooted trees planted
at the vertices of the core. It follows that if we manage to canonically label the vertices of $\core(H)$,
then this labeling easily extends to a canonical labeling of the entire graph~$G$.

Suppose that CR is run on $H$. In the most favorable case, it would output a vertex coloring discrete on $\core(H)$.
It turns out that, though not exactly true, this is indeed the case to a very large extent.

\begin{theorem}\label{thm:ColRef}
  Let $G_n=G(n,p)$ and assume that $p=O(1/n)$. Let $\compn$ denote the complex part of $G_n$ and $\coren=\core(\compn)$.
  When CR is run on $\compn$, then
  \begin{enumerate}[\bf 1.]
  \item
    CR assigns individual colors to all but $O_P(1)$ vertices in $\coren$;
  \item
    the other color classes whp have size 2;
  \item
    whp, every such color class is an orbit of the automorphism group $\aut(\compn)$
    consisting of two vertices with degree 2 in~$\coren$.
  \end{enumerate}
\end{theorem}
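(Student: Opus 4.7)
The plan is to first reduce CR on $\compn$ to CR on $\coren$ with pre-colored vertices. Because CR canonically colors any rooted tree, the stable color of a core vertex $v\in\coren$ depends on the rest of $\compn$ only through $\coren$ itself and the isomorphism type of the rooted pendant tree $T_v$ attached to $v$. Formally, the stable CR coloring of $\compn$ restricted to $\coren$ coincides with the stable CR coloring of $\coren$ initialized by $v\mapsto\mathrm{isotype}(T_v)$, so the problem reduces to CR on $\coren$ with these pendant-tree initial colors.

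Next, decompose $\coren$ into its \emph{kernel} $K$ (the vertices of $\coren$-degree $\geq 3$) and the open paths of $\coren$-degree-$2$ vertices running between kernel vertices. For a degree-$2$ vertex $v$ at position $i$ on a path of length $\ell$ with kernel endpoints $u_0,u_\ell$, the stable CR color of $v$ encodes $\mathrm{isotype}(T_v)$, the stable colors of $u_0$ and $u_\ell$, and the pair of distances $(i,\ell-i)$, ordered if CR distinguishes $u_0$ from $u_\ell$ and unordered otherwise. Two $\coren$-degree-$2$ vertices can therefore share a stable color only via one of three mechanisms: (a) a \emph{palindromic path}, i.e.\ a path with CR-equivalent endpoints and reversal-invariant pendant-tree sequence, producing a class $\{v_i,v_{\ell-i}\}$; (b) two parallel paths between the same kernel endpoints with equal length and matching pendant-tree sequences; or (c) propagation along paths of a nontrivial CR-symmetry of $K$ itself.

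The main probabilistic step is to show that whp only $O_P(1)$ core vertices fall into such bad configurations. In the critical regime $pn=1\pm O(n^{-1/3})$ this is immediate because $|\coren|=O_P(1)$. In the strictly supercritical regime, the plan is to use the standard configuration-model description of $\core(G(n,p))$ going back to \cite{Luczak-auto,LPW}: conditional on the kernel, path lengths are approximately geometric and the pendant trees at path-interior vertices are essentially i.i.d.\ subcritical Galton--Watson trees, while the kernel itself is a random multigraph on $\Theta(n)$ vertices. Under this model, one first shows that all but $O_P(1)$ kernel vertices have distinctive pendant-tree signatures, so that CR discretizes $K$ up to an $O_P(1)$ exceptional set; given this, CR-equivalence of two random kernel endpoints is rare, and palindromicity of a length-$\ell$ pendant-tree sequence costs a factor exponentially small in $\ell$, so the expected number of palindromic paths is $O(1)$. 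A parallel argument bounds the expected number of twin parallel paths and the expected extra contribution from kernel symmetries, each by $O(1)$. Markov's inequality then caps the total number of offending core vertices at $O_P(1)$, establishing part~1.

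For parts~2 and~3, the same expectation bookkeeping shows that higher-multiplicity coincidences — three or more matched parallel paths, nested palindromies, or larger kernel orbits — contribute expectation $o(1)$, so whp every surviving color class has size exactly~$2$. Each such pair is produced by an explicit involution of $\compn$: either the reflection of a palindromic path or the swap of two twin paths. By construction this involution fixes every kernel vertex and sends each pendant tree to an isomorphic copy (pendant-tree matches are exactly what created the collision), hence lies in $\aut(\compn)$; the paired vertices therefore have $\coren$-degree $2$ and form an $\aut(\compn)$-orbit, as required. The principal obstacle I anticipate is the uniform quantitative control of the expectation estimates in step three under the configuration-model conditioning — in particular making the near-geometric path-length and near-i.i.d.\ pendant-tree approximations sharp enough to give $O(1)$ bounds across the full range $p=O(1/n)$, and correctly handling the barely supercritical window $pn=1+o(1)$ where the kernel is small but not $O_P(1)$.
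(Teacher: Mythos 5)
Your reduction to CR on $\coren$ with pendant-tree initial colors is sound (the paper uses the same fact, via universal covers), but the central probabilistic step is asserted through a mechanism that does not work. You claim that ``all but $O_P(1)$ kernel vertices have distinctive pendant-tree signatures, so that CR discretizes $K$ up to an $O_P(1)$ exceptional set.'' In the supercritical regime the trees attached to core vertices are (essentially) i.i.d.\ subcritical Galton--Watson trees, so a positive proportion of the $\Theta(n\delta_n^3)$ kernel vertices carry the \emph{same} pendant tree (e.g.\ the empty one); pendant-tree signatures alone leave $\Theta(|K|)$ collisions, not $O_P(1)$. The actual discretization of the kernel is the hard core of the paper's Main Lemma: it is obtained from the geometry of the core itself --- for vertices whose balls are trees, one compares the multisets of subdivision lengths of kernel edges at growing radii (spheres grow exponentially, geometric path lengths have small atoms), using the contiguous models of Ding--Lubetzky--Peres and Ding--Kim--Lubetzky--Peres together with a switching argument showing the kernel has no small complex subgraphs, and then one bootstraps from these ``good'' vertices to all non-interchangeable pairs. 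None of this is replaced by anything in your sketch, and the ``expectation bookkeeping'' you invoke cannot substitute for it, because the events to be excluded are about the stable CR coloring, not about locally checkable coincidences. Your dismissal of the critical window is also wrong: for $pn=1\pm O(n^{-1/3})$ one has $|\coren|=\Theta_P(n^{1/3})$ (only the excess/kernel is $O_P(1)$), so Part 1 is not ``immediate'' there; the paper handles this window by a separate argument (no two long paths with matching degree-$2$ patterns).

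There is a second gap in Parts 2--3. Your mechanism (a) allows a ``palindromic'' path between two \emph{distinct} CR-equivalent kernel endpoints, and you then claim every surviving colour class is an orbit because ``the reflection of a palindromic path'' is an involution of $\compn$. It is not: reflecting such a path would have to swap the two endpoints together with everything attached to them, and matching pendant trees along the path plus CR-equivalence of the endpoints does not produce such an automorphism. The same applies to your mechanism (c) (collisions propagated from a CR-symmetry of the kernel): an $O_P(1)$ bound on their number does not give Part 3. What the paper proves is stronger and is exactly what is needed: whp these mechanisms do not occur at all, and every same-coloured pair is interchangeable, i.e.\ lies on a pendant cycle or on two transposable pendant paths sharing both endvertices, where the swapping involution genuinely exists; Parts 1--2 then follow from a separate moment computation showing that the number of pendant and degenerate cycles is $O_P(1)$ and that whp no two of them share a vertex. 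So the overall architecture you propose (classify collision mechanisms, bound them in expectation, exhibit involutions) is reasonable in outline, but as written the key discretization step rests on a false claim and the orbit statement would fail for the configurations your classification still permits.
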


\begin{remark}
From our proofs it is easy to derive that, when $np=1+o(1)$,
whp CR distinguishes between all vertices of $\coren$ whp.
\label{rk:action}
\end{remark}

Theorem \ref{thm:ColRef} allows us to obtain an efficient canonical labeling algorithm for $G(n,p)$,
as stated in Theorem \ref{thm:CanLab}, by combining CR with simple post-processing
whose most essential part is invoking the linear-time tree canonization.
Another consequence of Theorem \ref{thm:ColRef} is that CR alone is powerful enough
to solve the standard version of the graph isomorphism problem for the complex part of $G(n,p)$.
Specifically, we say that a graph $H$ is \emph{identifiable} by CR if CR distinguishes $H$ from any non-isomorphic
graph $H'$ (in the sense that CR outputs different multisets of vertex colors on inputs $H$ and~$H'$).
It is not hard to see that $H$ is identifiable by CR whenever the CR coloring of $H$ is discrete.
Fortunately, the properties of the CR coloring ensured by Theorem \ref{thm:ColRef} are still sufficient
for CR-identifiability.

\begin{corollary}\label{cor:ident}
  Under the assumption of Theorem \ref{thm:ColRef},
  \begin{enumerate}[\bf 1.]
  \item
    $\compn$ is whp identifiable by CR and, consequently,
  \item
whp, $G_n$ is identifiable by CR exactly when the simple part of $G_n$ is identifiable.
  \end{enumerate}
\end{corollary}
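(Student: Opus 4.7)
The plan is to deduce Corollary~\ref{cor:ident} from Theorem~\ref{thm:ColRef} combined with a few standard properties of color refinement: CR treats connected components independently; the stable CR color of a vertex records its iterated-degree profile and in particular encodes whether the vertex lies in a tree, a unicyclic, or a complex component; the stable CR coloring is an equitable partition, so the number of neighbors of a given color is constant on each color class; and CR (via the classical AHU procedure) identifies the isomorphism type of a rooted tree.

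\emph{Part 1.} Suppose $H'$ produces the same multiset of stable CR colors as $\compn$. By Theorem~\ref{thm:ColRef}, whp the stable partition of $\compn$ consists of singletons together with $O_P(1)$ pairs of degree-2 core vertices that form $\aut(\compn)$-orbits, and the same color-class sizes must occur in $H'$. Matching singleton colors gives a forced partial bijection $\phi$ on all but those $O_P(1)$ core vertices, and tree canonization extends $\phi$ over the forest hanging off the core. For the remaining size-2 classes I initially match pairs arbitrarily. The equitable-partition structure then forces the adjacency between two singleton-colored vertices to be identical in $\compn$ and $H'$; adjacencies from a singleton-colored $u$ to a size-2 orbit $\{v_1,v_2\}$ are forced to $0$ or $2$, since any automorphism of $\compn$ swapping $v_1,v_2$ must fix $u$; and adjacencies between two size-2 orbits can differ from those in $\compn$ only by a swap inside one (or both) of the orbits. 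The freedom to swap $\phi$ within a pair allows me to correct any such discrepancy, and I argue that these corrections can be chosen consistently across all $O_P(1)$ orbits.

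\emph{Part 2.} Write $G_n = \compn \sqcup S_n$ with $S_n$ the simple part. Because CR decomposes over components and the CR color of a vertex reveals whether its component is complex or simple, any $H'$ with the same CR multiset as $G_n$ splits as $H' = H'_c \sqcup H'_s$ with $H'_c$ and $H'_s$ sharing their stable CR multisets with $\compn$ and $S_n$, respectively. Part~1 forces $H'_c \cong \compn$, so $H' \cong G_n$ iff $H'_s \cong S_n$, which is exactly the asserted equivalence.

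The main obstacle is the consistency argument in Part~1. Locally, each pair of interacting orbits imposes a parity constraint on whether I should swap $\phi$ on each orbit; globally, one must show these constraints are simultaneously satisfiable. The essential input is that Theorem~\ref{thm:ColRef} supplies the pairs as genuine $\aut(\compn)$-orbits of degree-2 vertices of $\coren$, which limits the geometry of interactions to that of parallel paths inside $\coren$ that are swapped by a single automorphism of $\compn$, and hence guarantees consistency.
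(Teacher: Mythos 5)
Your overall route is the same as the paper's: reduce everything to the core with each vertex carrying the isomorphism type of its attached tree (this is Claim~\ref{lem:CGCH} together with Claim~\ref{cl:trees_cores_distinguish}), note that singleton--singleton and singleton--pair adjacencies are forced by the stable colours, that pair--pair adjacencies are forced up to the choice of a perfect matching, and that this residual freedom does not change the isomorphism type; your Part~2 is the component-wise decomposition that the paper gets from Part~1 of Theorem~\ref{thm:amenability}. One small misstatement: Theorem~\ref{thm:ColRef} controls the colour classes only on $\coren$, so ``the stable partition of $\compn$ consists of singletons plus $O_P(1)$ pairs'' is false as written --- tree vertices of $\compn$ can lie in large colour classes. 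Your later appeal to tree canonization shows you meant the restriction to the core, but then the extension of $\phi$ over the forests must be justified by the fact that equal colours of core vertices force isomorphic attached trees, which is exactly Claim~\ref{lem:CGCH}.

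The substantive issue is that the step you yourself call ``the main obstacle'' is settled by assertion. Knowing that each size-2 class is an $\aut(\compn)$-orbit of two degree-2 core vertices does not by itself confine the geometry to ``parallel paths swapped by a single automorphism'': a priori the automorphism realizing such an orbit could move branch vertices (for instance the $A_3$-type symmetry of a component with two degree-3 vertices joined by three pendant paths, or a swap of two isomorphic complex components), and your closing sentence also omits pendant cycles, whose orbit pairs form chains ending in an intra-class edge or at the antipodal singleton rather than between two branch vertices. This is precisely the point where the paper invokes its Main Lemma (Lemma~\ref{th:giant_main}) to rule out $A_3$-type automorphisms of $\compn$ and conclude that the size-2 classes lie along transposable pendant paths or reflection-symmetric pendant cycles, after which ``any choice of matchings gives an isomorphic graph'' is immediate. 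To close your version you must either argue as the paper does, or observe that an automorphism moving a core vertex of degree at least 3 would place two such vertices in one colour class, contradicting Parts~2--3 of Theorem~\ref{thm:ColRef}; alternatively you can bypass the geometry altogether: each vertex of a size-2 class has exactly two core-neighbours, so the graph whose edges are the matching-coupled pairs of classes has maximum degree 2, and a cycle in it would force a core component all of whose vertices have degree 2, which is impossible inside the complex part --- hence the swap constraints live on disjoint paths and are greedily satisfiable. Some such argument is needed; as written, the consistency claim is exactly the unproved part.
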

The CR-identifiability of the simple part of a graph admits an explicit, 
efficiently verifiable characterization, which we give in Theorem~\ref{thm:amenability}. This characterization can be used to show that the random graph $G_n$ is identifiable by CR with probability asymptotically bounded away from 0 and 1.

Our techniques for proving Theorems \ref{thm:CanLab} and \ref{thm:ColRef} can also be used
for deriving a structural information about the automorphisms of a random graph.
As proved by Erd\H{o}s and R\'{e}nyi~\cite{ER-asymmetry} and by Wright~\cite{Wright},
$G(n,p)$ for $p\leq 1/2$ is asymmetric, i.e., has no non-identity automorphism, if
$np-\ln n\to\infty$ as $n\to\infty$. This result is best possible because if, $np-\ln n\leq \gamma$ for some constant $\gamma>0$,
then the random graph has at least 2 isolated vertices with non-vanishing probability.
It is noteworthy that the asymmetry of $G(n,p)$ in the regime $p\ge\frac{(1+\delta)\ln n}{n}$ can be
certified by the fact that CR coloring of $G(n,p)$ is discrete due to the
aforementioned result of Gaudio, R\'{a}cz, and Sridhar \cite{GaudioRS23}.
In the diapason of $p$ forcing $G(n,p)$ to be disconnected, the action of the automorphism group
can be easily understood on the simple part and on the tree-like pieces of the complex part,
and full attention should actually be given to the core of the complex part.
Theorem \ref{thm:ColRef} provides a pretty much precise information about the action
of $\aut(G_n)$ on $\coren$.
More subtle questions arise if, instead of considering the action of $\aut(G_n)$ on $\coren$,
we want to understand the automorphisms of $\coren$ itself. It is quite remarkable that
the CR algorithm, applied to $\coren$ rather than to $\compn$, can serve as a sharp instrument
for tackling this problem (and, in fact, the proof of Theorem \ref{thm:ColRef} is based on an analysis
of the output of CR on~$\coren$).

We begin with describing simple types of potential automorphisms of $\coren$
(with the intention of showing that, whp, all automorphism of $\coren$ are actually of this kind).
If a vertex $x$ has degree 2 in $\coren$, then it belongs to a (unique) path
from a vertex $s$ of degree at least 3 to a vertex $t$ of degree at least 3 with all
intermediate vertices having degree 2. We call such a path in $\coren$ \emph{pendant}.
It is possible that $s=t$, and in this case we speak of a \emph{pendant cycle}.
Clearly, the reflection of a pendant cycle fixing its unique vertex of degree more than 2
is an automorphism of $\coren$. Furthermore, we call two pendant paths \emph{transposable}
if they have the same length and share the endvertices. Note that $\coren$ has an automorphism
transposing such paths (and fixing their endvertices). Let $A_1$ denote the set of the automorphisms
provided by pendant cycles, and let $A_2$ be the set of the automorphisms provided by transposable
pairs of pendant paths. Moreover, $\coren$ can have a connected component consisting of two vertices
of degree 3 and three pendant paths of pairwise different lengths between these vertices.
Such a component has a single non-trivial automorphism, which contributes in $\aut(\coren)$. 
The set of such automorphisms of $\coren$ will be denoted by~$A_3$. 

Recall that an elementary abelian 2-group is a group in which all non-identity elements
have order 2 or, equivalently, a group isomorphic to the group $(\bZ_2)^k$ for some~$k$.
 
\begin{theorem}
Let $G_n=G(n,p)$ and assume that $p=O(1/n)$. Let $\coren$ be the core of the complex part of~$G_n$.
\begin{enumerate}[\bf 1.]
 \item
  The order of $\aut(\coren)$ is stochastically bounded, i.e., $|\aut(\coren)|=O_P(1)$. 
\item
  Whp, $\aut(\coren)$ is an elementary abelian 2-group. Moreover,
  $A_1\cup A_2\cup A_3$ is a minimum generating set of $\aut(\coren)$.\footnote{Consequently, whp $\coren$
    contains neither a triple of pairwise transposable paths, nor two isomorphic components with an automorphism in $A_3$,
    nor a cyclic component with a single chord between diametrically opposite vertices. Moreover, whp no two pendant cycles
    in $\coren$ share a vertex.}
\item
 In addition,
\begin{enumerate}[\bf (a)]
\item
  if $pn\geq 1+\delta$ for a constant $\delta>0$,
  then  both $A_1$ and $A_2$ are non-empty with non-negligible probability, while $A_3=\varnothing$ whp.
\item
  If $pn=1+o(1)$ and $pn=1+\omega(n^{-1/3})$,
  then $A_1\neq\varnothing$ with non-negligible probability, while $A_2=A_3=\varnothing$ whp.
\item
  If $pn=1\pm O(n^{-1/3})$, then both $A_1$ and $A_3$ are non-empty with non-negligible probability, while $A_2=\varnothing$ whp.
\end{enumerate}
\end{enumerate}
\label{th:symmetries}
\end{theorem}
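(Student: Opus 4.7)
The plan is to analyze $\coren$ through its \emph{kernel} multigraph $K$, obtained by iteratively suppressing degree-$2$ vertices. Since a complex connected graph has at least two cycles, every component of $\coren$ contains a vertex of degree at least~$3$, so $K$ has minimum degree at least~$3$ (with loops counted doubly and parallel edges counted with multiplicity). The graph $\coren$ is reconstructed from $K$ together with a length function $\ell\function{E(K)}{\mathbb{N}_{\geq 1}}$: a loop corresponds to a pendant cycle and an ordinary edge to a pendant path of length $\ell_e$. Every automorphism of $\coren$ therefore factors as a length-preserving automorphism $\bar\sigma$ of $K$ together with, for each $\bar\sigma$-orbit of edges, a choice of extension to the internal subdivision vertices. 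The generators in $A_1$, $A_2$, $A_3$ are exactly the simplest such moves: reversal of a loop, swap of two parallel edges of equal length, and the swap of a two-vertex component with three parallel edges of pairwise distinct lengths.

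Turning to Parts~\textbf{1} and~\textbf{2}, I would use first-moment estimates on $G(n,p)$ to rule out, whp, any kernel structure producing non-abelian or unbounded symmetry. Specifically, I would show that whp: no vertex of $K$ carries two loops and no two loops share a vertex; no two distinct pairs of parallel edges share a vertex; no component contains three pairwise transposable paths; no component besides the described $A_3$-component has three or more parallel edges between a single pair of vertices with coincident lengths; no ``cyclic component with a chord between diametrically opposite vertices'' appears; and no two components of $A_3$-type are isomorphic. Under these exclusions, every non-trivial length-preserving automorphism of $K$ decomposes into commuting involutions drawn from $A_1\cup A_2\cup A_3$ acting on pairwise disjoint supports, yielding Part~\textbf{2}; and since each surviving generator has support of constant size and the total number of generators is $O_P(1)$, Part~\textbf{1} follows. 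The inputs for these moment estimates come from the standard description of the kernel of $G(n,p)$ in each regime (cf.~\cite{Janson_book,LPW,Luczak-auto}), together with the fact that, conditional on $K$, the lengths $(\ell_e)_{e\in E(K)}$ are close to independent geometric random variables, so that nontrivial length coincidences are rare.

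For Part~\textbf{3}, I would carry out first- and second-moment calculations for each substructure in each regime. In the strictly supercritical case $pn\geq 1+\delta$, $|E(K)| = \Theta(n)$ and the expected numbers of loops and of parallel-edge pairs with coincident lengths are both $\Theta(1)$ with bounded variance, while a two-vertex three-edge component would have to be disjoint from the giant, whose kernel is much larger, giving $A_3=\varnothing$ whp. In the barely supercritical case, $|E(K)|\to\infty$ but the typical length $\ell_e$ grows fast enough that the coincidences required for $A_2$ or $A_3$ contribute $o(1)$ in expectation, while loops responsible for $A_1$ still contribute $\Theta(1)$. In the critical case $|E(K)|=O_P(1)$, an analysis of the limiting distribution of the kernel (cf.~\cite{LPW}) yields non-negligible probability of a single-loop component and of a two-vertex three-distinct-lengths component, whereas $|A_2|$ requires a coincidence among a bounded number of geometric lengths of mean $\Theta(n^{1/3})$, hence vanishes. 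The principal obstacle is the exclusion step of Parts~\textbf{1} and~\textbf{2} in the strictly supercritical regime, where $K$ is a large random multigraph: one must combine a structural analysis of $K$ itself (ruling out small symmetric sub-multigraphs with many overlaps) with a conditional entropy argument on the lengths to eliminate coincidences beyond the $O_P(1)$ captured by $A_1\cup A_2$.
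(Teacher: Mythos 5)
Your reduction of $\aut(\coren)$ to length-preserving automorphisms of the kernel $K$ is sound, and your moment calculations for Part 3 and for the local exclusions (no two loops or parallel pairs sharing a vertex, no triple of transposable paths, etc.) are in the spirit of what the paper does in its Lemmas on $\Sigma_1,\Sigma_2,\Sigma_3$. The genuine gap is in the central claim of Parts 1--2: ``under these exclusions, every non-trivial length-preserving automorphism of $K$ decomposes into commuting involutions drawn from $A_1\cup A_2\cup A_3$.'' Your exclusion list only forbids certain \emph{local} configurations; it does nothing to rule out an automorphism of $K$ that moves branch vertices (vertices of degree at least $3$) nontrivially, e.g.\ one exchanging two large isomorphic pieces of the kernel in a way compatible with the random subdivision lengths. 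In the strictly supercritical regime $K$ has $\Theta(\delta_n^3 n)$ vertices, so this is an asymmetry statement about a large random multigraph with given degree sequence together with i.i.d.\ geometric edge lengths; it is essentially the theorem of \L uczak and of Linial--Mosheiff, and it is exactly the hard content of the result. You acknowledge it as ``the principal obstacle'' and gesture at ``a structural analysis of $K$'' plus ``a conditional entropy argument on the lengths,'' but no actual argument is given, so the proof of Parts 1 and 2 is incomplete precisely where the difficulty lies.

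For comparison, the paper does not attack the kernel's automorphisms directly. It first proves a Main Lemma about color refinement: run on $\coren$ (supercritical case) or on $\compn$ (critical case), CR assigns distinct colors to every pair of non-interchangeable core vertices; the proof uses the contiguous models of Ding--Lubetzky--Peres and Ding--Kim--Lubetzky--Peres, exponential growth of balls in the kernel, the non-atomicity of the geometric subdivision lengths, and switching arguments, plus a separate treatment of the critical window (Lemma on 1-complex ``theta'' components giving $A_3$). Since automorphisms preserve CR colors, this immediately confines $\aut(\coren)$ to permutations of interchangeable pairs (and the $A_3$ swaps in the critical case), after which the counting lemmas ($|\Sigma|=O_P(1)$, no shared vertices, regime-by-regime existence/non-existence) finish Parts 1--3. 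If you want to keep your kernel-centric route, you would need to supply a full asymmetry argument for the length-decorated kernel in the supercritical regimes; otherwise the CR-based route is what closes the gap. A minor additional point: in regime (b) the absence of $A_3$ follows from uniqueness of the complex component (a theta graph has excess 1, while the giant's kernel is large), not from a length-coincidence estimate as you suggest.
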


This theorem makes a final step in the study of the automorphisms group of a random graph.
Recall that $\compn$ is whp empty when $np=1-\omega(n^{-1/3})$ and that $G(n,p)$ is connected and asymmetric
when $np=\ln n+\omega(1)$. We, therefore, focus on the intermediate diapason.
If $np\to\infty$ as $n\to\infty$, then the core of the giant component of $G(n,p)$ is whp still asymmetric,
as proved independently by \L uczak~\cite{Luczak-symmetries} and Linial and Mosheiff~\cite{LM}.
Moreover, \L uczak described the automorphisms group of the core of the giant component of $G(n,p)$
when $np>\gamma$ for a large enough constant $\gamma$, and obtained an analogue of Theorem~\ref{th:symmetries} for this case;
see~\cite[Theorem 4]{Luczak-symmetries}. Our Theorem~\ref{th:symmetries} not only extends \cite[Theorem 4]{Luczak-symmetries}
to the full range of $p=O(1/n)$ but also refines this result even for $np>\gamma$
by showing that $\aut(\coren)$ is actually an elementary 2-group. Another interesting observation is that some automorphisms of the core do not extend to automorphisms of the entire $G(n,p)$. Indeed, if $np=1+o(1)$, then whp $\aut(G(n,p))$ acts trivially on the core; see Remark~\ref{rk:action}.

\begin{remark}
Linial and Mosheiff~\cite[Section 6]{LM} observed that the core of the giant component of $G(n,p)$
in the strictly supercritical regime 
 has non-trivial automorphisms with a non-vanishing probability. They also expressed a suspicion that the unique 2-connected
component of the giant of size $\Theta(n)$, whose existence is known due to Pittel~\cite{Pittel}, is asymmetric whp.
However, in the strictly supercritical regime, the automorphisms described in Theorem~\ref{th:symmetries} force even
the large 2-connected component
 to have non-trivial automorphisms with a non-vanishing probability, ruling this possibility out.
It is worth noting that such automorphisms whp disappear when $np=1+o(1)$.
\end{remark}

\paragraph{Related work.}
As we already mentioned, Theorem \ref{thm:CanLab} combined with the previous results on canonical labeling of $G(n,p)$ for $1/n\ll p\le 1/2$ implies the existence of a polynomial-time
canonical labeling algorithm succeeding on $G(n,p)$ whp for an \emph{arbitrary} edge probability
function $p=p(n)$.
 In this form, this result has been independently obtained by
Michael Anastos, Matthew Kwan, and Benjamin Moore~\cite{AnastosKM24}.
Another result in their paper describes the action of $\aut(G(n,p))$ on the core of $G(n,p)$,
which follows also from our Theorem~\ref{thm:ColRef} and the results of \L uczak~\cite{Luczak-symmetries} and Linial and Mosheiff~\cite{LM}.
The techniques used in \cite{AnastosKM24} and in our paper are completely different, though both proofs rely on color refinement.
The two approaches have their own advantages. The method developed in \cite{AnastosKM24}
is used there also to show that color refinement is helpful for canonical labeling of the random graph when $p\gg 1/n$ and to study the smoothed complexity of graph isomorphism.
Our method allows obtaining precise results on the automorphism group of the core
(Theorem~\ref{th:symmetries}).

Immerman and Lander \cite{Immerman1990} showed a tight connection between CR-identifiability
and definability of a graph in first-order logic with counting quantifiers. Corollary \ref{cor:ident}
can, therefore, be recast in logical terms as follows. If $p=O(1/n)$, then $\compn$ is whp definable
in this logic with using only two first-order variables (where the definability of a graph $H$ means the existence of
a formula which is true on $H$ and false on any graph non-isomorphic to $H$). Definability of
the giant component of $G(n,p)$ in the standard first-order logic (without counting quantifiers)
was studied by Bohman et al.~\cite{BohmanFLPSSV07}.
A closer analysis of the argument in \cite{BohmanFLPSSV07} reveals that it can be used
to show that the 2-dimensional Weisfeiler-Leman algorithm (a more powerful version of
CR corresponding to the counting logic with 3 variables) individualizes the vertices of
degree at least 3 in the core of the giant component (in the strictly supercritical
regime). Our Theorem \ref{thm:ColRef} yields a much stronger version of this fact.

\paragraph{Structure of the paper and proof strategy.}
In Section \ref{sc:CR}, we derive Theorem \ref{thm:CanLab} (as well as Corollary \ref{cor:ident})
from Theorem \ref{thm:ColRef}, while the proofs of Theorems \ref{thm:ColRef} and \ref{th:symmetries}
are postponed to Section \ref{sc:proofs_th_Intro}.
Section \ref{sc:CR} begins with formal description of the color refinement algorithm
in Subsection \ref{ss:CR-descr} and then, in Subsection \ref{sc:universal_cover}, presents
a useful criterion of CR-distinguishability in terms of universal covers.
The concept of a universal cover appeared in algebraic and topological graph theory~\cite{Biggs,CDS,Massey},
and its tight connection to CR was observed in~\cite{Angluin}. Subsection \ref{sc:CR_unicyclic}
pays special attention to the CR-identifiability of unicyclic graphs, which in Subsection \ref{ss:criterium}
allows us to obtain an explicit criterion of CR-identifiability for general graphs in terms of the
complex and the simple part of a graph.

The main tool for proving Theorems \ref{thm:ColRef} and \ref{th:symmetries}
is Lemma~\ref{th:giant_main} (our Main Lemma), whose proof occupies Section \ref{cs:rg_amenability}.
This lemma says that CR is unable to distinguish between two vertices in the core
only if they lie either on pendant paths (with the same endvertices) transposable by an automorphism of the graph
or on a pendant cycle admitting a reflection by an automorphism.
The proof exploits useful properties of the CR-coloring of the core collected in Subsection \ref{ss:CR-cores},
which are proved there with actively using the relationship between CR and universal covers.

The proof of Main Lemma also heavily relies on the notion of a kernel.
The {\it kernel} $K(G)$ of a graph $G$ is a multigraph obtained from $\core(G)$ by contracting all pendant paths.
That is, $K(G)$ is obtained via the following iterative procedure: at every step if there exists a vertex $u$
with only two neighbors $v_1,v_2$, we remove $u$ with both incident edges and add the edge $\{v_1,v_2\}$ instead. Note that this transformation can lead to appearance of multiple edges and loops.
 
To prove that CR colors vertices of the core in the described manner, we use several standard techniques
for analyzing the distributions of the core and the kernel, in particular, the contiguous models due to
Ding, Lubetzky, and Peres~\cite{DLP_anatomy} in strictly supercritical regime and due to
Ding, Kim, Lubetzky, and Peres in critical regime~\cite{DKLP:anatomy}. We recall definitions of the models
in Section~\ref{sc:complex_structure}. They allow to transfer properties of random multigraphs with given
degree sequences to the kernel of the giant component in the random graph. Another important ingredient in
our proofs is the fact that in the kernel of the supercritical random graph there are whp no small complex subgraphs.
In Section~\ref{sc:complex_structure}, we derive it from the contiguous models using properties of random
multigraphs with specific degree sequences (cf., e.g.,~\cite[Corollary 5]{Gao}).

In order to prove that CR distinguishes between all vertices in the kernel (or most vertices in the core),
we consider separately two types of vertices: first, we prove that CR colors differently all vertices such
that their large enough neighborhoods induce trees. This is done in Sections \ref{sc:proof2_large} and
\ref{sc:proof2_small} for $p=1+\omega(n^{-1/3})$ and $p=1+O(n^{-1/3})$ respectively. Then, in Section~\ref{sc:proof3},
we prove that these vertices are helpful to distinguish between all the remaining vertices. 

In Section~\ref{sc:main_critical}, we extend Main Lemma to the critical regime and describe
the outcome of CR on the core of the complex part. Its behavior is different from
supercritical regime due to complex components with only two cycles that appear in the random graph with
non-vanishing probability (cf.\ Theorem~\ref{th:symmetries} Part 3(c)).
The corresponding statement is Lemma~\ref{lm:critical_CR_core}.

Theorems~\ref{thm:ColRef}~and~\ref{th:symmetries} are derived from Main Lemma and
Lemma~\ref{lm:critical_CR_core}, as already said, in Section~\ref{sc:proofs_th_Intro}.

\section{Color refinement: From identifiability to canonical labeling}
\label{sc:CR}

\subsection{Description of the CR algorithm}\label{ss:CR-descr}

We now give a formal description of the \emph{color refinement} algorithm (\emph{CR} for short). CR operates on vertex-colored graphs but applies also to uncolored graphs by assuming that their vertices are colored uniformly. An input to the algorithm consists either of a single graph or a pair of graphs. Consider the former case first. For an input graph $G$ with initial coloring $C_0$, CR iteratively computes new colorings
\begin{equation}
  \label{eq:refinement}
C_{i}(x)=\of{C_{i-1}(x),\Mset{C_{i-1}(y)}_{y\in N(x)}},  
\end{equation}
where $\Mset{}$ denotes a multiset and $N(x)$ is the neighborhood of a vertex $x$.
Denote the partition of $V(G)$ into the color classes of $C_i$ by $\cP_i$.
Note that each subsequent partition $\cP_{i+1}$ is either finer than or equal to $\cP_i$.
If $\cP_{i+1}=\cP_i$, then $\cP_{j}=\cP_i$ for all $j\ge i$.
Suppose that the color partition stabilizes in the $t$-th round,
that is, $t$ is the minimum number such that $\cP_t=\cP_{t-1}$.
CR terminates at this point and outputs the coloring $C=C_t$.
Note that if the colors are computed exactly as defined by \refeq{refinement},
they will require exponentially long color names. To prevent this,
the algorithm renames the colors after each refinement step, using the same set
of no more than $n$ color names.

If an input consists of two graphs $G$ and $H$, then it is convenient to
assume that their vertex sets $V(G)$ and $V(H)$ are disjoint.
The vertex colorings of $G$ and $H$ define an initial coloring $C_0$ of
the union $V(G)\cup V(H)$, which is iteratively refined according to \refeq{refinement}.
The color partition $\cP_i$ is defined exactly as above but now on the whole
set $V(G)\cup V(H)$. As soon as the color partition of $V(G)\cup V(H)$ stabilizes,
CR terminates and outputs the current coloring $C=C_t$ of $V(G)\cup V(H)$.
The color names are renamed for both graphs synchronously.

We say that CR \emph{distinguishes} $G$ and $H$ if $\Mset{C(x)}_{x\in V(G)}\ne\Mset{C(x)}_{x\in V(H)}$. If CR fails to distinguish $G$ and $H$, then we call these graphs \emph{CR-equivalent} and write $G\creq H$. A graph $G$ is called \emph{CR-identifiable} if $G\creq H$ always implies $G\cong H$.

\subsection{Covering maps and universal covers}
\label{sc:universal_cover}

A surjective homomorphism from a graph $K$ onto a graph
$G$ is a \emph{covering map} if its restriction to the neighborhood of
each vertex in $K$ is bijective. 
We suppose that $G$ is a finite graph, while $K$ can be an infinite graph.
If there is a covering map from $K$ to $G$ (in other terms, $K$ \emph{covers} $G$), then $K$ is called a
\emph{covering graph} of $G$.  Let $G$ be connected. We say that a graph $U$ is a \emph{universal cover} of a graph
$G$ if $U$ covers every connected covering graph of $G$. A universal cover
$U=U^G$ of $G$ is unique up to isomorphism. Alternatively, $U^G$ can
be defined as a tree that covers $G$.  If $G$ is itself a tree, then
$U^G\cong G$; otherwise the tree $U^G$ is infinite.

A straightforward inductive argument shows that a covering map $\alpha$
preserves the coloring produced by CR, that is, $C_i(u)=C_i(\alpha(u))$
for all $i$, where $C_i$ is defined by \refeq{refinement}. 
It follows that, if two connected graphs $G$ and $H$ have a common universal cover,
i.e., $U^G\cong U^H$, then
$\setdef{C(u)}{u\in V(G)} = \setdef{C(v)}{v\in V(H)}$.
The converse implication is also true, as a consequence of the following lemma.

\begin{lemma}[cf.~Lemmas 2.3 and 2.4 in \cite{KrebsV15}]\label{lem:UvsC}
 Let $U^G$ and $U^H$ be universal covers of connected graphs $G$ and $H$ respectively. Furthermore, let $\alpha$ be a covering map from $U^G$ to $G$ and $\beta$ be a covering map from $U^H$ to $H$. For a vertex $x$ of $U^G$ and a vertex $y$ of $U^H$, let
$U^G_x$ and $U^H_y$ be the rooted versions of $U^G$ and $U^H$ with roots at $x$ and $y$ respectively.
Then $U^G_x\cong U^H_y$ (isomorphism of rooted trees) if and only if $C(\alpha(x))=C(\beta(y))$.
\end{lemma}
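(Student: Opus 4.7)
The plan is to exploit the standard duality between the stable CR coloring and the rooted isomorphism type of a vertex in the universal cover. Since covering maps preserve every $C_i$ (as recalled just before the lemma), we have $C(\alpha(x))=C(x)$ and $C(\beta(y))=C(y)$ when CR is run on $U^G$ and $U^H$ themselves; hence it is enough to prove the equivalence $U^G_x\cong U^H_y \iff C(x)=C(y)$ directly on the universal covers. Note that although $U^G,U^H$ are infinite, the color partition on $V(G)\cup V(H)$ stabilizes at some finite round $t$, and again by covering-map invariance the partition on $V(U^G)\cup V(U^H)$ is refined by pulling back $C$, so the stable coloring $C$ on the covers is well defined and inherits the stability property.

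For the forward direction, I would argue by a one-line induction. A rooted isomorphism $\phi:U^G_x\to U^H_y$ is in particular a graph isomorphism, and the refinement rule \refeq{refinement} is defined purely in terms of the multiset of colors in the neighborhood. Hence $C_i(u)=C_i(\phi(u))$ for every $u$ and every $i$, and in particular $C_i(x)=C_i(y)$, so $C(x)=C(y)$.

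The substantive direction is the converse. Given $C(x)=C(y)$, I would build a rooted isomorphism $U^G_x\to U^H_y$ by a level-by-level back-and-forth. The key property used is the \emph{stability} of $C$: whenever $C(u)=C(v)$ there exists a color-preserving bijection $N(u)\to N(v)$. Start by mapping $x\mapsto y$ and extend outwards: assuming a color-preserving isomorphism $\phi_i$ has been constructed between the $i$-balls $B_i(x)\subseteq U^G_x$ and $B_i(y)\subseteq U^H_y$, extend to $B_{i+1}$ as follows. For each already matched pair $(u,v)=\bigl(u,\phi_i(u)\bigr)$ at distance exactly $i$ from the root, the unique parents $u',v'$ of $u,v$ lie in $B_i$ and satisfy $\phi_i(u')=v'$; by stability applied to $(u,v)$ there is a color-preserving bijection $N(u)\to N(v)$ that maps $u'$ to $v'$ (one selects such a bijection by peeling off the parent from each neighborhood multiset and matching the remaining children by color), and this bijection prescribes how to extend $\phi_i$ to the children of $u$. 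Doing this independently for each pair $(u,v)$ produces $\phi_{i+1}$, well defined because $U^G,U^H$ are trees so the children of distinct vertices do not interact. The union $\phi=\bigcup_i\phi_i$ is the desired rooted isomorphism.

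The main obstacle is the extension step: one must guarantee that the bijection supplied by stability on $N(u)$ versus $N(v)$ can be chosen to send the already matched parent $u'$ to $v'$, so that the new layer is consistent with the old. This works because the multisets $\{\!\{C(w):w\in N(u)\}\!\}$ and $\{\!\{C(w):w\in N(v)\}\!\}$ are equal and both contain $C(u')=C(v')$, so one first pairs $u'$ with $v'$ and then matches the remaining ``forward'' neighbors by an arbitrary color-preserving bijection on the residual multisets. Once this is formalized, together with the remark that the stable coloring on the infinite trees is obtained by pulling back $C$ from $G\sqcup H$ along $\alpha\sqcup\beta$ and therefore inherits stability from the finite setting, the lemma follows.
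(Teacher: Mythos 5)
Your argument is correct. Note that the paper itself does not prove this lemma; it is quoted from the literature (Lemmas 2.3 and 2.4 of the cited work of Krebs and Verbitsky), and your proof is essentially the standard argument behind that citation: covering-map invariance of the iterated colorings reduces everything to the pulled-back stable coloring on the two trees, the forward direction is isomorphism-invariance of each $C_i$, and the converse is the level-by-level color-preserving matching, where the only point needing care --- that the bijection between neighborhoods can be chosen to send parent to parent, because the two neighborhood color multisets are equal and each contains the common parent color --- is exactly the point you handle. Your remark that the stable coloring on the infinite covers is well defined as the pullback of $C$ from $V(G)\cup V(H)$ (and inherits stability since $\alpha$ and $\beta$ are local bijections) correctly disposes of the only genuine subtlety about running CR on infinite graphs.
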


The union of CR-identifiable graphs does not need be CR-identifiable.
However, the concept of a universal cover allows us to
state the following criterion, which is an extension of \cite[Thm.~5.4]{ArvindKRV17}
(see \cite[p.~649]{ArvindKRV17} for details).

\begin{lemma}\label{lem:union-amenable}
  Let $G_1,\ldots,G_k$ be connected CR-identifiable graphs and $G$ be
  their vertex-disjoint union. Then $G$ is CR-identifiable if and only if,
  for every pair of distinct $i$ and $j$ such that neither $G_i$ nor $G_j$ is a tree, the universal covers of $G_i$ and $G_j$ are non-isomorphic.
\end{lemma}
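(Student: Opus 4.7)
The plan is to treat the two implications separately.

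For $(\Leftarrow)$, assume that every $G_l$ is CR-identifiable and that no two non-tree components share a universal cover. Given $H\creq G$, I want $H\cong G$. The key tool is Lemma~\ref{lem:UvsC}: two vertices of possibly different connected graphs carry the same CR colour iff their rooted universal covers are isomorphic. It follows that the CR colour sets of components with non-isomorphic universal covers are disjoint, and that a tree coincides with its own universal cover. I therefore partition $V(G)$ and $V(H)$ by the isomorphism type $U$ of the universal cover of the containing component; the colour multisets match block by block. For a non-tree~$U$, the hypothesis provides a unique component $G_i$ of $G$ in that block, so the $H$-block is a graph CR-equivalent to $G_i$, which must be isomorphic to $G_i$ by CR-identifiability; for a tree $U$, both blocks consist of copies of the finite tree $U$, and the matching vertex counts force equal multiplicities.

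For $(\Rightarrow)$, I argue by contraposition: suppose distinct indices $i,j$ satisfy $G_i,G_j$ non-tree and $U^{G_i}\cong U^{G_j}$, and I produce $H\creq G$ with $H\not\cong G$. If $G_i\cong G_j=:H_0$, the non-triviality of $H^1(H_0;\mathbb{Z}/2)$ (which holds since $H_0$ has a cycle) yields a non-trivial, and hence connected, double cover $K$ on $2|V(H_0)|$ vertices; Lemma~\ref{lem:UvsC} gives $K\creq H_0\sqcup H_0$, so replacing the two copies of $H_0$ in $G$ by $K$ produces $H$ with $H\creq G$ and strictly fewer connected components, hence $H\not\cong G$.

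If instead $G_i\not\cong G_j$, I first record a ``cover-minimality'' consequence of CR-identifiability: a connected CR-identifiable graph cannot properly cover any other graph, because preservation of CR colours under covers would otherwise give $G_i\creq\delta F$ for the cover target $F$ and degree $\delta\ge 2$, contradicting CR-identifiability of $G_i$. In particular neither of $G_i,G_j$ covers the other, so Leighton's theorem produces a common finite cover $K$ with degrees $d_1,d_2\ge 2$, and the ensuing relation $d_1 m_i^c=d_2 m_j^c$ between colour multiplicities also forces $|V(G_i)|\ne|V(G_j)|$ (else $G_i\creq G_j$, whence $G_i\cong G_j$). I then construct a connected graph $F$ with $|V(F)|=|V(G_i)|+|V(G_j)|$ and universal cover isomorphic to $U^{G_i}$, and take $H$ to be $F$ together with the components of $G$ other than $G_i$ and $G_j$; $H\creq G$ by Lemma~\ref{lem:UvsC}, and $H\not\cong G$ since $F$ is connected while $G_i\sqcup G_j$ is not. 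The main obstacle is the existence of this $F$: this amounts to showing that $|V(G_i)|+|V(G_j)|$ lies in the set of orders of connected graphs with universal cover $U^{G_i}$, which I would verify by an amalgamation argument that glues $G_i$ and $G_j$ along locally isomorphic edges so as to preserve the degree sequence and the local neighbourhood structure governing the universal cover.
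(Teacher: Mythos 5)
Your ``$\Leftarrow$'' direction and the case $G_i\cong G_j$ of the ``$\Rightarrow$'' direction are essentially correct: the block decomposition of $V(G)$ and $V(H)$ by the universal-cover type of the containing component, and the connected double cover of a non-tree component (which exists because of a cycle and is CR-equivalent to two copies of it by color preservation under covering maps), both do the job. Note that the paper itself gives no proof of this lemma but refers to \cite[Thm.~5.4, p.~649]{ArvindKRV17}, so you are judged on your own argument.

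The genuine gap is in the case $G_i\not\cong G_j$ with $U^{G_i}\cong U^{G_j}$. Everything there rests on producing a connected graph $F$ with $F\creq G_i\sqcup G_j$, and this is exactly the step you do not carry out: the ``amalgamation argument'' is only announced. Moreover, the justification you do offer is insufficient in two respects. First, knowing only that $F$ is connected, has $|V(G_i)|+|V(G_j)|$ vertices and satisfies $U^F\cong U^{G_i}$ does not give $F\creq G_i\sqcup G_j$ ``by Lemma~\ref{lem:UvsC}'': that lemma yields equality of the \emph{sets} of stable colors of graphs with a common universal cover, not of the multisets; to get multiplicities you need the extra fact that, for a connected graph, the \emph{proportions} of the color classes are determined by the universal cover (same degree-refinement matrix plus edge counting between classes and connectivity of the color quotient), which you neither state nor prove. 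Second, the natural amalgamation --- take an edge $\{u,v\}$ in $G_i$ and an edge $\{u',v'\}$ in $G_j$ with $C(u)=C(u')$, $C(v)=C(v')$, delete them and add $\{u,v'\},\{u',v\}$ --- can produce a \emph{disconnected} graph when both chosen edges are bridges, so your punchline ``$F$ is connected while $G_i\sqcup G_j$ is not'' needs an argument; the standard fix is to choose the $G_i$-edge on a cycle (possible since $G_i$ is not a tree, and a color-matched partner edge exists in $G_j$ because the degree-refinement matrices agree), after which the switch keeps an equitable partition with the same parameters, giving $F\creq G_i\sqcup G_j$ and a graph $H$ with $k-1$ components. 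Incidentally, the Leighton common-cover detour, the cover-minimality observation and the deduction $|V(G_i)|\neq|V(G_j)|$ are never used in your construction and can be dropped.
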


\subsection{Unicyclic graphs}
\label{sc:CR_unicyclic}

\subsubsection{Universal covers of unicyclic graphs}\label{ss:UC-unicyclic}

For a unicyclic graph $G$, its $\core(G)$ is the set of vertices
lying on the unique cycle of $G$. We use the notation $c(G)=|\core(G)|$ for the length of this cycle.
For a vertex $x$ in $\core(G)$, let $G_x$ denote the subgraph of $G$ induced
by the vertices reachable from $x$ along a path avoiding the other vertices in $\core(G)$.
This is obviously a tree. Moreover, we define $G_x$ as a rooted tree with root at $x$.
Let $t(x)$ denote the isomorphism class of the rooted tree $G_x$.
We treat $t$ as a coloring of $\core(G)$ and write $R(G)$ to denote the cycle of $G$
endowed with this coloring. Thus, $R(G)$ is defined as a vertex-colored cycle graph.
It will also be useful to see $R(G)$ as a \emph{circular word} over the alphabet
$\Set{t(x)}{x\in \core(G)}$; see, e.g., \cite{HegedusN16} and the references therein 
for more details on this concept in combinatorics on words.
In fact, $R(G)$ is associated with two circular words, depending on one of the two directions
in which we go along $R(G)$. However, the choice of one of the two words is immaterial
in what follows.

Speaking about a \emph{word}, we mean a standard, non-circular word.
Two words are conjugated if they are obtainable from
one another by cyclic shifts. A circular word is formally defined as
the conjugacy class of a word. A word $u$ is a period of a word $v$
if $v=u^k$ for some $k\ge1$. A word $u$ is a period of a circular word $w$
if $u$ is a period of some representative in the conjugacy class of $w$.
Note that if $u$ is a period of a word $v$, then any conjugate of $u$
is a period of some conjugate of $v$. This allows us to consider
periods of circular words themselves being circular words.
We define the \emph{periodicity} $p(w)$ of a circular word $w$
to be the minimum length of a period of $w$. It may be useful to keep in mind
that a period of length $p(w)$ is also a period of every period of $w$;
cf.~\cite[Proposition 1]{HegedusN16} (note that our terminology is different from~\cite{HegedusN16}).

For a unicyclic graph $G$, we define its periodicity by $p(G)=p(R(G))$,
where $R(G)$ is seen as a circular word as explained above. Note that
$p(G)$ is a divisor of $c(G)$ and that
$1 \le |\Set{t(x)}_{x\in \core(G)}| \le p(G) \le c(G)$.

Like trees, unicyclic graphs are also characterizable in terms of universal covers.

\begin{claim}\label{lem:unique-path}
  A connected graph $G$ is unicyclic if and only if $U^G$ has a unique infinite path.
\end{claim}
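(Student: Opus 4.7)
Throughout, I read \emph{infinite path} in the tree $U^G$ as a bi-infinite simple path, i.e., a subgraph isomorphic to the integer line $\mathbb{Z}$; this is the only reading consistent with the intended statement since, e.g., $U^{C_k} \cong \mathbb{Z}$ has a unique such path but infinitely many one-way rays. Both directions are handled by a direct structural analysis of $U^G$.

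For the forward direction, suppose $G$ is unicyclic with unique cycle $C$. Then $G = C \cup \bigcup_{v \in V(C)} T_v$, where each $T_v$ is a finite rooted tree hanging at $v$. Constructing $U^G$ via non-backtracking walks from a base point on $C$, and using that each finite tree $T_v$ is its own universal cover, one sees that $U^G$ consists of a spine $S \cong \mathbb{Z}$ (the universal cover of $C$ itself) with a copy of $T_{\pi(\tilde{v})}$ attached at each spine vertex $\tilde{v}$, where $\pi \colon U^G \to G$ denotes the covering projection. If $P$ is any bi-infinite simple path in $U^G$ using some edge $e$ outside $S$, then $e$ lies in an attached finite tree and $U^G - e$ splits into two components one of which is finite; but then one ray of $P$ would be forced into that finite component, a contradiction. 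Hence $P \subseteq S$, and since $S$ itself is a bi-infinite simple path, $P = S$, giving uniqueness.

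For the converse, I argue by contrapositive. If $G$ is a tree, then $U^G \cong G$ is finite and contains no infinite path. Otherwise $G$ has cyclomatic number at least $2$, so $\core(G)$ has minimum degree at least $2$ and cyclomatic number at least $2$. Case (i): $\core(G)$ contains a vertex $v$ of degree $\geq 3$. Fix any lift $\tilde{v} \in U^G$; since $\core(G)$ has minimum degree $2$, from each of the $\geq 3$ edges at $v$ lying in $\core(G)$ one can extend a non-backtracking walk in $\core(G)$ indefinitely, and lifting these walks to $U^G$ yields $\geq 3$ internally disjoint infinite rays from $\tilde{v}$. Concatenating any two of them gives a bi-infinite simple path through $\tilde{v}$, so $U^G$ contains at least $\binom{3}{2}=3$ distinct infinite paths. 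Case (ii): $\core(G)$ is $2$-regular and hence a disjoint union of $\geq 2$ cycles $C_1, C_2, \ldots$. For each $i$, the preimage $\pi^{-1}(C_i) \subseteq U^G$ is a $2$-regular subgraph of the tree $U^G$, hence a disjoint union of bi-infinite paths (at least one per lifted cycle); since preimages of distinct $C_i$'s are vertex-disjoint, $U^G$ contains $\geq 2$ infinite paths. In either case, uniqueness fails.

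The main technical step is the explicit ``spine plus finite trees'' description of $U^G$ in the forward direction; beyond that, the argument rests only on two elementary facts about trees, namely that a finite subtree contains no one-way infinite ray, and that a $2$-regular subgraph of a tree is a disjoint union of bi-infinite paths.
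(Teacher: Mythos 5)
Your proof is correct. Note that the paper does not actually prove this claim: it is stated as evident, and the paragraph following it simply asserts the structural picture that the cycle of $G$ unfolds into the infinite path $P(U^G)$ with a copy of the rooted tree $G_{\alpha(x)}$ planted at each spine vertex $x$. Your forward direction reproduces exactly this ``spine plus finite trees'' description and then adds the (easy but worth writing) uniqueness argument via the edge-deletion/finite-component observation; your contrapositive for the converse supplies the half that the paper leaves entirely implicit, and the non-backtracking-walk lifting argument there is sound, since a non-backtracking walk lifts to a non-backtracking walk in the tree $U^G$, which is necessarily a simple ray, and two such rays leaving a common vertex along distinct edges can only meet again by creating a cycle in $U^G$. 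Your reading of ``infinite path'' as a bi-infinite path is also the intended one, as the paper's subsequent use of $P(U^G)$ confirms. One small simplification you could make: for connected $G$ that is neither a tree nor unicyclic, the $2$-core is itself connected (pruning leaves preserves connectivity), has minimum degree $2$, and has cycle rank at least $2$, hence cannot be $2$-regular; so your Case~(ii) is vacuous and Case~(i) alone covers the converse. Including it does no harm, but it can be dropped.
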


The unique infinite path subgraph of $U^G$ will be denoted by $P(U_G)$.
The structure of $U^G$ is clear: The cycle of $G$ is unfolded into the infinite path $P(U^G)$.
Moreover, let $\alpha$ be a covering map from $U^G$ to $G$. Then $U^G$ is obtained
by planting a copy of the rooted tree $G_{\alpha(x)}$ at each vertex $x$ on $P(U^G)$.
The path $P(U^G)$ will be considered being a vertex colored graph, with each vertex $x$
colored by $t(\alpha(x))$.

The following observation is quite useful in what follows.
Let $\alpha$ be a covering map from $U^G$ to $G$. The restriction of $\alpha$ to $P(U^G)$
is a covering map from the vertex-colored path $P(U^G)$ to the vertex-colored cycle $R(G)$.
Note that a covering map must preserve vertex colors. 

\begin{lemma}\label{cor:RGRH}
  Let $G$ and $H$ be connected unicyclic graphs. Then $U^G\cong U^H$ if and only if
  the circular words $R(G)$ and $R(H)$ have a common period. Moreover, if $U^G\cong U^H$,
  then $p(G)=p(H)$.
\end{lemma}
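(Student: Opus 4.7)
The plan is to translate the statement about universal covers into a statement about bi-infinite colored paths and then reduce the biconditional to elementary combinatorics on words. By Claim~\ref{lem:unique-path} and the remarks following it, $U^G$ is obtained by planting a copy of the rooted tree represented by the color $t(\alpha(x))$ at each vertex $x$ of the unique bi-infinite vertex-colored path $P(U^G)$. Since each such color is already an isomorphism class of a rooted tree, the entire universal cover $U^G$ is determined up to isomorphism by the colored path $P(U^G)$. Consequently, $U^G\cong U^H$ if and only if the colored paths $P(U^G)$ and $P(U^H)$ are isomorphic as vertex-colored graphs, that is, if and only if their bi-infinite color sequences agree up to a shift (and possibly a reversal, corresponding to the two orientations of the cycle).

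Next, I will invoke the observation, highlighted just before the lemma, that the restriction of $\alpha$ to $P(U^G)$ is a covering map from the colored path onto the colored cycle $R(G)$. This forces the color sequence along $P(U^G)$ to be the bi-infinite concatenation $\ldots vvv\ldots$ for any representative word $v$ (in either of the two directions) of the circular word $R(G)$; analogously, the color sequence along $P(U^H)$ is $\ldots www\ldots$ for some representative $w$ of $R(H)$. Thus $U^G\cong U^H$ holds exactly when, after choosing appropriate directions on the two cycles, the bi-infinite sequences $\ldots vvv\ldots$ and $\ldots www\ldots$ coincide up to shift.

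For the forward direction I then argue as follows: if the bi-infinite sequences coincide, they are periodic with a common minimal finite period word $u$, and both $v$ and $w$ are powers of suitable conjugates of $u$, so $u$ is simultaneously a period of $R(G)$ and of $R(H)$. The length of this minimal period equals both $p(G)$ and $p(H)$, which yields the ``Moreover'' statement. For the converse, if $R(G)$ and $R(H)$ share a common period $u$, then by the definition of a period of a circular word there exist representatives $v=u^{k_1}$ and $w=u^{k_2}$, so both bi-infinite color sequences are equal to $\ldots uuu\ldots$ and therefore agree up to shift; the reduction of the first paragraph then gives $U^G\cong U^H$.

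The main delicate point I anticipate is the bookkeeping around the two available orientations of each cycle and the conjugation ambiguity inherent in the definition of a period of a circular word: one must be careful that ``common period'' is interpreted in the sense that some representatives of $R(G)$ and $R(H)$ are powers of a common word, and that the direction of each cycle can be chosen independently. Once this is pinned down, the argument reduces to the standard fact that two periodic bi-infinite sequences coincide up to shift if and only if their periods admit a common finite period word, from which the equality $p(G)=p(H)$ immediately follows as the length of the minimal such common period.
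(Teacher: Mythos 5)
Your proposal is correct and takes essentially the same route as the paper: both arguments reduce the question to the observation that the colored bi-infinite path $P(U^G)$ is the periodic word generated by a representative of $R(G)$ (and likewise for $H$), and then apply the common-period/gcd fact for bi-infinite periodic sequences to extract a common period and conclude $p(G)=p(H)$. The only cosmetic difference is that the paper spells out the Bezout step ($q=\beta_1 p(G)-\beta_2 p(H)$) explicitly, whereas you invoke it as a standard fact about periodic bi-infinite words.
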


\begin{proof}
In one direction the statement is clear: if $R(G)$ and $R(H)$ have a common period, then $U^G\cong U^H$ by the definition. Let $U\cong U^G\cong U^H$ be a common universal cover of $G$ and $H$.
  We can naturally see $P(U)$ as an infinite word.
  An arbitrary subword of length $p(G)$ of $P(U)$ is a period of $P(U)$,
  and the same is true for an arbitrary subword of length $p(H)$ of $P(U)$.
  It follows 
  that $P(U)$ has a period $u=u_1\ldots u_q$ of length $q=\gcd(p(G),p(H))$. Indeed, it is sufficient to note that there exist integers $\beta_1,\beta_2$ such that $q=\beta_1p(G)-\beta_2 p(H)$. Thus, for any $u_0,u_q$ at distance $q$ in $P(U)$, we get that $u_0=u_{\beta_ 1 p(G)}=u_{q+\beta_2 p(H)}=u_q$.

  If $\alpha$ and $\beta$ are covering maps from $U$ to $G$ and $H$ respectively,
  then $\alpha(u_1)\ldots \alpha(u_q)$ is a period of $R(G)$ and
  $\beta(u_1)\ldots \beta(u_q)$ is a period of $R(H)$.
  Since $\alpha$ and $\beta$ preserve the vertex colors, we have the equality
  $\alpha(u_1)\ldots \alpha(u_q)=\beta(u_1)\ldots \beta(u_q)$.
  This also implies that, in fact, $q=p(G)=p(H)$.
\end{proof}

\subsubsection{CR-identifiability of unicyclic graphs}
\label{sc:amenability_ALL}

\begin{claim}\label{lem:GHi}
  Let $G$ be a connected unicyclic graph. Suppose that $G\creq H$ and $H$ consists of
  connected components $H_1,\ldots,H_m$. Then
  \begin{enumerate}[\bf 1.]
  \item
    $U^{H_i}\cong U^G$ for all $i$,
  \item
    every $H_i$ is unicyclic, and
  \item
    $c(G)=c(H_1)+\cdots+c(H_m)$.
  \end{enumerate}
\end{claim}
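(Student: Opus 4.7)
The plan is to extract all three statements from the equality of multisets of CR-colors on $G$ and on $H$, combined with the universal-cover characterization in Lemma~\ref{lem:UvsC} and the structural Claim~\ref{lem:unique-path}.

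For Part~1, fix any component $H_i$ and any $w\in V(H_i)$. Since $G\creq H$, there exists $v\in V(G)$ with $C(v)=C(w)$. By Lemma~\ref{lem:UvsC} (applied to the connected graphs $G$ and $H_i$, using that within each connected component the CR-colors computed on $G\sqcup H$ agree with those computed on the component alone), this forces $U^G_v\cong U^{H_i}_w$ as rooted trees and hence $U^G\cong U^{H_i}$ as unrooted trees. Part~2 is then immediate: since $G$ is connected unicyclic, Claim~\ref{lem:unique-path} gives that $U^G$ has a unique infinite path; so does $U^{H_i}$, and as $H_i$ is connected, the converse direction of Claim~\ref{lem:unique-path} makes $H_i$ unicyclic.

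For Part~3, I would argue that ``lying on the cycle'' is an intrinsic property of the rooted universal cover, and therefore of the CR-color. Let $\alpha\function{U^G}{G}$ be a covering map. Its restriction to the unique infinite path $P(U^G)$ is a covering map onto $\core(G)$, so $v\in\core(G)$ iff a preimage of $v$ lies on $P(U^G)$. By the uniqueness of the infinite path in $U^G$, a vertex $x\in V(U^G)$ lies on $P(U^G)$ exactly when at least two of the connected components of $U^G-x$ contain an infinite ray---a condition depending only on the isomorphism type of the rooted tree $U^G_x$. Hence core membership is determined by the CR-color, and by Part~2 the same characterization applies inside each $H_i$. Counting the number of core vertices within each color class and using $G\creq H$ yields
\[
c(G)=|\core(G)|=|\core(H)|=\sum_{i=1}^m|\core(H_i)|=\sum_{i=1}^m c(H_i).
\]

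The only substantive step is the verification in Part~3 that core membership is intrinsic to the rooted universal cover. The key input is the \emph{uniqueness} of the infinite path in $U^G$ (Claim~\ref{lem:unique-path}), which converts the global property ``the root lies on an infinite path in $U^G$'' into the local-combinatorial property ``the root has at least two incident subtrees containing infinite rays'', thereby making it visible to CR. Once this characterization is in place, Parts~1--3 follow as above with no further work.
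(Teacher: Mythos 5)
Your Parts 1 and 2 coincide with the paper's argument: matching colors via $G\creq H$, Lemma~\ref{lem:UvsC} to get $U^{H_i}_y\cong U^G_x$, and Claim~\ref{lem:unique-path} for unicyclicity. Your Part 3 is correct but takes a genuinely different route. The paper counts \emph{all} vertices: fixing a period of $R(G)$ of length $p(G)$ and letting $T$ be the total size of the trees $G_x$ over one period, it writes $|V(G)|=\frac{c(G)}{p(G)}T$, uses Part 1 together with Lemma~\ref{cor:RGRH} to obtain the analogous formula $|V(H_i)|=\frac{c(H_i)}{p(G)}T$ for each component, and concludes from $|V(G)|=\sum_i |V(H_i)|$. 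You instead count \emph{core} vertices directly, after arguing that core membership is an invariant of the CR-color: a vertex of $U^G$ lies on the unique infinite path iff at least two components of its removal contain infinite rays (the forward direction is immediate; the converse needs the uniqueness from Claim~\ref{lem:unique-path} plus a K\H{o}nig-type step to route a ray in each such component through the removed vertex --- worth a sentence, but sound). In effect you give a self-contained proof, in the unicyclic setting, of the paper's later Claim~\ref{cl:trees_cores_distinguish}, which the paper states in Section~\ref{ss:CR-cores} and does not invoke here. What each approach buys: yours bypasses Lemma~\ref{cor:RGRH} and the periodicity bookkeeping entirely and actually yields the stronger fact $|\core(G)|=|\core(H)|$ for arbitrary CR-equivalent graphs; the paper's stays inside the circular-word machinery it needs anyway for Lemmas~\ref{thm:unicyclic} and~\ref{cor:UG=UH}, requiring only the trivial vertex-count identity rather than the ray-based characterization of the infinite path.
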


\begin{proof}
  1. Fix $i\in[m]$.  Let $U^G$ and $W=U^{H_i}$ and $\alpha^U,\alpha^W$ be the respective covering maps.
  Let $y$ be a vertex in $U^{H_i}$. Since $G$ and $H$ are CR-equivalent,  $U^G$ must contain a vertex $x$ such that $C(\alpha^U(x))=C(\alpha^W(y))$. By Lemma \ref{lem:UvsC},
  $U_x\cong W_y$. It follows that $U\cong W$.

  2. Immediately by Part 1 and Claim \ref{lem:unique-path}.

  3. Fix a period of $R(G)$ and set $T=\sum_x |V(G_x)|$ where the summation goes over all $x$ in this period (this definition obviously does not depend on the choice of the period).
  Let $p=p(G)$.  Note that $|V(G)|=\frac{c(G)}{p}\,T$. By Part 1 and Lemma \ref{cor:RGRH}, we similarly have
  $|V(H_i)|=\frac{c(H_i)}{p}\,T$. The required equality now follows from the trivial  equality $|V(G)|=|V(H_1)|+\cdots+|V(H_m)|$.
\end{proof}

\begin{lemma}\label{thm:unicyclic}
  A connected unicyclic graph $G$ is CR-identifiable if and only if one of the following conditions is true:
  \begin{itemize}
  \item
    $p(G)=1$ and $c(G)\in\{3,4,5\}$,
  \item
    $p(G)=2$ and $c(G)\in\{4,6\}$,
  \item
    $p(G)=c(G)$.    
  \end{itemize}
\end{lemma}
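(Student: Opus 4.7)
The plan is to combine Claim~\ref{lem:GHi} with Lemma~\ref{cor:RGRH} to characterize the unicyclic graphs $H$ satisfying $G \creq H$, and then to identify when the resulting combinatorial constraints force $H \cong G$. Setting $p := p(G)$ and $c := c(G)$, Claim~\ref{lem:GHi} tells us that every component $H_i$ of such an $H$ is unicyclic with $U^{H_i} \cong U^G$ and $\sum_i c(H_i) = c$, while Lemma~\ref{cor:RGRH} gives $p(H_i) = p$, so in particular each $c(H_i)$ is a multiple of $p$ (and of course $c(H_i) \geq 3$).

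For the $(\Leftarrow)$ direction, in each listed case a short arithmetic argument forces $m = 1$ and $c(H_1) = c$: if $p = c$ then every $c(H_i) \geq p = c$; if $p = 1$ and $c \leq 5$ then $c(H_i) \geq 3$ with $\sum_i c(H_i) \leq 5$; if $p = 2$ and $c \leq 6$ then each $c(H_i)$ is both $\geq 3$ and a multiple of $2$, hence $\geq 4$, so $\sum_i c(H_i) \leq 6$ again forces $m = 1$. With $m = 1$ and $c(H_1) = c$, the common-period condition in Lemma~\ref{cor:RGRH} forces $R(H_1) = R(G)$ as circular words, so $H_1 \cong G$.

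For the $(\Rightarrow)$ direction, suppose none of the listed cases holds. Then $c/p \geq 2$ and one can write $c/p = \ell_1 + \ell_2$ with each $\ell_i p \geq 3$; the three regimes $p \geq 3$, $p = 2$ with $c \geq 8$, and $p = 1$ with $c \geq 6$ each admit such a split. Fix a length-$p$ period $u$ of $R(G)$ and let $H_i$ be the unicyclic graph whose core has length $\ell_i p$ and whose associated circular word is $u^{\ell_i}$; set $H = H_1 \sqcup H_2$. Since $G$ is connected and $H$ is not, $H \not\cong G$.

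The crux is to verify $G \creq H$. By construction $R(H_i)$ has period $u$, so $U^{H_i} \cong U^G$, and Lemma~\ref{lem:UvsC} ensures that the CR-colors on $V(G) \cup V(H)$ are drawn from a single set of isomorphism classes of rooted trees $U^G_x$. For each residue $j \pmod p$, the multiset of rooted-universal-cover classes realized over a single planted copy of the tree hanging at cycle position $j$ is the same in $G$ as in every $H_i$, by the explicit description of $U^G$ from Subsection~\ref{ss:UC-unicyclic}. Each such residue $j$ is realized $c/p$ times along the cycle of $G$ and $\ell_1 + \ell_2 = c/p$ times among the cycles of $H_1, H_2$, so the total color multisets on $V(G)$ and $V(H)$ agree, giving $G \creq H$. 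I expect the main (minor) obstacle to be precisely this color-counting step, which requires unpacking Lemma~\ref{lem:UvsC} together with the description of $U^G$ to justify the per-position identification of cover classes; once that is in place the rest is arithmetic.
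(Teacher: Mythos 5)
Your proposal is correct and follows essentially the same route as the paper: the forward implication uses Claim~\ref{lem:GHi} together with Lemma~\ref{cor:RGRH} and the arithmetic constraints on the $c(H_i)$ (multiples of $p$, each at least $3$, summing to $c(G)$) to force $H$ to be a single component with $R(H)=R(G)$, and the converse splits $R(G)$ into two shorter cycles sharing its length-$p(G)$ period and checks CR-equivalence of $G$ with $H_1\sqcup H_2$ via universal covers. Your per-residue color-multiset count just makes explicit the step the paper states more tersely, so no gap remains.
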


\begin{proof}

( $\Leftarrow$ )
Suppose that a connected unicyclic graph $G$ satisfies one of the three conditions
and show that it is CR-identifiable. Assuming that $H$ is CR-equivalent to $G$, we have
to check that $G$ and $H$ are actually isomorphic.

Assume first that $H$ is connected. If $H$ is not unicyclic, then $G$ and $H$
have equal number of vertices but different number of edges. This implies that
$G$ and $H$ have different degree sequences, contradicting the assumptions that $G\creq H$.
Therefore, $H$ must be unicyclic. By Part 3 of Claim~\ref{lem:GHi}, we have $c(G)=c(H)$.
Along with Lemma \ref{cor:RGRH}, which is applicable because $U^G\cong U^H$
whenever $G\creq H$, this implies that $R(G)\cong R(H)$. The last relation, in its turn,
implies that $G\cong H$.

Assume now that $H$ is disconnected. Let $H_1,\ldots,H_m$ be the connected components of $H$.
Combining Claim~\ref{lem:GHi} and Lemma \ref{cor:RGRH}, we see that
$p(G)=p(H_1)\le c(H_1)<c(G)$. It follows that $G$ satisfies one of the first two
conditions. The restrictions on $c(G)$, however, rule out the equality in Part 3 of Claim~\ref{lem:GHi}. In particular, if $c(G)=6$, then the only possible case is $m=2$ and $c(H_1)=c(H_2)=3$. However, it contradicts the equality $p(H_1)=p(H_2)=p(G)=2$.
Thus, the case of disconnected $H$ is actually impossible, that is, all such $H$
are distinguishable from $G$ by~CR.

\medskip

\noindent
( $\Rightarrow$ )
Suppose that all three conditions are false. That is,
either $p(G)=1$ and $c(G)\ge6$, or $p(G)=2$ and $c(G)\ge8$ (note that $c(G)$ is even in this case),
or $3\le p(G)<c(G)$ (in the last case, $p(G)$ is a proper divisor of $c(G)$).
In each case, $R(G)$ is CR-equivalent to a disjoint union of two shorter vertex-colored
cycles $R_1$ and $R_2$, both sharing the same period of length $p(G)$ with $R(G)$.
Taking the connected unicyclic graphs $H_1$ and $H_2$ such that $R(H_1)\cong R_1$
and $R(H_2)\cong R_2$, we see that $G$ is CR-equivalent to the disjoint union of $H_1$ and $H_2$
and is, therefore, not CR-identifiable.
\end{proof}

\begin{lemma}\label{cor:UG=UH}
  Let $G$ and $H$ be connected unicyclic graphs with $c(H)\le c(G)$. Assume that both $G$ and $H$
  are CR-identifiable. Then $U^G\cong U^H$ if and only if
  $\Set{t(x)}{x\in \core(G)}=\Set{t(x)}{x\in \core(H)}$ and
  one of the following conditions is true:
  \begin{itemize}
  \item
    $G\cong H$,
  \item
    $p(G)=p(H)=1$ and $3\le c(H)<c(G)\le5$,
  \item
    $p(G)=p(H)=2$ and $c(H)=4$ while $c(G)=6$.    
  \end{itemize} 
\end{lemma}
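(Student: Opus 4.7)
The plan is to translate the problem into the language of periodic circular words via the correspondence $G\leftrightarrow R(G)$ and then leverage Lemma~\ref{cor:RGRH}, which equates $U^G\cong U^H$ with $R(G)$ and $R(H)$ sharing a common period (and gives $p(G)=p(H)$ in that case), together with the restrictive list of admissible pairs $(p(G),c(G))$ for CR-identifiable unicyclic graphs supplied by Lemma~\ref{thm:unicyclic}. The case analysis will be short because CR-identifiability forces either $p\in\{1,2\}$ with $c$ small or $p=c$.

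For the forward direction, I would first apply Lemma~\ref{cor:RGRH} to obtain a common period $u$ of $R(G)$ and $R(H)$ together with the equality $p:=p(G)=p(H)$. Since every symbol of $R(G)$ must appear in $u$, and likewise for $R(H)$, both sets $\{t(x):x\in\core(G)\}$ and $\{t(x):x\in\core(H)\}$ coincide with the symbol set of $u$, giving the required set equality. I then split on $p$ using Lemma~\ref{thm:unicyclic}. If $p\geq 3$, neither $G$ nor $H$ can satisfy the $p=1$ or $p=2$ alternative of that lemma, so both must satisfy $p=c$, forcing $c(G)=c(H)=p$; the common period, being of length at most $\min(c(G),c(H))=p$ and a multiple of each minimal period length, must have length exactly $p$, hence $R(G)=u=R(H)$ and $G\cong H$. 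If $p=1$, CR-identifiability forces $c(G),c(H)\in\{3,4,5\}$, and both $R(G),R(H)$ are monochromatic in the common single color; equal cycle lengths give $G\cong H$, otherwise one lands in the second bullet $3\le c(H)<c(G)\le 5$. If $p=2$, CR-identifiability forces $c(G),c(H)\in\{4,6\}$ with $R(G),R(H)$ alternating in the two common colors; equal cycle lengths again give $G\cong H$, otherwise the third bullet $c(H)=4$ and $c(G)=6$.

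For the backward direction, I would verify in each of the three bullets that $R(G)$ and $R(H)$ admit a common period and then invoke Lemma~\ref{cor:RGRH}. The case $G\cong H$ is immediate. If $p(G)=p(H)=1$ with common single color $a$, both $R(G)$ and $R(H)$ are monochromatic $a$-cycles (of lengths $3,4,$ or $5$) and share the length-one period $a$. If $p(G)=p(H)=2$ with common color set $\{a,b\}$, both $R(G)$ and $R(H)$ are alternating $ab$-cycles and share the length-two period $ab$. The only step that needs a little care in the entire argument is the $p\ge 3$ subcase of the forward direction, where one must note that the length of any common period of $R(G)$ and $R(H)$ is divisible by both minimal period lengths $p(G)=p(H)=p$ and bounded above by $\min(c(G),c(H))=p$, pinning it to exactly $p$ and ruling out shorter spurious periods; beyond that the argument is essentially bookkeeping against the short list provided by Lemma~\ref{thm:unicyclic}.
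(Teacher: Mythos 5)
Your proposal is correct and follows essentially the same route as the paper: both directions reduce to Lemma~\ref{cor:RGRH} (common period and $p(G)=p(H)$), with the forward direction split into the cases $p\ge 3$ versus $p\le 2$ using the admissible pairs $(p,c)$ from Lemma~\ref{thm:unicyclic}. Your extra bookkeeping (divisibility of period lengths pinning the common period to length $p$ when $p\ge 3$, and the explicit monochromatic/alternating descriptions for $p\in\{1,2\}$) just spells out what the paper leaves implicit.
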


\begin{proof}
( $\Leftarrow$ )
By Lemma \ref{cor:RGRH}.

\medskip

\noindent
( $\Rightarrow$ )
Let $G$ and $H$ be CR-identifiable connected unicyclic graphs with $c(H)\le c(G)$.
Assume that $U^G\cong U^H$. The equality $\Set{t(x)}{x\in \core(G)}=\Set{t(x)}{x\in \core(H)}$
immediately follows from Lemma \ref{cor:RGRH}. By the same corollary, $p(G)=p(H)=p$.
If $p\ge3$, then Lemma \ref{thm:unicyclic} yields the equality $c(G)=p(G)=p(H)=c(H)$,
which readily implies $G\cong H$ by using Lemma \ref{cor:RGRH} once again.
If $p\le2$, then either $c(G)=c(H)$ and $G\cong H$ or $c(H)<c(G)$ and then, by Lemma \ref{thm:unicyclic},
$c(H)$ and $c(G)$ are as claimed.
\end{proof}

\subsection{A general criterion of CR-identifiability}\label{ss:criterium}

Deciding whether a given graph is CR-identifiable is an efficiently solvable problem \cite{ArvindKRV17,KieferSS22}.
For our purposes, it is beneficial to have a more explicit description of CR-identifiable
graphs in terms of the complex and the simple part of a graph. We now derive such a description
from the facts obtained for unicyclic graphs in the preceding subsection.

\begin{theorem}\label{thm:amenability}\hfill
  \begin{enumerate}[\bf 1.]
  \item
    A graph $G$ is CR-identifiable if and only if both the complex and the simple parts of $G$
    are CR-identifiable.
  \item
    The simple part of $G$ is CR-identifiable if and only if both of the following two conditions are true:
  \begin{enumerate}[\bf (a)]
  \item every unicyclic component of $G$ is CR-identifiable, i.e., is as described in Lemma~\ref{thm:unicyclic};
  \item every two unicyclic components of $G$ have non-isomorphic universal covers, i.e.,
    there is no pair of connected components as described in Lemma~\ref{cor:UG=UH}.
  \end{enumerate}
  \end{enumerate}
\end{theorem}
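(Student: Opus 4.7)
The plan is to derive both parts of Theorem \ref{thm:amenability} from Lemma \ref{lem:union-amenable}, combined with the unicyclic analysis of Subsection \ref{sc:CR_unicyclic} and the classical fact that every connected tree is CR-identifiable. Throughout, I write $G=C_1\sqcup\cdots\sqcup C_k$ for the decomposition into connected components and partition the index set into $I_{\mathrm{cpx}}$ (complex components) and $I_{\mathrm{sim}}$ (trees and unicyclic components), so that $G^{\mathrm{cpx}}=\bigsqcup_{i\in I_{\mathrm{cpx}}} C_i$ and $G^{\mathrm{sim}}=\bigsqcup_{i\in I_{\mathrm{sim}}} C_i$.

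For Part 1, the forward implication is a direct contrapositive argument: since CR-equivalence behaves component-wise on disjoint unions, replacing a single component $C$ by any $C'\creq C$ with $C'\not\cong C$ yields a graph CR-equivalent but non-isomorphic to the original (the multiset of isomorphism types of components is changed by exactly one removal and one addition). Applying this observation both to a hypothetical witness in $G^{\mathrm{cpx}}$ and to one in $G^{\mathrm{sim}}$, I conclude that CR-identifiability of $G$ forces CR-identifiability of each part. For the converse, I would apply Lemma \ref{lem:union-amenable} to $G$. CR-identifiability of each $C_i$ follows from the same component-wise argument applied to the relevant part. The pairwise non-isomorphism of universal covers of non-tree components breaks into three cases: two complex components, two unicyclic components, or one of each. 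The first two follow from the forward direction of Lemma \ref{lem:union-amenable} applied to $G^{\mathrm{cpx}}$ and to $G^{\mathrm{sim}}$ respectively. For the mixed case, Claim \ref{lem:unique-path} guarantees that the universal cover of a unicyclic graph has a unique infinite path; on the other hand, the universal cover of a complex connected graph is an infinite tree containing a vertex of degree at least $3$ (lifted from a vertex forced by the $\geq 2$ independent cycles), which produces several disjoint infinite branches, ruling out a unique infinite path.

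For Part 2, the components of $G^{\mathrm{sim}}$ are either trees or unicyclic. Since every connected tree is CR-identifiable (the standard AHU tree canonization is simulated by CR, which assigns to each vertex the isomorphism type of its rooted subtree), the only components that can fail CR-identifiability are the unicyclic ones, and the only non-tree components are the unicyclic ones as well. Applying Lemma \ref{lem:union-amenable} to $G^{\mathrm{sim}}$, its CR-identifiability is therefore equivalent to: (a) every unicyclic component is CR-identifiable, which by Lemma \ref{thm:unicyclic} is precisely the stated condition (a); and (b) every two unicyclic components have non-isomorphic universal covers, which by Lemma \ref{cor:UG=UH} amounts to the stated condition (b).

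The main delicate point is the cross-part argument in Part 1 --- ruling out any coincidence between the universal cover of a complex component and that of a unicyclic one --- since this is not supplied by the single-part applications of Lemma \ref{lem:union-amenable}. The rest is straightforward bookkeeping on top of Claim \ref{lem:unique-path}, Lemma \ref{thm:unicyclic}, Lemma \ref{cor:UG=UH}, and Lemma \ref{lem:union-amenable}.
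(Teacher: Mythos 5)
Your proof is correct and follows essentially the same route as the paper: the forward directions via the fact that a union of components of a CR-identifiable graph is CR-identifiable, the converse of Part 1 via Lemma~\ref{lem:union-amenable} reduced to the complex-versus-unicyclic cross case, and Part 2 via Lemma~\ref{lem:union-amenable} together with tree identifiability and Lemmas~\ref{thm:unicyclic} and~\ref{cor:UG=UH}. The only cosmetic difference is that for the cross case you re-derive the absence of a unique infinite path in the universal cover of a complex component, whereas the paper simply invokes the ``only if'' direction of the equivalence in Claim~\ref{lem:unique-path} (plus the fact that a tree is its own universal cover).
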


\begin{proof}
\textit{1.}  
If $G$ is CR-identifiable, then its complex and simple parts are both CR-identifiable
as a consequence a more general fact: The vertex-disjoint union of any set of connected
components of $G$ is CR-identifiable. This fact is easy to see directly, and it also immediately
follows from Lemma~\ref{lem:union-amenable}.

In the other direction, assuming that the complex and the simple parts of $G$ are CR-identifiable,
we have to conclude that $G$ is CR-identifiable. Lemma \ref{lem:union-amenable} reduces our task
to verification that if $H$ is a complex connected component of $G$ and $S$ is a simple
connected component of $G$ (a tree or a unicyclic graph), then the universal covers of $H$ and $S$
are non-isomorphic. The last condition follows from the fact that the universal cover of a tree
is the tree itself and from Claim~\ref{lem:unique-path}.

\textit{2.}
The second part of the theorem follows immediately from Lemma \ref{lem:union-amenable} due to
the well-known fact \cite{Immerman1990} that every tree is CR-identifiable.
\end{proof}

\subsection{Coloring the cores of general graphs}\label{ss:CR-cores}

We here collect useful general facts about the CR-colors of vertices in the core of a graph.
Let $G$ be an arbitrary graph. If $x$ is a vertex in $\core(G)$, then in $G$ we have a tree growing from the root $x$
that shares with $\core(G)$ only the vertex $x$. We denote this rooted tree by~$T_x$.

\begin{claim}\label{lem:CGCH}
  Let $G$ and $H$ be graphs. Let $x$ be a vertex in $\core(G)$ and $y$ be a vertex in $\core(H)$.
  If $T_x\not\cong T_y$, then $C(x)\ne C(y)$.
\end{claim}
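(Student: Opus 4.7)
The plan is to reduce the claim to Lemma~\ref{lem:UvsC} by showing that the rooted tree $T_x$ is recoverable from the rooted universal cover $U^{G}_{\tilde x}$, where $\tilde x$ is a chosen preimage of $x$ under a covering map $\alpha\function{U^{G}}{G}$ (restricting to the connected component of $x$ so that the lemma applies). Since Lemma~\ref{lem:UvsC} characterizes $C(x)=C(y)$ by the rooted isomorphism $U^{G}_{\tilde x}\cong U^{H}_{\tilde y}$, it is enough to prove that the rooted isomorphism class of $U^{G}_{\tilde x}$ determines the rooted isomorphism class of $T_x$; the contrapositive then yields the claim.

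The first ingredient is a purely structural observation: every non-core vertex $v$ of $G$ has at most one neighbor in $\core(G)$. Indeed, if $v$ had two core neighbors, both would survive the iterative pruning defining $\core(G)$, so $v$ would keep degree at least $2$ throughout the peeling and would itself belong to the core. In particular, every non-root vertex of $T_x$ has all of its $G$-neighbors inside $T_x$.

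The second step is to lift $T_x$ into $U^{G}_{\tilde x}$. Since $\alpha$ is a local bijection and $U^{G}$ is a tree, any walk in $G$ starting at $x$ lifts uniquely to a walk in $U^{G}$ starting at $\tilde x$, and restricting this lifting to the tree $T_x$ yields an injective rooted-tree embedding $T_x\hookrightarrow U^{G}_{\tilde x}$. For a lift $\tilde v$ of a non-root $v\in T_x$, the previous paragraph ensures that all $U^{G}$-neighbors of $\tilde v$ are lifts of $T_x$-neighbors of $v$; hence the subtree of $U^{G}_{\tilde x}$ hanging below $\tilde v$ is finite and isomorphic (as a rooted tree) to the subtree of $T_x$ rooted at~$v$. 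Conversely, for any core neighbor $w$ of $x$ the subtree of $U^{G}_{\tilde x}$ below the lift $\tilde w$ is infinite: starting at $w$ one can always extend an infinite non-backtracking walk inside $\core(G)$ (at every core vertex at least two core neighbors are available, so the previous vertex can always be avoided), and such a walk lifts to an infinite non-repeating path in the tree $U^{G}$.

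Combining these two observations, the embedded copy of $T_x$ inside $U^{G}_{\tilde x}$ is exactly $\tilde x$ together with those descendants of $\tilde x$ whose hanging subtree in $U^{G}_{\tilde x}$ is finite. This description depends only on the rooted isomorphism class of $U^{G}_{\tilde x}$, and applying the identical recipe on the $H$ side recovers $T_y$ from $U^{H}_{\tilde y}$. Consequently any rooted isomorphism $U^{G}_{\tilde x}\cong U^{H}_{\tilde y}$ would restrict to a rooted isomorphism $T_x\cong T_y$; the hypothesis $T_x\not\cong T_y$ therefore forces $U^{G}_{\tilde x}\not\cong U^{H}_{\tilde y}$, and Lemma~\ref{lem:UvsC} delivers $C(x)\ne C(y)$. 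The one step requiring genuine care is the finite-versus-infinite dichotomy for the subtrees below the lifts of neighbors of $\tilde x$: it relies simultaneously on the ``at most one core neighbor'' structural fact above and on the minimum-degree-$2$ property of $\core(G)$ used to produce the non-backtracking core walks.
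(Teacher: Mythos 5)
Your overall route is the same as the paper's: the paper proves the claim in one line by reducing to connected $G$, $H$ and asserting that $T_x\not\cong T_y$ ``readily implies'' $U^G_x\not\cong U^H_y$, after which Lemma~\ref{lem:UvsC} finishes; you are supplying the details of that implication, and most of them are sound (the injective lift of $T_x$ into $U^G_{\tilde x}$, the finiteness of the subtrees below lifts of non-root vertices of $T_x$, and the infinitude of the subtrees below lifts of core neighbours of $x$, obtained from non-backtracking walks in the minimum-degree-$2$ core, are all correct).

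However, the summary sentence you draw from these observations is false as literally stated: the copy of $T_x$ in $U^G_{\tilde x}$ is \emph{not} ``$\tilde x$ together with all descendants of $\tilde x$ whose hanging subtree is finite.'' Finite hanging subtrees also occur deep inside the infinite branches. For instance, if $G$ is a cycle with a single pendant vertex $a$ attached at a core vertex $z\neq x$, then $T_x=\{x\}$, while every lift of $a$ in $U^G_{\tilde x}$ is a leaf, hence a descendant of $\tilde x$ with finite hanging subtree; none of these lifts lies in the copy of $T_x$, and the set you describe is not even connected. What your two observations actually support is the depth-one version: since the covering map is a local bijection at $\tilde x$, the children of $\tilde x$ correspond bijectively to the $G$-neighbours of $x$, each of which is either a child of $x$ in $T_x$ (finite hanging subtree, isomorphic to the corresponding subtree of $T_x$) or a core neighbour of $x$ (infinite hanging subtree). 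So the correct, isomorphism-invariant recipe is: $T_x$ is $\tilde x$ together with the \emph{entire} subtrees hanging from those children of $\tilde x$ whose hanging subtree is finite (equivalently, those descendants all of whose ancestors other than $\tilde x$ have finite hanging subtrees). With that correction the argument goes through and fills in exactly the step the paper leaves to the reader. A smaller point: the assertion that a non-root vertex of $T_x$ has all its $G$-neighbours in $T_x$ needs slightly more than ``at most one core neighbour'' --- one must also exclude edges to a core vertex other than $x$ and edges into another hanging tree $T_{x'}$ --- but the same pruning/maximality argument disposes of both cases.
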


\begin{proof}
Clearly, it suffices to prove this for connected $G$ and $H$.
  The condition $T_x\not\cong T_y$ readily implies that $U^G_x\not\cong U^H_y$,
  and the claim follows from Lemma~\ref{lem:UvsC}.
\end{proof}

\begin{claim}
  Let $G$ and $H$ be two graphs (it is not excluded that $G=H$). For vertices $u\in V(G)$ and $v\in V(H)$
  assume that $C(u)=C(v)$. Then  $u\in\core(G)$ if and only if $v\in\core(H)$.
\label{cl:trees_cores_distinguish}
\end{claim}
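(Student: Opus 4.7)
My approach is to reduce the claim to an isomorphism of rooted universal covers via Lemma~\ref{lem:UvsC} and then identify an invariant of such rooted covers that detects core membership.

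Since the CR color of a vertex depends only on its own connected component (because each refinement step only inspects neighbors), I may replace $G$ and $H$ by the connected components $G_u$ and $H_v$ containing $u$ and $v$, respectively. Then Lemma~\ref{lem:UvsC} promotes the hypothesis $C(u)=C(v)$ to a rooted-tree isomorphism $U^{G_u}_u\cong U^{H_v}_v$. It therefore suffices to show that core membership of $u$ in $G_u$ is determined by the rooted isomorphism type of $U^{G_u}_u$.

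The key structural step is the following equivalence: \emph{for a connected graph $F$ and a vertex $x\in V(F)$, one has $x\in\core(F)$ if and only if at least two of the children of the root in $U^F_x$ support an infinite subtree below them.} Since the number of root-children with infinite descendant subtrees is manifestly invariant under rooted-tree isomorphism, this equivalence immediately yields the claim. For the forward direction I would take any two neighbors $y_1,y_2$ of $x$ inside $\core(F)$ (which exist because the core has minimum degree at least two) and, from each initial edge $xy_i$, construct an infinite non-backtracking walk staying inside $\core(F)$ by repeatedly choosing a core-neighbor of the current vertex different from the previously visited one; lifting these walks to $U^F_x$ produces the two required infinite root-subtrees. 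For the reverse direction I would show the contrapositive: if $x\notin\core(F)$, then either the component of $x$ is a tree (so $U^F_x$ is finite and has no infinite subtree at all) or $x$ lies in a tree-pendant attached to $\core(F)$ at a unique vertex $w$; in that case, any non-backtracking walk in $F$ starting at $x$ either stays inside a finite pendant tree or begins with the first edge of the unique path from $x$ to $w$, so at most one root-child of $U^F_x$ has an infinite subtree beneath it.

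The main obstacle is the forward direction of the structural equivalence: one must genuinely exploit the minimum-degree-two property of $\core(F)$ to guarantee that the non-backtracking continuations never get stuck, so that both chosen root-subtrees are infinite. Once this is established, the rest of the argument is a purely formal transfer along the rooted isomorphism $U^{G_u}_u\cong U^{H_v}_v$.
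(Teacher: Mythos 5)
Your proposal is correct and takes essentially the same route as the paper: reduce to connected graphs, apply Lemma~\ref{lem:UvsC} to upgrade $C(u)=C(v)$ to an isomorphism of rooted universal covers, and observe that core membership is an invariant of the rooted universal cover. Your concrete criterion (at least two root-children carrying infinite subtrees, established via non-backtracking walks and the minimum-degree-two property of the core) is a carefully justified version of what the paper asserts in one line — indeed more carefully stated, since the paper's phrasing "a cycle in $U^G$ containing $x$" is loose, $U^G$ being a tree, and the intended criterion is precisely the two-infinite-branches condition you prove.
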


\begin{proof}
Assume that $G$ and $H$ are connected (the general case will easily follow).
  Let $\alpha_G$ be a covering map from $U^G$ to $G$, and $\alpha_H$ be a covering map from $U^H$ to $H$.
  Consider $x\in V(U^G)$ and $y\in V(U^H)$ such that $\alpha_G(x)=u$ and $\alpha_H(y)=v$.
  Note that $u\in\core(G)$ if and only if there is a cycle in $U^G$ containing $x$,
  and the same is true about $v$ any $y$. This proves the claim because $U^G_x\cong U^H_y$ by Lemma~\ref{lem:UvsC}.
\end{proof}

In our proofs, we will deal with cores that locally have a tree structure, that is, the balls of sufficiently large radii
around most of its vertices induce trees. In this case, CR distinguishes vertices that have non-isomorphic neighborhoods.

\begin{claim}
  Let $B_r(v)$ denote the set of vertices at distance at most $r$ from a vertex $v$. 
  Let $v_1,v_2\in V(G)$. If, for some $r$, the $r$-neighborhoods $B_r(v_1)$ and $B_r(v_1)$ induce non-isomorphic trees
  rooted in $v_1$ and $v_2$ respectively, then $C(v_1)\neq C(v_2)$.
\label{cl:locally_trees_CR}
\end{claim}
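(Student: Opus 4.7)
The natural route is to translate the hypothesis about $r$-balls in $G$ into a statement about the rooted universal covers of the relevant connected components, and then invoke Lemma~\ref{lem:UvsC}. First I would reduce to the connected case by letting $G_1$ and $G_2$ be the components of $G$ containing $v_1$ and $v_2$ (possibly $G_1=G_2$), choosing covering maps $\alpha_i\function{U^{G_i}}{G_i}$, and fixing preimages $x_i\in V(U^{G_i})$ with $\alpha_i(x_i)=v_i$.

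The crux is the following geometric observation: if $B_r(v_i)$ induces a tree rooted at $v_i$, then the restriction of $\alpha_i$ to the ball of radius $r$ around $x_i$ in $U^{G_i}$ is an isomorphism of rooted trees onto $B_r(v_i)$. I would establish this by induction on the distance $s\le r$ from $x_i$. The base case $s=0$ is immediate, and the step uses two ingredients: a covering map is, by definition, a bijection on every closed neighborhood, and the assumption that $B_r(v_i)$ is a tree rules out the possibility that two distinct vertices in the ball around $x_i$ (at distances $\le s$) get identified by $\alpha_i$, since such a collision would produce a cycle inside $B_r(v_i)$.

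Given this, the hypothesis that $B_r(v_1)$ and $B_r(v_2)$ are non-isomorphic as rooted trees immediately implies that the rooted $r$-balls in $U^{G_1}_{x_1}$ and $U^{G_2}_{x_2}$ are non-isomorphic. Any isomorphism between the infinite rooted trees $U^{G_1}_{x_1}$ and $U^{G_2}_{x_2}$ would restrict to an isomorphism between these $r$-balls, so we conclude $U^{G_1}_{x_1}\not\cong U^{G_2}_{x_2}$. Lemma~\ref{lem:UvsC}, applied with the two connected graphs $G_1$ and $G_2$ playing the roles of $G$ and $H$, then yields
\[
C(v_1)=C(\alpha_1(x_1))\neq C(\alpha_2(x_2))=C(v_2),
\]
which is the desired conclusion.

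The only step requiring any care is the unfolding statement in the middle paragraph; everything else is a direct appeal to facts already in the excerpt. In particular, the main obstacle is the bookkeeping needed to confirm that, under the tree assumption on $B_r(v_i)$, the covering map really does inject the radius-$r$ ball around $x_i$ into $G_i$. Once this is in place, the claim follows from Lemma~\ref{lem:UvsC} with essentially no further work.
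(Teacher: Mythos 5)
Your argument is correct and is exactly the route the paper takes: its own proof of this claim is just the one-line remark that it is a direct consequence of Lemma~\ref{lem:UvsC}, and your unfolding step (the covering map restricts to a rooted-tree isomorphism from the $r$-ball around a lift of $v_i$ onto $B_r(v_i)$ when the latter is a tree) is precisely the routine verification implicit in that remark.
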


\begin{proof}
  This is a direct consequence of Lemma~\ref{lem:UvsC}.
\end{proof}

Throughout the paper, we identify the vertex set of the kernel of $G$ with the set of vertices of $\core(G)$
having degrees at least 3 in the core. We now state another consequence of Lemma~\ref{lem:UvsC}.

\begin{claim}
Let $G$ be a graph with minimum degree at least 2 and let $\mathrm{K}$ be its kernel. Let $r$ be a positive integer. For $v\in V(\mathrm{K})$, let $\mathcal{B}^{\mathrm{K}}_r(v)$ be the subgraph of $\mathrm{K}$ induced by the set of vertices at distance at most $r$ from $v$ in $\mathrm{K}$. Let $\mathcal{B}_r(v)\subset G$ be the subdivided version of $\mathcal{B}^{\mathrm{K}}_r(v)$.
Let $v_1,v_2$ be vertices of $\mathrm{K}$ such that, for some $r$, graphs $\mathcal{B}_r(v_1),\mathcal{B}_r(v_2)\subset G$ are non-isomorphic trees rooted in $v_1,v_2$. Then $C^G(v_1)\neq C^G(v_2)$.
\label{cl:locally_trees_CR_2}
\end{claim}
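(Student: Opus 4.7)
The plan is to mimic the proof of Claim~\ref{cl:locally_trees_CR} by reducing to Lemma~\ref{lem:UvsC}, but now the neighborhoods $\mathcal{B}_r(v_i)$ live in the subdivided kernel rather than in $G$ itself, so I have to recover the kernel structure from the abstract rooted universal cover purely in terms of vertex degrees.

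I would argue by contraposition. Assume $C^G(v_1)=C^G(v_2)$; by Lemma~\ref{lem:UvsC} there is a rooted-tree isomorphism $\varphi\colon(U^G_{v_1},\tilde v_1)\to(U^G_{v_2},\tilde v_2)$, where $\tilde v_i$ is a preimage of $v_i$ under the covering map $\alpha_i\colon U^G_{v_i}\to G$. Since $G$ has minimum degree at least $2$, we have $\core(G)=G$, so the vertices of $\mathrm{K}$ are exactly the vertices of degree $\ge 3$ in $G$, while the remaining vertices (those subdivided into kernel edges) have degree exactly $2$. Because a covering map preserves vertex degrees, the same dichotomy transfers to each $U^G_{v_i}$.

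Using only the rooted-tree structure of $(U^G_{v_i},\tilde v_i)$, I define the \emph{kernel-distance} between two degree-$\ge 3$ vertices as the number of maximal degree-$2$ subpaths (with their endpoints) traversed along the unique path joining them, and let $S_i\subseteq U^G_{v_i}$ be the subtree spanned by all degree-$\ge 3$ vertices at kernel-distance $\le r$ from $\tilde v_i$ together with all degree-$2$ vertices on the kernel-edges connecting adjacent such vertices. Since this definition is purely combinatorial in the abstract rooted tree, $\varphi$ restricts to a rooted isomorphism $S_1\to S_2$.

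The final step is to identify $S_i$ with $\mathcal{B}_r(v_i)$. Since $\mathcal{B}_r(v_i)$ is a tree by hypothesis (equivalently, $\mathcal{B}^{\mathrm{K}}_r(v_i)$ is a tree), it is simply connected and thus lifts along $\alpha_i$ to a subtree $\tilde{\mathcal{B}}_i\subseteq U^G_{v_i}$ rooted at $\tilde v_i$, with $\alpha_i|_{\tilde{\mathcal{B}}_i}$ a rooted isomorphism onto $\mathcal{B}_r(v_i)$. A two-way inclusion then shows $\tilde{\mathcal{B}}_i=S_i$: the containment $\tilde{\mathcal{B}}_i\subseteq S_i$ uses that the unique path inside the tree $\mathcal{B}^{\mathrm{K}}_r(v_i)$ is automatically a shortest path in $\mathrm{K}$ and so has kernel-length $\le r$; the containment $S_i\subseteq\tilde{\mathcal{B}}_i$ uses that any kernel-length-$\le r$ path from $\tilde v_i$ in $U^G_{v_i}$ projects entirely into $\mathcal{B}_r(v_i)$ and hence, being connected to $\tilde v_i$, lies in the component of $\alpha_i^{-1}(\mathcal{B}_r(v_i))$ containing $\tilde v_i$. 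Chaining $\mathcal{B}_r(v_1)\cong S_1\cong S_2\cong\mathcal{B}_r(v_2)$ contradicts the hypothesis and completes the proof. The only real subtlety is verifying the identity $\tilde{\mathcal{B}}_i=S_i$, which boils down to the observation that kernel-distances computed inside the tree $\mathcal{B}^{\mathrm{K}}_r(v_i)$ coincide with those in $\mathrm{K}$.
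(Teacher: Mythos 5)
Your proof is correct and takes essentially the paper's intended route: the paper states Claim~\ref{cl:locally_trees_CR_2} without further argument as ``another consequence of Lemma~\ref{lem:UvsC}'', and your elaboration---lifting the subdivided ball to the rooted universal cover and recognizing it there purely via the degree-$2$/degree-$\ge 3$ dichotomy, which is valid because minimum degree $\ge 2$ makes the kernel vertices exactly the degree-$\ge 3$ vertices and covering maps preserve degrees---is precisely the missing justification, with the two-way inclusion argued soundly. The only minor quibble is the ambiguous phrasing of ``kernel-distance'' when two degree-$\ge 3$ vertices are adjacent with no subdividing vertices between them, but the intended invariant (the number of segments between consecutive degree-$\ge 3$ vertices along the unique path) is clearly degree-determined, so this does not affect correctness.
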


Finally, we need the fact that the partition produced by CR on a graph refines the partition produced by CR on its core.

\begin{claim}
  Let $u$ and $v$ be vertices in $\core(G)$. Let $C$ and $C'$ be the colorings produced by CR run on $G$
  and $\core(G)$ respectively. If $C'(u)\neq C'(v)$, then also $C(u)\neq C(v)$.
\label{cl:CR_from_core_to_graph}
\end{claim}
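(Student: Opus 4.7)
The plan is to show directly that the partition of $V(\core(G))$ induced by the restriction of $C$ is a refinement of the partition induced by $C'$; taking contrapositives then gives the claim. The key tool is Claim~\ref{cl:trees_cores_distinguish}, which guarantees that in the stable CR-coloring $C$ of $G$, no core vertex shares a color with a non-core vertex. Without this clean separation, the tree-pieces $T_x$ hanging off the core could \emph{a priori} pollute the CR-update at core vertices, so getting this separation is the main (and really the only) conceptual point of the proof.

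I would then define an equivalence relation $\approx$ on $V(\core(G))$ by $u \approx v \iff C(u)=C(v)$, and show that $\approx$ is an equitable partition of $\core(G)$: for any $u\approx v$ and any $\approx$-class $X\subseteq V(\core(G))$,
\[
|N_{\core(G)}(u)\cap X|=|N_{\core(G)}(v)\cap X|.
\]
The verification goes as follows. Since $C$ is stable on $G$, from $C(u)=C(v)$ I get the multiset equality $\Mset{C(w):w\in N_G(u)}=\Mset{C(w):w\in N_G(v)}$. By Claim~\ref{cl:trees_cores_distinguish}, each $C$-color class lies either entirely in $V(\core(G))$ or entirely outside it, so this multiset factors uniquely into its ``core'' and ``non-core'' parts. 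Splitting $N_G(u)=N_{\core(G)}(u)\sqcup(N_G(u)\setminus V(\core(G)))$ (and similarly for $v$), both factors must agree separately; in particular, $\Mset{C(w):w\in N_{\core(G)}(u)}=\Mset{C(w):w\in N_{\core(G)}(v)}$. Since the $\approx$-classes on $V(\core(G))$ are precisely the restrictions of the $C$-color classes to the core, this is exactly equitability of $\approx$.

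To finish, I recall the standard fact that CR started from the uniform initial coloring produces the \emph{coarsest} equitable partition, so every equitable partition of $V(\core(G))$ refines the $C'$-partition. Applied to $\approx$, this gives $u\approx v\Rightarrow C'(u)=C'(v)$; contrapositively, $C'(u)\neq C'(v)\Rightarrow C(u)\neq C(v)$, which is the claim. The only real work, as noted, is leveraging Claim~\ref{cl:trees_cores_distinguish} to decouple the core-neighbor contribution from the tree-neighbor contribution in the CR update; once that decoupling is in place, the rest is a one-line equitable-partition argument.
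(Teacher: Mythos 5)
Your proof is correct, but it follows a genuinely different route from the paper's. The paper argues via universal covers: assuming $C(u)=C(v)$, Lemma~\ref{lem:UvsC} gives an isomorphism of rooted universal covers $U^G_x\cong U^G_y$ for lifts $x,y$ of $u,v$, and since taking the 2-core commutes with taking the universal cover ($\core(U^G_x)\cong U^{\core(G)}_x$), a second application of Lemma~\ref{lem:UvsC} yields $C'(u)=C'(v)$, a contradiction. You instead stay purely combinatorial: Claim~\ref{cl:trees_cores_distinguish} (with $G=H$) separates the colors occurring on $\core(G)$ from those occurring outside it, so in the stable (hence equitable) coloring $C$ the neighbor-color multiset at a core vertex splits into a core part and a non-core part, and equality of the core parts for $u,v$ with $C(u)=C(v)$ — together with the fact that the 2-core is an induced subgraph, so $N_{\core(G)}(u)=N_G(u)\cap V(\core(G))$ — shows that the restriction of the $C$-partition to $V(\core(G))$ is an equitable partition of $\core(G)$; the coarsest-equitable-partition characterization of CR then gives that this restriction refines the $C'$-partition, which is the contrapositive of the claim. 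Both arguments are sound. The paper's is essentially a one-liner given the cover machinery already set up in Section~\ref{ss:CR-cores}, though it silently uses the commutation of core and universal cover; yours avoids any further facts about covers (beyond what is packaged in Claim~\ref{cl:trees_cores_distinguish}), needs no reduction to connected graphs, but does invoke the fact that CR started from the uniform coloring computes the coarsest equitable partition — a standard fact the paper never states explicitly, provable by an easy induction on refinement rounds (any equitable partition refining $\cP_i$ also refines $\cP_{i+1}$), so this is reliance on folklore rather than a gap.
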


\begin{proof}
Clearly, it is enough to prove the claim for a connected graph $G$.
Let us assume towards a contradiction that $C(u)=C(v)$. 
Let $\alpha$ be a covering map from $U^G$ to $G$. Let $x,y\in V(U^G)$ be such that $\alpha(x)=u$ and $\alpha(y)=v$.
Due to Lemma~\ref{lem:UvsC},  $U^G_x\cong U^G_y$. 
Therefore, $U^{\core(G)}_x=\core(U^G_x)\cong \core(U^H_y)=U^{\core(H)}_y$.
But then, again by Lemma~\ref{lem:UvsC}, $C'(u)=C'(v)$, a contradiction.
\end{proof}

\subsection{Derivation of Corollary \ref{cor:ident} from Theorem \ref{thm:ColRef}}\label{ss:proof-cor}

\textit{Part 1.}
Any isomorphism of graphs obviously respects their cores; cf.\ Claim~\ref{cl:trees_cores_distinguish}.
Claim \ref{lem:CGCH} shows that the CR-color of any vertex $x$ in the core $\coren$
contains a complete information about the isomorphism type of the rooted tree $T_x$
``growing'' from this vertex. This has the following consequence. Let $\corencol$
denote the colored version of $\coren$ where each vertex $x$ is colored by the isomorphism type of $T_x$.
Then $\compn$ is CR-identifiable if and only if $\corencol$ is CR-identifiable.
In order to show that $\corencol$ is CR-identifiable it suffices to show that $\corencol$ is
reconstructible up to isomorphism from the multiset of the vertex colors produced by CR
on input $\corencol$. Note that the CR-color partition of $\corencol$ is equal to the restriction
of the CR-color partition of $\compn$ to $\coren$ (recall Claim~\ref{cl:trees_cores_distinguish}).
Theorem \ref{thm:ColRef}, therefore, provides us with the following information (whp):
\begin{itemize}
\item
  every CR-color class of $\corencol$ has size either 1 or 2,
\item
  every two equally colored vertices have degree~2,
\item
  every two equally colored vertices form an orbit of $\aut(\corencol)$.
\end{itemize}
Moreover, our Main Lemma (Lemma~\ref{th:giant_main}) ensures that $\compn$ whp has no involutory automorphism of type $A_3$ described in Section \ref{sc:intro}. Along with this fact, the
above conditions readily imply that the color classes of size 2 occur either
``along'' a pair of transposable pendant paths between two vertices of degree at least 3
or correspond to the reflectional symmetry of a pendant cycle.
Here we use the notions introduced in Section \ref{sc:intro} in the context of $\aut(\coren)$,
which should now be refined by taking into account the coloring of~$\corencol$.

If $\{u\}$ and $\{v\}$ are two color classes of size 1, then the colors $C(u)$ and $C(v)$
yield the information on whether the vertices $u$ and $v$ are adjacent or not.
For color classes $\{u\}$ and $\{v,v'\}$, note that $u$ and $v$ are adjacent
if and only if $u$ and $v'$ are adjacent. This adjacency pattern is as well
reconstructible from the colors $C(u)$ and $C(v)=C(v')$. If $\{u,u'\}$ and $\{v,v'\}$ are
two color classes of size 2, then they span either a complete or empty bipartite graph
or a matching (for example, $u$ is adjacent to $v$, $u'$ is adjacent to $v'$,
and there is no other edges between these color classes).
Each of these three possible adjacency patterns is reconstructible from the
colors $C(u)=C(u')$ and $C(v)=C(v')$. A crucial observation, completing the proof,
is that all ways to put a matching between $\{u,u'\}$ and $\{v,v'\}$ lead to
isomorphic graphs.

\textit{Part 2.} This follows from part 1 by part 1 of Theorem~\ref{thm:amenability}.

\subsection{Derivation of Theorem \ref{thm:CanLab} from Theorem \ref{thm:ColRef}}
\label{sc:from_CR_to_CL}

Before proceeding to the proof, we remark that when we say that a canonical labeling algorithm
succeeds on a random graph $G_n$, we mean that the algorithm works correctly on a certain
efficiently recognizable (closed under isomorphisms) class of graphs $\mathcal{C}$ such that
$G_n$ belongs to $\mathcal{C}$ whp. Though not explicitly stated in the argument below,
it will be clear that, in our case, $\mathcal{C}$ is the class of all graphs satisfying the
conditions of Theorem \ref{thm:ColRef}. Note that these conditions are easy to check after
running CR on a graph.

First of all, we distinguish the complex and the simple parts of $G_n$
and compute a canonical labeling of the simple part separately.
This is doable in linear time. It remains to handle the complex part~$\compn$.

It is enough to compute a suitable injective
coloring of $\compn$ and subsequently to rename the colors in their lexicographic order
by using the labels that were not used for the simple part.
To this end, we run CR on $\compn$. This takes time $O(n\log n)$
as CR can be implemented \cite{BBG} in time $O((n+m)\log n)$, where $m$ denotes the number of edges
(which is linear for the sparse random graph under consideration).
Then we begin with coloring the vertices of the core $\coren$.
Theorem \ref{thm:ColRef} along with Claim~\ref{cl:trees_cores_distinguish} ensures that
the vertices of degree at least 3 already received individual colors.
The duplex colors occur along transposable pendant paths and pendant cycle
(like in Section \ref{ss:proof-cor}, these notions are understood with respect to $\compn$
rather than to $\coren$ alone). To make such vertex colors unique, we keep the original colors
along one of two transposable paths and concatenate their counterparts in the other path
with a special symbol. We proceed similarly with symmetric pendant cycles. In this way,
every vertex $x$ in the core $\coren$ receives an individual color $\ell(x)$.
In the last phase, we compute a canonical labeling for each tree part $T_x$ of $\compn$,
regarding $T_x$ as a tree rooted at $x$. This coloring is not injective yet because
some $T_x$ and $T_y$ can be isomorphic. This is rectified by concatenating all vertex
colors in $T_x$ with~$\ell(x)$.

\section{CR-coloring of the random graph}
\label{cs:rg_amenability}

In this section, we state and prove our Main Lemma that completely describes the output of CR on the random graph. 
We need the following definition. Given a graph $G$, we call vertices $u$ and $v$ in $\mathrm{core}(G)$ {\it interchangeable}, if
\begin{itemize}
\item they both have degree 2 in $\mathrm{core}(G)$,
\item $u$ and $v$ belong to a cycle $F\subset\mathrm{core}(G)$ with the following property: there exists a vertex $w$ on the cycle such that $w$ has degree at least 3 in $\mathrm{core}(G)$, $d_F(u,w)=d_F(v,w)$, and all the other vertices on the cycle, but the vertex opposite to $w$ when $|V(F)|$ is even, have degree 2 in $\mathrm{core}(G)$. In other words, $u$ and $v$ either belong to a pendant cycle or to two transposable pendant paths, and the respective transposition replaces $u$ and $v$.
\end{itemize}

\begin{lemma}[Main Lemma]\label{th:giant_main}
 Let $\gamma>1$ be a constant, $pn\leq \gamma$, and $G_n=G(n,p)$. Let $\mathrm{H}_n$ be the union of complex components in $G_n$, and $\mathrm{C}_n$ be its core.
If CR is run on $\mathrm{H}_n$, then whp any pair of vertices in $\mathrm{C}_n$ receiving the same color is interchangeable. Under the condition $pn=1+\omega(n^{-1/3})$, this is true also if CR is run on $\mathrm{C}_n$.
\end{lemma}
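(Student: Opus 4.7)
The plan is to reduce the problem to distinguishing the vertices of the kernel $\mathrm{K}=K(\mathrm{H}_n)$ by CR, and then to propagate distinctions to the remaining degree-2 vertices of the core along pendant paths and pendant cycles. Indeed, once every kernel vertex has been assigned an individual CR-color, the interior of any pendant path $P$ between two distinctly colored endpoints is automatically distinguished coordinate-by-coordinate, unless there is a second pendant path with the same endpoints and length that CR cannot tell apart from $P$ --- exactly the ``transposable pendant paths'' situation, which makes the corresponding pairs of vertices interchangeable. A pendant cycle yields the only other source of CR-indistinguishable pairs, namely a reflection about its unique high-degree vertex. So the content of the lemma is: \emph{CR individuates every kernel vertex, and the only color collisions on $\mathrm{C}_n$ come from these two genuine local symmetries.}

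The first block of the proof, carried out in Section~\ref{sc:complex_structure}, uses the contiguous-model descriptions of the kernel in the strictly supercritical regime (Ding--Lubetzky--Peres) and in the critical regime (Ding--Kim--Lubetzky--Peres) to couple $\mathrm{K}$ to a configuration-model multigraph with a prescribed degree sequence supported on degrees $\{3,4,\ldots\}$. From these models one imports the standard ``no small twicyclic subgraph'' statement: for any $r=r(n)$ tending to infinity slowly enough (e.g.\ $r=c\log n$ with small $c$), whp all but $O_P(1)$ vertices $v\in V(\mathrm{K})$ satisfy that $\mathcal{B}^{\mathrm{K}}_r(v)$ is a tree. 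Call these vertices \emph{tree-type}; the remaining $O_P(1)$ vertices, which sit on the finitely many short cycles of $\mathrm{K}$, are \emph{cyclic-type}.

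For a tree-type vertex $v$, the subdivided neighborhood $\mathcal{B}_r(v)\subset\mathrm{H}_n$ is a rooted tree whose isomorphism class encodes both the local degree profile inside $\mathrm{K}$ and the lengths of the pendant paths attached within distance $r$; by Claim~\ref{cl:locally_trees_CR_2}, two tree-type vertices with non-isomorphic such neighborhoods get different CR-colors. The task in Sections~\ref{sc:proof2_large} (for $np=1+\omega(n^{-1/3})$) and~\ref{sc:proof2_small} (for $np=1+O(n^{-1/3})$) is then a first/second-moment computation inside the contiguous model: one shows that the distribution of rooted $r$-neighborhoods is dispersed enough that whp no two tree-type vertices produce isomorphic $\mathcal{B}_r$, exploiting that the size of the kernel is $\Theta(n(np-1)^2)$ (respectively $\Theta((np-1)^3n)$ in the critical case) and that mixed degrees $\ge 4$ appear often enough to serve as asymmetry markers. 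In the same step, one shows that any would-be CR-collision between a tree-type vertex and a non-kernel degree-2 vertex is ruled out, since $\mathcal{B}_r$ of a degree-2 vertex has a different shape near the root.

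The cyclic-type vertices are then handled in Section~\ref{sc:proof3} by bootstrapping: because there are only $O_P(1)$ of them and each is surrounded, at radius $r$, by tree-type vertices that are already CR-individuated, a further bounded number of CR rounds propagates those anchor colors into the short-cycle region and pins down each cyclic-type vertex uniquely --- unless, of course, it participates in the genuine interchangeability described by a pendant cycle or transposable pendant pair. The last sentence of the lemma, concerning CR run directly on $\mathrm{C}_n$ when $np=1+\omega(n^{-1/3})$, is obtained by repeating the same scheme without the attached trees $T_x$; in this regime the kernel is large enough that degree information inside $\mathrm{K}$ already suffices to make the $r$-neighborhoods distinct, whereas in the critical regime the trees $T_x$ are genuinely needed (cf.\ Claim~\ref{cl:CR_from_core_to_graph}). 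The main obstacle, and the technically heaviest part, is precisely the moment computation in Step~3: it must simultaneously control degree coincidences inside the random kernel and the lengths of the pendant paths read off from the contiguous model, delicately enough to survive the critical regime where the kernel is small and rigid.
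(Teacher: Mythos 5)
Your overall architecture matches the paper's (contiguous models for the kernel, distinguish locally tree-like vertices by their local data, then use them as anchors for the rest), but two of your load-bearing claims are wrong as stated. First, it is not true that for $r\to\infty$ all but $O_P(1)$ kernel vertices have acyclic $r$-balls. What the contiguous models give (Claim~\ref{cl:kernel_complex}) is only the absence of small \emph{complex} subgraphs; short cycles themselves are plentiful in the kernel (their number up to length $r$ grows without bound as $r\to\infty$), so the set of your ``cyclic-type'' vertices is not stochastically bounded — the $O_P(1)$ quantity in Theorem~\ref{thm:ColRef} counts interchangeable pairs, not vertices near short cycles. This undermines your final bootstrapping step, which explicitly leans on ``there are only $O_P(1)$ of them'': with unboundedly many short-cycle regions scattered through the kernel, pinning down two far-apart non-tree-like vertices requires the kind of argument carried out in Section~\ref{sc:proof3} — long paths of already-individuated good vertices emanating from each cycle, combined with the facts that no vertex is close to two cycles and no two cycles of bounded length are close (Claims~\ref{cl:complex_components_sizes},~\ref{cl:kernel_complex}) — and this distance/anchor comparison, not mere finiteness, is where the work lies.

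Second, the distinguishing mechanism you propose for the good vertices is too weak in part of the range. When $np-1=o(1)$ the kernel is asymptotically $3$-regular (degree-$\ge 4$ vertices have vanishing density) and, near the lower end $np=1+\omega(n^{-1/3})$, the kernel is also small, so ``degrees $\ge 4$ as asymmetry markers'' cannot separate the many vertices whose kernel $r$-balls are isomorphic pure-degree-$3$ trees; in particular your claim that for CR run on $\mathrm{C}_n$ ``degree information inside $\mathrm{K}$ already suffices'' is false there. The paper's actual source of entropy is the lengths of the subdivided kernel edges: Claim~\ref{cl:supercritical_main} exposes the kernel first and then the geometric subdivision lengths sphere by sphere, using a local limit theorem to show the boundary length-multisets $\mathcal{P}_r(u),\mathcal{P}_r(v)$ collide with probability $e^{-\Theta(r\cdot 2^r)}$-type smallness, after which a union bound over pairs works; these path lengths belong to the core, which is why the core-only statement holds. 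In the critical window the argument is different again (Claim~\ref{cl:weakly_disjoint_paths}: a first-moment bound ruling out two long weakly disjoint paths with aligned degree-$2$ patterns in $G_n$), and there the attached trees do matter. Your sketch leaves precisely this step as an unspecified ``first/second-moment computation,'' so as written the proposal does not yet constitute a proof of the Main Lemma.
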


In the next section, we recall and prove several basic facts about the distribution of connected components in $G(n,p)$ that we use further in Section~\ref{sc:proof_main} to prove Main Lemma. In Section~\ref{sc:main_critical}, we state and prove Lemma~\ref{lm:critical_CR_core} that complements Main Lemma in the case $pn=1+O(n^{-1/3})$. We will use both Main Lemma and Lemma~\ref{lm:critical_CR_core} to derive Theorems~\ref{thm:ColRef}~and~\ref{th:symmetries} in Section~\ref{sc:proofs_th_Intro}.

In this section, we use the following notations. For a graph $G$, $d_G(u,v)$ is the shortest-path distance between $u$ and $v$ in $G$. Sometimes, when the graph is clear from the context, we omit the subscript $G$. For a vertex $v$ and a real number $r$, we denote by $\mathcal{B}^G_{r}(v)$ the ball of radius $r$ around $v$ in $G$ --- i.e. the graph induced on the set of all vertices at distance at most $r$ from $v$ in $G$. For a non-negative integer $r$, we denote by $\mathcal{S}^G_{r}(v)\subset\mathcal{B}^G_{r}(v)$ the sphere of radius $r$ around $v$ in $G$ --- i.e. the graph induced on the set of all vertices at distance exactly $r$ from $v$ in $G$. 

For a connected graph $G$, its {\it excess} is the difference between the number of edges and the number of vertices. In particular, a tree has excess $-1$. We call {\it $\ell$-complex} a connected graph with excess $\ell$. The {\it total excess} of a graph without unicyclic components is the sum of excesses of all its components.

\subsection{Structure of complex components in the random graph}
\label{sc:complex_structure}

Let us recall that, for any constant $\varepsilon>0$ and any $p>\frac{1+\varepsilon}{n}$, whp the random graph $G_n\sim G(n,p)$ has a unique {\it giant} component of linear in $n$ size and all the other components are unicyclic and have sizes $O(\log n)$. When $1+\omega(n^{-1/3})=np\leq 1+\varepsilon$, still whp there is a unique complex component of size $\Theta((np-1)n)$.
On the other hand, if $np=1-\omega(n^{-1/3})$, then whp all components of $G_n\sim G(n,p)$ are unicyclic (for more details, see, e.g.,~\cite[Chapter 5]{Janson_book}). In particular, in the latter case whp there are no complex components and there is nothing to prove. So, in what follows, we assume that $np\geq 1-O(n^{-1/3})$. We will also need the following simple claim.

\begin{claim}
Let $\gamma>1$, $np\leq\gamma$, and $G_n\sim G(n,p)$. 
  There exists $\varepsilon=\varepsilon(\gamma)$ such that $\varepsilon\to\infty$ as $\gamma\to 1$ and whp any connected subgraph of $G_n$ of size at most $\varepsilon\ln n$ is not complex.

\label{cl:complex_components_sizes}
\end{claim}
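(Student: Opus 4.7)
My plan is to reduce the statement to counting connected subgraphs with minimum degree at least~$2$ (i.e., $2$-core-like subgraphs) and then apply a first-moment bound on subdivisions of kernel multigraphs. The crucial observation is that if $H$ is a connected complex subgraph of $G_n$ of size~$k$, then its $2$-core $\core(H)$ is itself a connected subgraph of $G_n$ with the same excess, minimum degree $\ge 2$, and $|V(\core(H))|\le k$; conversely, any such $2$-core is complex. Hence it suffices to show that whp $G_n$ contains no connected subgraph $H^*$ with $\delta(H^*)\ge 2$, excess $l\ge 1$, and $|V(H^*)|\le\varepsilon\ln n$. Any such $H^*$ is a subdivision of its kernel $K(H^*)$: a connected multigraph with minimum degree $\ge 3$ and excess~$l$. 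The handshake inequality $3|V(K)|\le 2|E(K)|$ together with $|E(K)|-|V(K)|=l$ forces $|V(K)|\le 2l$ and $|E(K)|\le 3l$, so for each $l$ the number $N_l$ of kernel isomorphism types is finite (for instance $N_1=3$: the theta, handcuff, and double-loop kernels).

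For a fixed kernel $K_0$ with $v_0$ vertices and $e_0$ edges, I would bound the expected number of subdivisions of $K_0$ on $k$ vertices inside $G_n$ by a product of three elementary factors: $n^{v_0}$ for embedding the kernel vertices, $n^{k-v_0}(k+3l)^{3l}$ for distributing and ordering the $k-v_0$ subdivision vertices along the $e_0$ kernel edges, and $p^{k+l}$ for the probability that all $k+l$ edges are present in $G_n$. Using $np\le\gamma$ and $p\le\gamma/n$, this is at most $N_l\,\gamma^{k+l}(k+3l)^{3l}n^{-l}$. Summing the geometric series in $k\le K:=\varepsilon\ln n$, the $l=1$ contribution is $O(K^3\gamma^K/n)$, while each $l\ge 2$ contribution carries an extra factor $\gamma^{l-1}/n^{l-1}$ and is negligible. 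Setting $\varepsilon(\gamma):=\frac{1}{2\ln\gamma}$ yields $\gamma^K=n^{1/2}$, so the total is $O(\ln^3 n\cdot n^{-1/2})=o(1)$ and Markov's inequality finishes the proof. Since $\ln\gamma\to 0$ as $\gamma\to 1^+$, we obtain $\varepsilon(\gamma)\to\infty$.

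The main obstacle, which the $2$-core reduction is precisely designed to overcome, is that a direct first-moment count over all complex connected subgraphs of size $k$ also counts configurations with heavy pendant-tree attachments, introducing a Cayley-type factor of $k^{k-2}$ and producing an expected value of order $(enp)^k$ rather than $\gamma^k$. Since $e\gamma>e>1$ even as $\gamma\to 1^+$, such a direct bound only permits $\varepsilon\le(1+\ln\gamma)^{-1}<1$ and thus fails to diverge. Stripping off the pendant trees before applying the first moment replaces the exponential base $e\gamma$ by $\gamma$, which is exactly what enables $\varepsilon(\gamma)\to\infty$.
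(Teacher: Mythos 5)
Your proof is correct and rests on the same core idea as the paper's: strip the pendant trees and run a first-moment count over connected subgraphs of minimum degree $\ge 2$, which replaces the problematic $(e\gamma)^k$ base (coming from the $k^{k-2}$ Cayley factor) by $\gamma^k$ so that $\varepsilon(\gamma)\to\infty$ as $\gamma\to 1$. The only real difference is a simplification the paper uses that you did not: instead of parameterizing by the full kernel and summing over all excesses $l\ge 1$, the paper observes that \emph{every} complex connected graph already contains a $1$-complex subgraph with minimum degree $2$ (a theta or handcuff graph), so it suffices to first-moment only the $l=1$ case. Your route also works, but your claim that the $l\ge 2$ contributions are "negligible" because of an extra factor $\gamma^{l-1}/n^{l-1}$ glosses over the fact that both $N_l$ (crudely $l^{O(l)}$) and $(k+3l)^{3l}$ grow with $l$; one needs an extra line checking that $N_l(k+3l)^{3l}n^{-l}$ is still summable over $l\le k\le\varepsilon\ln n$, e.g.\ by noting that $l^{O(l)}(k+3l)^{3l}\le(\ln n)^{O(l)}=n^{o(l)}$. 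The paper's restriction to $l=1$ sidesteps this bookkeeping entirely.
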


\begin{proof}
Let $k\in\mathbb{Z}_{>0}$. Let $\mathcal{C}_k$ be the family of all connected 1-complex graphs on $[k]$ with minimum degree 2. Clearly, such graphs can only be constructed as follows: take two cycles on two subsets $V_1,V_2\cup[k]$ such that $V_1\cup V_2=[k]$ and either $V_1\cap V_2=\varnothing$, or the two cycles share a single path. If the two cycles are edge-disjoint, then add a path joining them. It is easy to see that 1) if a connected graph is complex, then it has a subgraph from $\cup_k\mathcal{C}_k$; 2) $|\mathcal{C}_k|\leq k!\cdot k^2$. The expected number of subgraphs in $G(n,p)$ of size at most $N$ that have an isomorphic copy in $\cup_k\mathcal{C}_k$ is at most
$$
\sum_{k\leq N}{n\choose k}k! k^2 p^{k+1}\leq \sum_{k\leq N}k^2 n^k p^{k+1}.
$$
In particular, if $p\leq \gamma/n$, then the latter expression is $O(N^2 \gamma^N/n)=o(1)$ for an appropriate choice of $\varepsilon=\varepsilon(\lambda)$, implying the desired assertion. 
\end{proof}

Also, we will use the following contiguous models for the giant components in supercritical random graphs.

\begin{theorem}[J. Ding, E. Lubetzky, Y. Peres~\cite{DLP_anatomy}]
Assume that $pn=\lambda$ for some constant $\lambda>1$. Let $\mathrm{H}_n$ be the largest component of $G_n\sim G(n,p)$. Let $\mu$ be the unique number in $(0,1)$ such that $\mu e^{-\mu}=\lambda e^{-\lambda}$.  Let $\xi\sim\mathcal{N}\left(\lambda-\mu,1/n\right)$. Consider a random vector $(\eta_1,\ldots,\eta_n)$ consisting of independent $\mathrm{Pois}(\xi)$ random variables. Denote by $\mathcal{A}$ the event that $\sum\eta_i\1_{\eta_i\geq 3}$ is even. For $i\in[n]$, let $D_i=\frac{1}{\mathbb{P}(\mathcal{A})}\eta_i \1_{\mathcal{A}}$. For every $k\geq 3$, set $n_k=\sum_i \1_{D_i=k}$, $N=\sum_{k\geq 3}n_k$.

Let $\mathrm{\tilde H}_n$ be a random graph constructed in the following way:
\begin{itemize}
\item Define a uniformly random multigraph $\mathrm{\tilde K}_n$ on $[N]$ such that, for every $k\geq 3$, it has exactly $n_k$ vertices of degree $k$.
\item For every edge of $\mathrm{\tilde K}_n$, generate independently a $\mathrm{Geom}(1-\mu)$ random variable $\zeta$, and replace this edge with a path of length $\zeta$.
\item Attach an independent $\mathrm{Pois}(\mu)$--Galton--Watson tree to each vertex of the obtained at the previous step random graph.
\end{itemize}
Then, for every set of graphs $\mathcal{F}$, if $\mathbb{P}(\mathrm{\tilde H}_n\in\mathcal{F})=o(1)$, then $\mathbb{P}(\mathrm{H}_n\in\mathcal{F})=o(1)$ as well.
\label{th:contiguous_super_critical}
\end{theorem}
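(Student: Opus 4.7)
The plan is to follow the standard ``peeling'' decomposition of the giant component in supercritical $G(n,p)$: first remove the trees hanging off the 2-core, then contract the degree-2 paths to obtain the kernel $\mathrm{\tilde K}_n$. The construction in the statement exactly inverts this decomposition, so the proof is essentially a careful comparison of the distributions of the three layers (hanging trees, path lengths, and the kernel multigraph).

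First, I would run a breadth-first exploration of $\mathrm{H}_n$ and invoke the classical duality for Erd\H{o}s--R\'enyi random graphs: the components of the subgraph $G_n\setminus \mathrm{H}_n$, together with the ``tree-parts'' growing from the 2-core, are asymptotically distributed as independent $\mathrm{Pois}(\mu)$--Galton--Watson trees with $\mu$ being the conjugate parameter defined by $\mu e^{-\mu}=\lambda e^{-\lambda}$. This accounts for the third step of the construction. The key local computation is that, conditional on a vertex $v$ of $\mathrm{H}_n$ being fixed, each neighbor either belongs to the core or roots a $\mathrm{Pois}(\mu)$--Galton--Watson tree, independently up to asymptotically negligible correlations.

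Second, I would derive the degree sequence of the kernel. A vertex of $\mathrm{H}_n$ of degree $k$ in $G_n$ has core-degree equal to the number of its neighbors that lie on a cycle of $\mathrm{H}_n$; by the Poisson thinning each such neighbor contributes independently with probability $\approx 1-\mu/\lambda$, so the core-degrees are approximately $\mathrm{Pois}(\lambda-\mu)$. The $\mathcal{N}(\lambda-\mu,1/n)$ fluctuation $\xi$ is inserted to absorb the $O(1/\sqrt n)$ randomness of the empirical giant parameter. Truncating to $\{D_i\geq 3\}$ gives the kernel degrees $(n_k)_{k\geq 3}$. Conditional on this degree sequence, the kernel is uniformly distributed among multigraphs with those degrees; this is the usual configuration-model argument, here reduced to a counting identity between ordered half-edge pairings and labeled multigraphs. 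The parity event $\mathcal{A}$ is forced because the total degree in any multigraph must be even, and conditioning through $\mathcal A$ is the cheapest way to insert this parity constraint into the i.i.d.\ product.

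Third, the lengths of the subdivided paths of the core correspond to the numbers of consecutive degree-2 core-vertices along a pendant path. A core-vertex has degree 2 in $\core(\mathrm{H}_n)$ with probability $\mu$ conditional on being in the core (this is where $\mathrm{Geom}(1-\mu)$ enters: one stops the geometric clock the first time a degree-$\geq 3$ vertex appears), so each kernel edge becomes a $\mathrm{Geom}(1-\mu)$-length path independently of the others. Combined with the Galton--Watson trees from the first step, this reconstructs $\mathrm{H}_n$.

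The main obstacle is the contiguity statement itself: the three pieces listed above only determine the marginal distributions, and one still has to argue that the joint distribution of the three layers of $\mathrm{\tilde H}_n$ is asymptotically indistinguishable from that of $\mathrm{H}_n$. The standard way out is either small-subgraph conditioning \emph{\`a la} Robinson--Wormald (bounding moments of the number of short cycles in the kernel) or a direct Radon--Nikodym calculation between the probability measures on $\mathrm{H}_n$ and $\mathrm{\tilde H}_n$, showing that the likelihood ratio is tight. The Gaussian smearing of $\xi$ and the event $\mathcal{A}$ are both tailored to make this ratio bounded in probability, which is exactly the role they play in the statement. Once tightness of the Radon--Nikodym derivative is established, the implication ``$\mathbb{P}(\mathrm{\tilde H}_n\in\mathcal{F})=o(1)\Rightarrow\mathbb{P}(\mathrm{H}_n\in\mathcal{F})=o(1)$'' is immediate.
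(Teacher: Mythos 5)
First, a point of context: the paper does not prove this statement at all --- it is quoted verbatim as a known result of Ding, Lubetzky and Peres \cite{DLP_anatomy}, so there is no internal proof to compare against; the reference proof is a long argument in its own right. Judged on its own merits, your proposal correctly identifies the anatomy of the giant component (kernel of degree $\geq 3$ vertices, kernel edges subdivided into pendant paths, trees hung on the core) and correctly assigns the roles of the conjugate parameter $\mu$, the $\mathrm{Geom}(1-\mu)$ path lengths, the $\mathrm{Pois}(\mu)$ Galton--Watson trees, the Gaussian smearing $\xi$, and the parity event $\mathcal{A}$. But that is a description of \emph{what} the theorem asserts, not a proof of it.

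The genuine gap is the one you yourself flag as ``the main obstacle'': the contiguity statement is the entire content of the theorem, and your sketch only establishes (heuristically) that the marginal laws of the three layers match. Saying that one could finish ``either by small-subgraph conditioning or by a direct Radon--Nikodym calculation showing the likelihood ratio is tight'' names a strategy without executing any of it; in the actual argument this step requires precise local limit theorems for the joint law of the core's size, edge count and degree sequence (the place where the $\mathcal{N}(\lambda-\mu,1/n)$ fluctuation and the parity conditioning must be shown to reproduce the true joint fluctuations, not merely to be ``tailored'' for it), plus a verification that, conditionally on the degree sequence, the kernel is uniform among multigraphs and the path lengths and trees are conditionally independent with the stated laws. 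None of these quantitative comparisons appears in your write-up, and several of your local computations (e.g.\ ``core-degrees are approximately $\mathrm{Pois}(\lambda-\mu)$ by Poisson thinning'') ignore the conditioning on belonging to the giant component, which is exactly the kind of error the contiguity machinery exists to control. As it stands the proposal is an accurate reading of the theorem's structure but not a proof; to use it you would either have to carry out the likelihood-ratio analysis in full or simply cite \cite{DLP_anatomy}, as the paper does.
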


\begin{theorem}[J. Ding, J. H. Kim, E. Lubetzky, Y. Peres~\cite{DKLP:anatomy}]
Assume that $pn=1+\delta_n$, where $n^{-1/3}\ll\delta_n=o(1)$, $G_n\sim G(n,p)$. Let $\mathrm{H}_n$ be the largest component of $G_n$. Let $\mu$ be the unique number in $(0,1)$ such that $\mu e^{-\mu}=pn e^{-pn}$.  Let $\xi\sim\mathcal{N}\left(pn-\mu,\frac{1}{n\delta_n}\right)$. Consider a random vector $(\eta_1,\ldots,\eta_n)$ consisting of independent $\mathrm{Pois}(\xi)$ random variables. Denote by $\mathcal{A}$ the event that $\sum\eta_i\1_{\eta_i\geq 3}$ is even. For $i\in[n]$, let $D_i=\frac{1}{\mathbb{P}(\mathcal{A})}\eta_i \1_{\mathcal{A}}$. For every $k\geq 3$, set $n_k=\sum_i \1_{D_i=k}$, $N=\sum_{k\geq 3}n_k$.
Let $\mathrm{\tilde H}_n$ be a random graph constructed in the following way:
\begin{itemize}
\item Define a uniformly random multigraph $\mathrm{\tilde K}_n$ on $[N]$ such that, for every $k\geq 3$, it has exactly $n_k$ vertices of degree $k$.
\item For every edge of $\mathrm{\tilde K}_n$, generate independently a $\mathrm{Geom}(1-\mu)$ random variable $\zeta$, and replace this edge with a path of length $\zeta$.
\item Attach an independent $\mathrm{Pois}(\mu)$--Galton--Watson tree to each vertex.
\end{itemize}
Then, for every set of graphs $\mathcal{F}$, if $\mathbb{P}(\mathrm{\tilde H}_n\in\mathcal{F})=o(1)$, then $\mathbb{P}(\mathrm{H}_n\in\mathcal{F})=o(1)$ as well.
\label{th:contiguous_critical}
\end{theorem}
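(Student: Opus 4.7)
The plan is to couple the true distribution of $\mathrm{H}_n$ with the constructed $\mathrm{\tilde H}_n$ by decomposing $\mathrm{H}_n$ into three layers --- the kernel degree sequence, the kernel multigraph, and the attached paths and trees --- and to match the law of each layer up to total-variation error $o(1)$. The guiding principle, standard since \L uczak's work on the 2-core, is that conditional on enough natural statistics the object becomes uniform, so the hard work is really about pinning down the distribution of those statistics.

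First, I would peel off the pendant trees. Using a BFS exploration whose frontier is rooted at the core, one shows that conditional on $\mathrm{core}(\mathrm{H}_n)$ together with its embedding, the components of $G_n\setminus \mathrm{core}(\mathrm{H}_n)$ planted at each core vertex are independent subcritical Galton--Watson trees with offspring law $\mathrm{Pois}(\mu)$, where $\mu$ is the dual parameter satisfying $\mu e^{-\mu}=pn\, e^{-pn}$. This matches the tree-planting step in the construction of $\mathrm{\tilde H}_n$. Next, I would peel off the degree-2 chains of the core. Fixing the kernel $\mathrm{K}_n$ together with its vertex set, the core is recovered by subdividing each kernel edge with some number of degree-2 vertices; the same exploration interpretation shows that each path length is asymptotically $\mathrm{Geom}(1-\mu)$ and that the path lengths are conditionally independent given $\mathrm{K}_n$, matching the geometric-subdivision step.

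Finally, I would analyze the kernel itself. Conditional on its degree sequence $(n_k)_{k\ge 3}$, the kernel is uniformly distributed on multigraphs realizing that sequence --- this is the usual configuration-model-style fact that follows from the symmetry of the exploration, and it matches the definition of $\mathrm{\tilde K}_n$. For the degree sequence, a local CLT for the exploration random walk shows that the counts $n_k$ behave like the empirical counts obtained from $n$ iid $\mathrm{Pois}(\xi)\1_{\cdot\ge 3}$ variables, where $\xi$ is Gaussian around $pn-\mu$ with the stated variance $1/(n\delta_n)$; the Gaussian fluctuation of $\xi$ captures the CLT-scale randomness of the giant's size, while the parity conditioning on $\mathcal{A}$ reflects that the total degree of any multigraph is even.

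The main obstacle is the critical regime itself: since $pn-1=\delta_n=o(1)$, the kernel has only $\Theta(n\delta_n^3)$ vertices, the pendant paths have length $\Theta(1/\delta_n)$, and the giant has size $\Theta(n\delta_n)$, so the standard concentration estimates available for the strictly supercritical regime degrade. To control the error, one must sharpen the exploration analysis by tracking second-order terms in the CLT for the depletion-of-points random walk \`a la Aldous, and verify that each of the three layers of the decomposition contributes only $o(1)$ to the total-variation distance between $\mathrm{H}_n$ and $\mathrm{\tilde H}_n$. Contiguity, rather than convergence in distribution, is the appropriate statement precisely because these critical fluctuations prevent an exact match of measures.
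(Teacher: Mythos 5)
This theorem is not proved in the paper you are reading: it is quoted verbatim as a black box from Ding--Kim--Lubetzky--Peres~\cite{DKLP:anatomy}, so there is no internal argument to compare against. Your layered decomposition---prune $\mathrm{Pois}(\mu)$ Galton--Watson trees from the core, collapse degree-two chains into $\mathrm{Geom}(1-\mu)$ edge lengths, observe that the kernel is uniform conditional on its degree sequence, and model that degree sequence by conditioned Poisson counts with a Gaussian-fluctuating rate $\xi$ and a parity constraint---is indeed the high-level strategy of the cited paper, so as a blind reconstruction of the source it is on the right track.

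One substantive caveat: your plan is internally inconsistent about what is being proved. You open by proposing to match each layer of the decomposition in total variation up to $o(1)$, and then close by observing (correctly) that the appropriate notion for this theorem is contiguity, not TV closeness. These are not the same: TV distance $o(1)$ would give two-sided contiguity for free, but the statement being proved is one-sided, and the DKLP argument does not establish TV closeness of $\mathrm{H}_n$ and $\mathrm{\tilde H}_n$---it bounds likelihood ratios and proves tightness, which is strictly weaker and is exactly what the critical-window fluctuations permit. Aiming at TV $o(1)$ sets up a target the argument cannot deliver, and if you wrote this out carefully the degree-sequence layer would be where the TV bound breaks down. The remedy is to replace ``each layer contributes $o(1)$ in TV'' by ``the Radon--Nikodym derivative of one layered law with respect to the other is stochastically bounded,'' and track that quantity through the same three layers; this is what the cited proof in fact does.
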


From these two theorems, we derive the following.

\begin{claim}
Let $\delta>0$ be a constant, $n^{-1/3}\ll\delta_n:=pn-1\leq\delta$, and $G_n\sim G(n,p)$. Then whp in $K(G_n)$ there are no complex subgraphs of size at most $(\ln(n\delta^3_n))^{3/4}$.
\label{cl:kernel_complex}
\end{claim}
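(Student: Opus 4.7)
I would transfer the claim from $K(G_n)$ to the random multigraph $\tilde K_n$ of the contiguous models, and then execute a first-moment computation in the configuration model. If $\delta_n=o(1)$, I apply Theorem~\ref{th:contiguous_critical}; if $\delta_n$ is bounded away from $0$, I apply Theorem~\ref{th:contiguous_super_critical} (passing to a subsequence along which $\delta_n$ converges to a constant in $(0,\delta]$ if necessary). Since subdividing the edges of $\tilde K_n$ and then planting independent Galton--Watson trees at every vertex does not change the kernel, we have $K(\tilde H_n)=\tilde K_n$, so it suffices to prove that whp $\tilde K_n$ contains no complex subgraph of size at most $s:=(\ln(n\delta_n^3))^{3/4}$.

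First I would record the typical properties of the degree sequence of $\tilde K_n$. Since the $\eta_i$'s are essentially independent $\mathrm{Pois}(2\delta_n)$ variables and $n\delta_n^3\to\infty$, standard Chernoff-type concentration yields, whp,
\[
N:=\sum_{k\ge 3}n_k=\Theta(n\delta_n^3),\qquad 2M:=\sum_k k n_k=\Theta(n\delta_n^3),
\]
and the degree moments $S_r:=\sum_v d_v^r$ satisfy $S_r=\Theta(N)$ for each fixed $r$, with constants depending only on~$\delta$. In particular, the configuration-model branching factor $c_0:=S_2/S_1$ is $\Theta(1)$. Denote the favourable event for the degree sequence by $\mathcal{E}$.

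Conditioned on $\mathcal{E}$, $\tilde K_n$ is a uniformly random pairing of the $2M$ half-edges. A standard configuration-model first-moment computation (see, e.g., \cite[Corollary~5]{Gao}) then shows: a connected labelled $(k,k+\ell)$-multigraph (with loops and multi-edges allowed) can be enumerated by choosing a spanning tree on $[k]$ together with $\ell+1$ further (multi-)edges, giving at most $k^{k+2\ell+1}$ such structures; and each embeds in $\tilde K_n$ with probability $O\bigl((c_0/N)^{k+\ell}\bigr)$ times a bounded loop/multiplicity factor. Combined with $\binom{N}{k}\le(eN/k)^k$, the expected number of complex subgraphs of $\tilde K_n$ on exactly $k$ vertices is bounded by
\[
\sum_{\ell\ge 1}k(ec_0)^k\bigl(k^2 c_0/N\bigr)^{\ell} \;\lesssim\; \frac{k^3 C^k}{N}
\]
for a constant $C=C(\delta)$, provided $k^2/N=o(1)$, which certainly holds for $k\le s$.

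Finally, summing over $3\le k\le s$ gives $O(s^4 C^s/N)$, which tends to $0$ because $s\log C=O(s)=O\bigl((\ln N)^{3/4}\bigr)=o(\ln N)$. The main obstacle will be the careful bookkeeping around multi-edges and loops, since small complex multigraph configurations (two loops at a single vertex, three parallel edges between two vertices, a loop plus a double edge, etc.) are themselves complex subgraphs on one or two vertices and must be enumerated on their own, and one has to verify that the typical-degree estimate $S_r=\Theta(N)$ is robust enough to absorb the contribution of rare high-degree vertices in~$\tilde K_n$.
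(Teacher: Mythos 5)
Your overall plan (transfer to the contiguous kernel $\mathrm{\tilde K}_n$, then a first-moment count conditioned on a typical degree sequence) is a legitimate and genuinely different route from the paper, which instead fixes the degree sequence and runs a \emph{switching} argument: for each vertex $v$ and each BFS tree $T$ of depth $d$ rooted at $v$, it compares the number of multigraphs containing two prescribed extra edges inside $T$ with the number containing neither, obtaining a ratio $O(\Delta^6/N^2)$, and then union-bounds over the at most $\Delta^{4d+4}$ choices of the two extra edges. However, as written your proposal has a genuine gap exactly at its central displayed estimate. In the configuration model the probability that a prescribed copy of a multigraph $H$ with degree sequence $h_1,\dots,h_k$ sits on prescribed vertices $v_1,\dots,v_k$ is of order $\prod_i d_{v_i}^{h_i}/(2M)^{k+\ell}$, not uniformly $O\bigl((c_0/N)^{k+\ell}\bigr)$ with $c_0=S_2/S_1$; high-degree vertices are far more likely to host the subgraph. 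Consequently, multiplying a per-embedding bound by $\binom{N}{k}$ is not valid, and the correct summation over vertex placements produces $\prod_i S_{h_i}$, where the internal degrees $h_i$ of a complex subgraph enumerated via ``spanning tree plus $\ell+1$ extra edges'' can grow with $k$ (up to order $k\sim(\ln N)^{3/4}$). Your hypothesis ``$S_r=\Theta(N)$ for each fixed $r$'' therefore does not cover the terms you actually need, and the issue you defer to ``bookkeeping'' (absorbing rare high-degree vertices) is precisely the crux, not a routine verification.

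The gap is repairable, but it needs an extra idea you did not supply. One fix is the reduction the paper itself uses for $G(n,p)$ in Claim~\ref{cl:complex_components_sizes}: every complex subgraph contains a complex subgraph that is a subdivision of one of the minimal patterns (two cycles joined by a path, two cycles sharing a path, or a figure-eight), whose internal degrees are at most $4$; then only $S_2,S_3,S_4=\Theta(N)$ enter, the expected count on $k$ vertices is $O(k^2 C^k/N)$, and summing over $k\le(\ln(n\delta_n^3))^{3/4}$ gives $o(1)$. Alternatively one can keep your enumeration but invoke the whp maximum-degree bound $\Delta\le 2\ln n/\ln(1+1/\delta_n)$ and the estimate $S_h\le\Delta^{h-1}S_1$, which leads to a factor $(k\Delta)^{k+2\ell}$; this still works, but only after checking that $s\ln(s\Delta)=o(\ln N)$ uniformly over the whole range $n^{-1/3}\ll\delta_n\le\delta$ (the point being that when $\ln N=\ln(n\delta_n^3)$ is small, $\Delta$ is automatically $O(1)$), a regime analysis that is absent from the proposal. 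With either repair, your first-moment argument yields the claim by a route different from, and arguably more elementary than, the paper's switching proof, though the switching proof has the advantage of needing only the maximum degree and no control of higher degree moments.
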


\begin{proof}
Here, we consider the contiguous models from Theorems~\ref{th:contiguous_super_critical}~and~\ref{th:contiguous_critical}. By the union bound, the maximum degree in $\mathrm{\tilde K}_n$ is at most 
$$
\Delta:=\frac{2\ln n}{\ln(1+1/\delta_n)}.
$$
Also, whp it has $N=\Theta(n\delta_n^3)$ vertices.
Expose $\eta_1,\ldots,\eta_n$ and define the degree sequence of the kernel. Assume that the maximum degree is at most $\Delta$ and the number of vertices is $N=\Theta(n\delta_n^3)$. Let $\mathcal{G}$ be the set of multigraphs on $[N]$ with the specified degree sequence. Let 
$$
d:=\left\lceil(\ln(n\delta^3_n))^{3/4}\right\rceil.
$$
Due to Theorems~\ref{th:contiguous_super_critical}~and~\ref{th:contiguous_critical}, it is enough to show that whp there are no complex subgraphs of size at most $d$ in $\mathrm{\tilde K}_n$.

Fix a vertex $v$ in the kernel.
For any tree $T$ on a subset of $[N]$ of depth $d$ rooted in $v$ and maximum degree at most $\Delta$, let us consider the family $\mathcal{G}_T$ of all multigraphs in $\mathcal{G}$ such that $T$ is the full BFS tree of depth $d$ emanating from $v$. Fix two pairs of vertices $\{x_1,y_1\},\{x_2,y_2\}\in (V(T))^2$. Let us consider a partition $\mathcal{G}_T^0\sqcup\mathcal{G}_T^1=\mathcal{G}_T$ such that all multigraphs in $\mathcal{G}_T^1$ have both edges $\{x_1,y_1\}$ and $\{x_2,y_2\}$, while all multigraphs in $\mathcal{G}_T^0$ do not have neither of these edges.

Take any $G\in \mathcal{G}_T^1$. Note that it has at least $3N/2-\Delta^{d+2}>N$ edges that are entirely outside $V(T)$. Let $\{x'_1,y'_1\}$ and $\{x'_2,y'_2\}$ be two such edges. Clearly, we may switch the edges
$$
  \{x_1,y_1\},\{x_2,y_2\},\{x'_1,y'_1\},\{x'_2,y'_2\}\,\rightarrow\,\{x_1,x'_1\},\{x_2,x'_2\},\{y_1,y'_1\},\{y_2,y'_2\}
$$
and get the multigraph with edges $\{x_1,x'_1\},\{x_2,x'_2\},\{y_1,y'_1\},\{y_2,y'_2\}$ from $\mathcal{G}_T^0$. Thus, taking into account repetitions, $G$ corresponds to at least $\frac{N(N-1)}{2\Delta^2}$ multigraphs from $\mathcal{G}_T^0$. On the other hand, each multigraph from $\mathcal{G}_T^0$ corresponds to at most $\Delta^4$ multigraphs from $\mathcal{G}_T^1$ with edges $\{x_1,y_1\}$ and $\{x_2,y_2\}$. Indeed, for every $G\in\mathcal{G}_T^0$ we can switch at most $\Delta^2$ pairs of edges going from $x_1$ and $y_1$ to $[N]\setminus V(T)$ and at most $\Delta^2$ pairs of edges going from $x_2$ and $y_2$ to $[N]\setminus V(T)$.

We conclude that 
$$
|\mathcal{G}_T^1|\leq\frac{2\Delta^6}{n'(n'-1)}|\mathcal{G}_T^0|.
$$
Note that the number of choices of the pair $\{x_1,x_2\},\{y_1,y_2\}$ is at most $|V(T)|^4\leq\Delta^{4d+4}$.
Denoting by $\mathbf{T}_v$ the full BFS subtree of depth $d$ in $\mathrm{\tilde K}_n$ rooted in $v$, we conclude that the probability that in $\mathrm{\tilde K}_n$ there is a complex subgraph of size at most $d$ is at most
$$
 \sum_v\sum_T\mathbb{P}(\mathrm{\tilde K}_n\in\mathcal{G}_T^1\mid \mathbf{T}_v=T)\mathbb{P}(\mathbf{T}_v=T)\leq N\Delta^{4d+4}\frac{2\Delta^6}{N(N-1)}=o(1),
$$
completing the proof.
\end{proof}

\subsection{Proof of Main Lemma (Lemma~\ref{th:giant_main})}
\label{sc:proof_main}

 We consider separately large $p$ (supercritical phase) and small $p$ (critical phase). In both cases, we specify good sets of vertices and show that all vertices from good sets are distinguished by CR. This is done in Section~\ref{sc:proof2_large} for large $p$ and Section~\ref{sc:proof2_small} for small $p$. Finally, in Section~\ref{sc:proof3} we complete the proof:
  we show that distinguishing between good vertices in the core is sufficient to distinguish between all pairs in the core that are not interchangeable.

\subsubsection{Distinguishing good vertices in the core in the supercritical and strictly supercritical phases}
\label{sc:proof2_large}

In this section, we let $p=p(n)$ be such that $\gamma\geq np=1+\omega(n^{-1/3})$ for some constant $\gamma>1$. Denote $\delta_n:=np-1$. We also let $\mathrm{K}_n$ and $\mathrm{C}_n$ be the kernel and the core of the giant component of $G_n\sim G(n,p)$. Let $C^{\mathrm{core}}$ be the coloring produced by CR on $\mathrm{C}_n$.

We assign to every edge $e$ of $\mathrm{C}_n$ the weight $1/\ell$, where $\ell-1$ is the number of vertices that subdivide the edge of the kernel $e$ belongs to. The weight of a path is the sum of weights of its edges. For $u,v\in V(\mathrm{C}_n)$, let $d^f(u,v)$ be the {\it fractional distance} between $u$ and $v$, i.e. the minimum weight of a path between $u$ and $v$. We denote the respective metric space by $\mathcal{M}_n$.

Fix a positive real $s$. Let $D_s$ be the set of all $v\in V(\mathrm{C}_n)$ such that the ball around $v$ in $\mathcal{M}_n$ of radius $s$ induces an acyclic graph.
 For every vertex $v\in D_s$ and  integer $r<s$, let $\mathcal{P}_{r}(v)$ be the multiset of lengths of edge-disjoint paths from $\mathrm{C}_n$ that are produced by subdividing edges $\{x,y\}\in E(\mathrm{K}_n)$, where $d^f(x,v)\leq r$ while $d^f(y,v)>r$.
 We will require the following assertions.

\begin{claim}
Let $s\geq 0.6(\ln(\delta^3_n n))^{2/3}$. Whp for any two different $u,v\in D_s\cap V(\mathrm{K}_n)$, there exists an integer $r\leq 0.5(\ln(\delta^3_n n))^{2/3}$ such that the multisets $\mathcal{P}_{r}(u)$ and $\mathcal{P}_{r}(v)$ are different.
\label{cl:supercritical_main}
\end{claim}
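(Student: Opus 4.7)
The plan is to transfer the question to the contiguous model provided by Theorems~\ref{th:contiguous_super_critical} and~\ref{th:contiguous_critical}: the kernel $\mathrm{K}_n$ is a uniform random multigraph on $N=\Theta(n\delta_n^3)$ vertices with a prescribed degree sequence (minimum degree $3$, and whp maximum degree $\Delta=O(\ln n/\ln(1+1/\delta_n))$), each kernel edge being independently subdivided by a $\mathrm{Geom}(1-\mu)$ number of extra vertices, with $1-\mu=\Theta(\delta_n)$. It is enough to prove the claim in this model. The key combinatorial reformulation is that for $v\in D_s\cap V(\mathrm{K}_n)$ with $s>R:=\lfloor 0.5(\ln(n\delta_n^3))^{2/3}\rfloor$, the data $(\mathcal{P}_r(v))_{r=0,1,\ldots,R}$ is equivalent to the rooted edge-labeled BFS tree $\mathcal{T}_v$ induced in $\mathrm{K}_n$ by the kernel vertices within fractional distance $\leq R$ of $v$, each kernel edge labeled by the length of its subdivided path in $\mathrm{C}_n$. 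The hypothesis $v\in D_s$ guarantees that $\mathcal{T}_v$ is indeed a tree, so the statement reduces to showing that whp $v\mapsto \mathcal{T}_v$ is injective on $D_s\cap V(\mathrm{K}_n)$.

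The next step is a lower bound on the size of $\mathcal{T}_v$: whp every $v\in D_s\cap V(\mathrm{K}_n)$ satisfies $|E(\mathcal{T}_v)|\geq M$ for some $M\gg \ln N$. This relies on the minimum degree $3$ of $\mathrm{K}_n$ (forcing branching at least $2$, and hence exponential kernel BFS growth) together with a tail estimate on the geometric subdivision lengths showing that the fractional ball of radius $R$ reaches deep into the kernel. The slack $s-R=0.1(\ln(n\delta_n^3))^{2/3}$ is what makes this control uniform in $v$.

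The argument then proceeds by a union bound over unordered pairs $u\neq v\in V(\mathrm{K}_n)$. In the \emph{far} case $d^f(u,v)>2R$ the trees $\mathcal{T}_u,\mathcal{T}_v$ are vertex-disjoint; conditioning on the kernel skeleton and the degree sequence, the subdivision labels on the two trees are independent, so the probability that they realise a common rooted labeled isomorphism type is bounded above by $(1-\mu)^{M}=\exp(-\Omega(\delta_n M))$, comfortably beating $N^{-3}$. In the \emph{close} case $d^f(u,v)\leq 2R$, Claim~\ref{cl:kernel_complex} excludes complex subgraphs of the relevant size in $\mathrm{K}_n$, so the combined BFS region is itself a tree; any isomorphism $\mathcal{T}_u\cong\mathcal{T}_v$ induces a nontrivial rooted symmetry of this tree, which again forces many independent geometric labels to coincide and yields an analogous bound. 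Summing over the $O(N^2)$ pairs gives $o(1)$.

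The hardest part will be the uniform lower bound on $|E(\mathcal{T}_v)|$: one must simultaneously control the kernel BFS growth and the heavy tail of the geometric subdivision lengths, with only a factor-$1.2$ gap between $s$ and $R$. The close-pair analysis is the secondary difficulty: one must classify the admissible rooted symmetries of the merged tree and check that each enforces enough independent label coincidences to survive the $O(N^2)$ union bound.
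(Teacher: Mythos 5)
There is a genuine gap at your ``key combinatorial reformulation.'' The sequence of multisets $(\mathcal{P}_r(v))_{r\le R}$ is \emph{strictly weaker} information than the rooted edge-labeled BFS tree $\mathcal{T}_v$: the multiset $\mathcal{P}_r(v)$ records only the lengths of the subdivided kernel edges crossing the sphere of radius $r$, not which sphere vertex they attach to, so it does not determine the tree. (Toy example: $v$ has children $a,b$ via edges of lengths $2,3$; attaching further edges of lengths $\{5,7\}$ at $a$ and $\{11,13\}$ at $b$, versus $\{11,13\}$ at $a$ and $\{5,7\}$ at $b$, gives non-isomorphic labeled trees with identical $\mathcal{P}_0,\mathcal{P}_1$.) Consequently the bad event you must rule out, namely $\mathcal{P}_r(u)=\mathcal{P}_r(v)$ for \emph{all} $r\le R$, strictly contains the event $\mathcal{T}_u\cong\mathcal{T}_v$; proving that $v\mapsto\mathcal{T}_v$ is injective (which is what your far/close case analysis bounds) does not imply the claim. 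Your union bound is therefore applied to the wrong (smaller) event, and the whole reduction collapses. A secondary weakness, even for the tree-isomorphism event, is that the bound $(1-\mu)^{M}$ ignores the union over the (possibly exponentially many) isomorphisms between the two label-stripped BFS trees, and the per-edge label collision probability $\frac{1-\mu}{1+\mu}$ is merely a constant (possibly larger than $1/2$ when $\gamma$ is large), so that union bound would not close as stated.

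The paper avoids exactly this trap by working with the multisets directly: after exposing the kernel, it generates the geometric subdivision lengths sphere by sphere (first the edges leaving $\mathcal{B}^{\mathrm{\tilde K}_n}_{r+1}(u)$, then those leaving $\mathcal{S}^{\mathrm{\tilde K}_n}_{r+1}(v)$), and notes that once the $u$-side is exposed, the event $\mathcal{P}_{r+1}(u)=\mathcal{P}_{r+1}(v)$ forces a fresh collection of at least $2^{r}$ i.i.d.\ geometric variables to realize a \emph{prescribed} multiset; a local limit (anti-concentration) estimate gives conditional probability $2^{-\Theta(r)}$ per radius. Multiplying over the disjoint spheres with $r$ ranging over an interval of length $\Theta\bigl((\ln(\delta_n^3 n))^{2/3}\bigr)$ yields a per-pair failure probability $\exp\bigl(-\Theta((\ln(\delta_n^3 n))^{4/3})\bigr)$, which beats the union bound over the $\Theta\bigl((\delta_n^3 n)^2\bigr)$ pairs because $4/3>1$; the case where $u,v$ are far apart in the kernel is handled the same way using disjoint balls around $u$ and $v$. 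If you want to rescue your plan, you would have to replace ``tree isomorphism'' by the multiset-agreement event throughout and prove an anti-concentration bound of this type for it — at which point you are essentially reproducing the paper's argument.
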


\begin{proof}
We fix $s\geq 0.6(\ln(\delta^3_n n))^{2/3}$ and let $D:=D_s$. We prove this claim in the contiguous models $\tilde G_n$, defined in Theorems~\ref{th:contiguous_super_critical},~\ref{th:contiguous_critical},  and then use these theorems to conclude that it also holds in $G_n$. So, in what follows, $\mathrm{\tilde K}_n=\mathrm{K}(\tilde G_n)$, $\mathrm{\tilde C}_n=\mathrm{C}(\tilde G_n)$,  and $\tilde D=D(\tilde K_n)$.

Let us expose $\mathrm{\tilde K}_n$ and let $u,v\in \tilde D\cap V(\mathrm{\tilde K}_n)$. Recall that whp $N=|V(\tilde K_n)|=\Theta(\delta^3_n n)$ due to Theorems~\ref{th:contiguous_super_critical},~\ref{th:contiguous_critical}. Assume first that the distance between $u$ and $v$ is at most $0.4(\ln (\delta^3_n n))^{2/3}$ in $\mathrm{\tilde K}_n$. Let $P$ be the shortest path between $u$ and $v$ --- it is unique due to the definition of $\tilde D$. Let $v'$ be a neighbor of $v$ in $\mathrm{\tilde K}_n$ that does not belong to $P$. Then, by the definition of $\tilde D$, 
$$
\left|\mathcal{S}^{\mathrm{\tilde K}_n}_r(v')\setminus\mathcal{B}^{\mathrm{\tilde K}_n}_r(v)\right|\geq 2^{r}
$$ 
for all $r\in \left[0.4(\ln (\delta^3_n n))^{2/3},0.5(\ln (\delta^3_n n))^{2/3}\right]$. Since $u,v\in\tilde D$, we have that $\mathcal{B}^{\mathrm{\tilde K}_n}_{s}(u)$ and $\mathcal{B}^{\mathrm{\tilde K}_n}_{s}(v)$ are trees. It immediately implies, that for every such~$r$, 
$$
\left|\mathcal{S}^{\mathrm{\tilde K}_n}_{r+1}(v)\setminus\mathcal{B}^{\mathrm{\tilde K}_n}_{r+1}(u)\right|\geq 2^{r}.
$$

We then generate subdivisions of the edges of the kernel from the definition of $\tilde G_n$ in the following order: for every $r=\left\lceil 0.4(\ln (\delta^3_n n))^{2/3}\right\rceil,\ldots,\left\lfloor 0.5(\ln (\delta^3_n n))^{2/3}\right\rfloor$, we, first, subdivide all edges growing from $\mathcal{B}^{\mathrm{\tilde K}_n}_{r+1}(u)$ outside of the ball, and then all edges growing from $\mathcal{S}^{\mathrm{\tilde K}_n}_{r+1}(v)$ outside of $\mathcal{B}^{\mathrm{\tilde K}_n}_{r+1}(v)$. Notice that all sets $\mathcal{S}^{\mathrm{\tilde K}_n}_{r+1}(v)$ are disjoint for different $r$. For every $r$, as soon as the edges that correspond to the vertex $u$ are subdivided, the event that $\mathcal{P}_{r+1}(u)=\mathcal{P}_{r+1}(v)$ immediately implies that the {\it random} multiset of lengths of paths from $\mathrm{\tilde C}_n$, that are produced by subdividing edges from $\mathrm{\tilde K}_n$ that grow from $\mathcal{B}_{r+1}^{\mathrm{\tilde K}_n}(v)$ outside, should be equal to a predefined value. This multiset has size at least $2^r$. Since the geometric random variables considered in Theorems~\ref{th:contiguous_super_critical},~\ref{th:contiguous_critical} do not have atoms with probability measure $1-o(1)$, the latter event has probability at most $2^{-\Theta(r)}$ due to the de Moivre--Laplace local limit theorem. Eventually, 
\begin{multline*}
\mathbb{P}\left(\mathcal{P}_{r+1}(u)=\mathcal{P}_{r+1}(v)\text{ for all 
}r\in\left[0.4(\ln (\delta^3_n n))^{2/3},0.5(\ln (\delta^3_n n))^{2/3}\right]\right)\leq\\
\leq\exp\left(-\Theta((\log (\delta^3_n n))^{4/3})\right).
\end{multline*}

Assume now that the distance between $u$ and $v$ is bigger than $0.4(\ln (\delta^3_n n))^{2/3}$ in $\mathrm{\tilde K}_n$. Then, by the definition of $\tilde D$, sets $\mathcal{B}^{\mathrm{\tilde K}_n}_{0.2(\ln (\delta^3_n n))^{2/3}}(v)$ and $\mathcal{B}^{\mathrm{\tilde K}_n}_{0.2(\ln (\delta^3_n n))^{2/3}}$ are disjoint and sets $\mathcal{S}^{\mathrm{\tilde K}_n}_{r}(v)$ have size at least $2^{r}$ for all $r\in[0.15(\ln (\delta^3_n n))^{2/3},0.2(\ln (\delta^3_n n))^{2/3}-1]$. As above, we get that $\mathcal{P}_{r}(u)=\mathcal{P}_{r}(v)$ for all $r\in\left[0.15(\ln (\delta^3_n n))^{2/3},0.2(\ln (\delta^3_n n))^{2/3}-1\right]$ with probability at most $\exp\left(-\Theta((\log (\delta^3_n n))^{4/3})\right)$.

The union bound over all pairs $u,v\in \tilde D$ and Theorems~\ref{th:contiguous_super_critical},~\ref{th:contiguous_critical} complete the proof.
\end{proof}

We now let 
$$
s^*:=\lfloor(\ln(\delta^3_n n))^{2/3}\rfloor, \,\,
D:=D_{s^*}.
$$

\begin{claim}
Whp for any distinct $u,v\in D$, $C^{\mathrm{core}}(u)\neq C^{\mathrm{core}}(v)$.
\label{cl:good_distinguish_large_p}
\end{claim}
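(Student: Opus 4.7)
The plan is to argue by contradiction. Suppose $u, v \in D$ are distinct and $C^{\mathrm{core}}(u) = C^{\mathrm{core}}(v)$. By Lemma~\ref{lem:UvsC}, the rooted universal covers of $\mathrm{C}_n$ at $u$ and $v$ are isomorphic. Since $u, v \in D_{s^*}$, each fractional $s^*$-ball $\mathcal{B}^{\mathcal{M}}_{s^*}(u), \mathcal{B}^{\mathcal{M}}_{s^*}(v)$ is an acyclic subgraph of $\mathrm{C}_n$, so the covering map restricts to an isomorphism onto it. It follows that these two rooted fractional balls, viewed as subgraphs of $\mathrm{C}_n$, are isomorphic as rooted graphs (in particular, with matching subdivided path lengths).

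From this I would derive a contradiction in three cases according to the degrees of $u, v$ in $\mathrm{C}_n$. If exactly one of $u, v$ lies in $V(\mathrm{K}_n)$, then the two roots have different degrees in $\mathrm{C}_n$ (at least $3$ versus exactly $2$), which is incompatible with the ball isomorphism. If both lie in $V(\mathrm{K}_n) \cap D$, the isomorphism forces $\mathcal{P}_r(u) = \mathcal{P}_r(v)$ for every integer $r \le s^*$, contradicting Claim~\ref{cl:supercritical_main}, which whp produces some $r \le 0.5(\ln(\delta_n^3 n))^{2/3} < s^*$ at which these multisets differ.

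The main case is when both $u$ and $v$ have degree $2$ in $\mathrm{C}_n$. Each then lies on a unique subdivided kernel edge between distinct endpoints $a_u \ne b_u$ (resp.\ $a_v \ne b_v$), since a self-loop endpoint would place $u$ on a pendant cycle of fractional length $1 \le s^*$ and thus force $u \notin D$. As $u, v$ are within fractional distance $1$ of these endpoints, all four endpoints lie in $D_{s^* - 1}$, and $s^* - 1 \ge 0.6(\ln(\delta_n^3 n))^{2/3}$ for $n$ large. The ball isomorphism matches the two branches at $u$ to those at $v$, inducing a bijection $\{a_u, b_u\} \to \{a_v, b_v\}$ together with an isomorphism of the surrounding fractional neighborhoods to depth at least $s^* - 1$. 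Applying the contrapositive of Claim~\ref{cl:supercritical_main} to each matched pair of kernel vertices forces them to coincide, so $\{a_u, b_u\} = \{a_v, b_v\}$ as sets. If $u, v$ lay on distinct subdivided edges between this common pair, their union would be a cycle of fractional length $2 \le s^*$ in $\mathrm{C}_n$ containing both vertices, contradicting $u, v \in D_{s^*}$. Hence $u, v$ sit on the same subdivided edge, and the distance-preserving isomorphism either matches $a_u \to a_u, b_u \to b_u$ (yielding $k_u = k_v$, so $u = v$) or swaps them (forcing the fractional neighborhoods at $a_u$ and $b_u$ to be isomorphic and hence $a_u = b_u$ via Claim~\ref{cl:supercritical_main}). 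Both possibilities are contradictions.

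The main obstacle will be the swap scenario for two degree-$2$ vertices on the same subdivided edge: resolving it requires detecting the asymmetry between the two kernel endpoints, which is precisely what Claim~\ref{cl:supercritical_main} provides. This application is possible because $u \in D_{s^*}$ guarantees that both endpoints lie in $D_{s^* - 1}$, and the choice $s^* = \lfloor(\ln(\delta_n^3 n))^{2/3}\rfloor$ ensures this is enough radius to meet the hypothesis of the claim for all large $n$.
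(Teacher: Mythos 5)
Your argument is correct and is essentially the paper's proof: both rest on Claim~\ref{cl:supercritical_main} (applied with $s=s^*$ and $s=s^*-1$) together with the universal-cover machinery of Lemma~\ref{lem:UvsC} (which the paper packages as Claims~\ref{cl:locally_trees_CR} and~\ref{cl:locally_trees_CR_2}), and both split into the case of kernel vertices and the case of degree-2 vertices of the core. The only difference is cosmetic and lies in the degree-2 case, where the paper propagates CR colors along the two subdivided paths using the already-established distinctness of the endpoint colors, whereas you identify the endpoints directly through the rooted-ball isomorphism and Claim~\ref{cl:supercritical_main}; the two arguments are interchangeable.
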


\begin{proof}
Assume that the assertion of Claim~\ref{cl:supercritical_main} holds for $s=s^*$ and $s=s^*-1$ deterministically. Let $u,v\in D\cap V(\mathrm{K}_n)$.  Let $B_u$ and $B_v$ be the subdivided versions of $\mathcal{B}_{s^*}^{\mathrm{K}_n}(u)$ and $\mathcal{B}_{s^*}^{\mathrm{K}_n}(v)$ in $\mathrm{C}_n$. Since $B_u\ncong B_v$ due to the conclusion of Claim~\ref{cl:supercritical_main}, we get that $C^{\mathrm{core}}(u)\neq C^{\mathrm{core}}(v)$ due to Claim~\ref{cl:locally_trees_CR_2}.

It remains to consider the case $v\in D\setminus V(\mathrm{K}_n)$ and $v\neq u\in D$. Assume towards contradiction that $C^{\mathrm{core}}(u)= C^{\mathrm{core}}(v)$. Then, both $u$ and $v$ have degree 2 in $\mathrm{C}_n$. In particular, $u\notin V(\mathrm{K}_n)$. Consider the edges $e_u,e_v$ of $\mathrm{K}_n$ that $u$ and $v$ subdivide. Let $P_u,P_v$ be the subdivided versions of $e_u,e_v$. Due to the assertion of Claim~\ref{cl:supercritical_main} applied to $s=s^*-1$, we get that all vertices of $\mathrm{K}_n$ from $e_u\cup e_v$ have different colors. On the other hand, by the definition of CR, the neighbors of $u$ should have exactly the same color as the neighbors of $v$. Thus, by induction, we get that the entire paths $P_u,P_v$ are colored identically. It may only happen if the endpoints of $P_u$ coincide with the endpoints of $P_v$. By the definition of $D$, it means that $P_u=P_v=:P$. Since the endpoints of $P$ are colored in different colors, it can be easily shown by induction that all vertices in $P$ are also colored in different colors. Thus, $u=v$ --- a contradiction.
\end{proof}

\subsubsection{Distinguishing good vertices in the core in the critical regime}
\label{sc:proof2_small}

Let $A$ be a large positive number. Let 
$$
1-n^{-1/3}\ln n\leq pn=1+o(1),\quad G_n\sim G(n,p).
$$
Whp any complex component in $G_n$ has size at least $100A\ln n$ due to Claim~\ref{cl:complex_components_sizes}.

Let us say that a path $u_1\ldots u_k$ {\it extends} the path $v_1\ldots v_k$ if, for some $i\in\{2,\ldots,k\}$ the sets $\{v_1,\ldots,v_{i-1}\}$ and $\{u_{k-i+2},\ldots,u_k\}$ are disjoint and $u_1=v_i,\ldots,u_{k-i+1}=v_k$. For convenience, we assume that this notion is closed under rotations of paths, i.e. if $u_1\ldots u_k$ extends $v_1\ldots v_k$, then it also extends $v_k\ldots v_1$ and we also say that $u_k\ldots u_1$ extends both $v_1\ldots v_k$ and $v_k\ldots v_1$ in this case. We call two paths $v_1\ldots v_k$ and $u_1\ldots u_k$ {\it weakly disjoint}, if they are either vertex-disjoint or one paths extends the other one.

\begin{claim}
Whp in $G_n$ there are no two weakly disjoint paths $v_1\ldots v_k$ and $u_1\ldots u_k$ of length $k=\lfloor A\ln n\rfloor$ such that, for every $i\in\{2,\ldots,k-1\}$, $v_i$ has degree 2 if and only if $u_i$ has degree 2.
\label{cl:weakly_disjoint_paths}
\end{claim}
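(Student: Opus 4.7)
The plan is to apply a first moment argument to the number $X$ of ordered pairs of weakly disjoint paths $(v_1\ldots v_k,\,u_1\ldots u_k)$ in $G_n$ satisfying the degree-matching condition, and to show that $\mathbb{E}[X]=o(1)$ provided $A$ is a large enough constant. I would split $X$ according to the overlap pattern: the vertex-disjoint contribution, and for each extension parameter $i\in\{2,\ldots,k\}$ the contribution $X_i$ from pairs with $u_j=v_{i+j-1}$ whenever $j\leq k-i+1$. In the vertex-disjoint case the union of the two paths has $2k$ distinct vertices and $2k-2$ edges, so there are at most $n^{2k}$ labeled configurations, each appearing with probability $p^{2k-2}$. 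In the extension case the union has $k+i-1$ vertices and $k+i-2$ edges, giving at most $n^{k+i-1}$ configurations each present with probability $p^{k+i-2}$. Since $pn=1+o(1)$ and $k=O(\log n)$, the combined prefactors $n^{\#\mathrm{vertices}}p^{\#\mathrm{edges}}$ equal $n^{2+o(1)}$ and $n^{1+o(1)}$ respectively.

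The key estimate is that, conditional on the path edges being present, the degree-matching condition holds with probability at most $(c+o(1))^{\Omega(k)}$, where
\[
c:=e^{-2}+(1-e^{-1})^{2}=2e^{-2}-2e^{-1}+1<1.
\]
Indeed, at any single interior position $j$ with $v_j\neq u_j$ (which holds for every $j$ once $i\geq 2$, since $v_j=u_j$ would force $j=i+j-1$, i.e.~$i=1$), the non-path degree of $v_j$ is asymptotically $\mathrm{Poisson}(1)$ because $pn\to 1$, so $\mathbb{P}(\deg v_j=2)\to e^{-1}$ and the marginal probability $\mathbb{P}\bigl(\mathbbm{1}[\deg v_j=2]=\mathbbm{1}[\deg u_j=2]\bigr)$ tends to $c$. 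To upgrade this to a joint bound, I would pass to the auxiliary indicators $Y_j^v,Y_j^u$ of having no edge from $v_j$ (resp.~$u_j$) to $[n]\setminus(V(P_v)\cup V(P_u))$: these indicators are mutually independent across positions and across the two paths, and each has probability $(1-p)^{n-O(k)}\to e^{-1}$. The degree-matching event differs from the $Y$-matching event only through ``chord'' edges within $V(P_v)\cup V(P_u)$, and the expected number of such chords is $O(k^{2}/n)=o(1)$. In the extension case some constraints merge because $u_j\in V(P_v)$ for $j\leq k-i+1$; a direct inspection shows that the resulting constraint graph has maximum degree $2$, so it decomposes into paths and pendant edges and admits a matching of at least $\lceil(k-2)/2\rceil$ edges. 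This matching supplies $\Omega(k)$ genuinely independent constraints, each contributing a factor of $c+o(1)$ to the joint probability.

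Combining the three ingredients, and using that the sum over extension parameters $i$ absorbs into the $n^{o(1)}$ factor, one obtains
\[
\mathbb{E}[X]\;\leq\;n^{2-A\ln(1/c)+o(1)}+n^{1-(A/2)\ln(1/c)+o(1)},
\]
both of which are $o(1)$ as soon as $A>2/\ln(1/c)\approx 3.2$. The claim then follows by Markov's inequality. The main technical obstacle in carrying out this plan is the reduction from degree-matching to $Y$-matching: the chord edges inside $V(P_v)\cup V(P_u)$ create mild dependencies between the $k-2$ nominal constraints, and in the extension case one has to identify the surviving independent constraints after the identifications $u_j=v_{i+j-1}$. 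Both issues are controlled by the crude bound $k^{2}p=o(1)$ and by the structural fact that the merged constraint graph has maximum degree $2$, but setting this bookkeeping down carefully is the most delicate step of the proof.
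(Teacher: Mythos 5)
Your proposal is correct in outline and rests on the same first-moment skeleton as the paper's proof: split the pairs of paths by overlap type (vertex-disjoint, or extension with parameter $i$), bound the number of labeled configurations times the edge probability by $n^{2k}p^{2k-2}=n^{2+o(1)}$ resp.\ $n^{k+i-1}p^{k+i-2}=n^{1+o(1)}$, and beat these prefactors by showing the degree-matching pattern has probability $(\mathrm{const}<1)^{\Omega(k)}$. The executions of the key estimate differ. The paper conditions on the degrees along one path and pays a one-sided factor $\max\{(1-p)^{n-2k},1-(1-p)^n\}\le 1-e^{-1}+o(1)<2/3$ for each of $k-4$ positions, while you use the two-sided joint factor $c=e^{-2}+(1-e^{-1})^2$ on a matching of at least $(k-2)/2$ vertex-disjoint constraints extracted from the max-degree-$2$ constraint graph; both yield exponential decay in $k$ and suffice once $A$ is large. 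The paper also handles edges inside the union (``chords'') globally: it restricts the count to pairs of paths joined by at most two extra edges, which is legitimate because Claim~\ref{cl:complex_components_sizes} excludes complex subgraphs on at most $2k$ vertices whp, so only $O(1)$ constraint positions can be corrupted.

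The one place where your write-up, as phrased, would not survive literal implementation is the reduction from degree-matching to $Y$-matching. The observation that the expected number of chords is $O(k^2/n)=o(1)$ cannot be used additively: the bound $\mathbb{P}(\text{deg-match})\le\mathbb{P}(Y\text{-match})+\mathbb{P}(\exists\,\text{chord})$ contributes a term of order $k^2/n$ per pair, which against the $n^{2+o(1)}$ prefactor of the disjoint case leaves $n^{1+o(1)}$, not $o(1)$. You need the multiplicative version: condition on the chord set $S$ (given the path edges, its law is product Bernoulli over the $O(k^2)$ internal non-path pairs), note that each chord invalidates at most two of your vertex-disjoint matched constraints while the surviving ones remain independent coin flips of value $c+o(1)$, and conclude via $\mathbb{E}\bigl[(c+o(1))^{-2|S|}\bigr]=\exp\bigl(O(k^2p)\bigr)=1+o(1)$; alternatively, simply borrow the paper's device and invoke Claim~\ref{cl:complex_components_sizes} to cap the number of chords at two. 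With either fix your argument closes, so this is a repairable looseness in the bookkeeping you yourself flagged, rather than a structural flaw.
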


\begin{proof}
Due to Claim~\ref{cl:complex_components_sizes}, whp in $G_n$ there are no complex subgraphs with at most $2k$ vertices.

For a path $P=v_1\ldots v_k$ in $G_n$, let us consider a binary word $w(P)=(w_2,\ldots,w_{k-1})$ defined as follows: $w_i=1$ if and only if $v_i$ has degree $2$ in $G_n$. Notice that, if a path $u_1\ldots u_k$ extends the path $v_1\ldots v_k$ so that $u_1=v_i,\ldots,u_{k-i+1}=v_k$ and $w(v_1\ldots v_k)=w(u_1\ldots u_k)$, then $w(u_1\ldots u_k)$ is periodic and defined by $w(v_1\ldots v_{i+1})=(w_2,\ldots,w_i)$. 

Let $X$ be the number of pairs of paths as in the statement of the claim and such that there are at most 2 edges between the paths (we are allowed to assume this since there are no complex subgraphs of size at most $2k$). Fix two weakly disjoint path $\mathbf{v}=v_1\ldots v_k$ and $\mathbf{u}=u_1\ldots u_k$ and assume without loss of generality that either $\mathbf{u}$ extends $\mathbf{v}$, or they are disjoint. Let $i$ be such that $u_1=v_i$. If there is no such $i$, i.e. the paths are disjoint, set $i=k+1$. Then, expose edges from all $v_j$, $j\leq i$, and assume that they send at most 2 edges to $u_2,\ldots,u_{k-1}$, other than the edge $\{u_1,u_2\}$. Then, probability that for every $j\in\{2,\ldots,k-1\}$, $v_j$ has degree 2 if and only if $u_j$ has degree 2, is at most 
$$
\max\left\{(1-p)^{n-2k},(1-(1-p)^n)\right\}^{k-4}\leq \left(1-e^{-(1+o(1))}\right)^{k-4}=o\left(\left(\frac{2}{3}\right)^k\right).
$$
We then get
$$
 \mathbb{E}X\leq\sum_{i=2}^{k+1}n^{k+i-1}p^{k+i-3}\left(\frac{2}{3}\right)^k\leq k n^2 (1+o(1))^{2k}\left(\frac{2}{3}\right)^k=o(1),
$$
for an appropriate choice of $A$. Due to Markov's inequality, $\mathbb{P}(X\geq 1)\leq\mathbb{E}X=o(1)$, completing the proof.
\end{proof}

Let $D$ be the set of all vertices $v\in \mathrm{C}_n=\mathrm{core}(G_n)$ that belong to a complex component of $G_n$ and such that $B_v:=\mathcal{B}^{G_n}_{3A\ln n}(v)$ is a tree. Let $C$ be the coloring produced by CR on $G_n$.

\begin{claim}
Whp for any $u,v\in D$, $C(u)\neq C(v)$.
\label{cl:good_distinguish_small_p}
\end{claim}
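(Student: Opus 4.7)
The plan is a proof by contradiction, reducing to Claim~\ref{cl:weakly_disjoint_paths}. Suppose $u\neq v$ both lie in $D$ with $C(u)=C(v)$. Since $B_u$ and $B_v$ are trees, Claim~\ref{cl:locally_trees_CR} yields an isomorphism $\phi\colon B_u\to B_v$ of rooted trees with $\phi(u)=v$. The goal is to exhibit two weakly disjoint paths of length $k=\lfloor A\ln n\rfloor$ in $G_n$ whose degree-2 patterns coincide, contradicting Claim~\ref{cl:weakly_disjoint_paths}.

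Because $u$ lies in the core, $\deg_{G_n}(u)\geq 2$, so the root of $B_u$ has at least two children. For any length-$k$ path $P_u$ starting at $u$ in $B_u$, the image $P_v:=\phi(P_u)$ is a length-$k$ path starting at $v$ in $B_v$. Every vertex of $P_u$ sits at depth strictly less than $3A\ln n$ in $B_u$, so its $G_n$-degree coincides with its $B_u$-degree (all its neighbors lie in $B_u$), and analogously for $P_v$ in $B_v$. As $\phi$ is a graph isomorphism of rooted trees, $P_u$ and $P_v$ carry identical degree sequences in $G_n$, hence the same degree-2 pattern. It remains to choose $P_u$ so that $P_u$ and $P_v$ are weakly disjoint.

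If $d_{G_n}(u,v)\geq 2k$, any choice succeeds since the vertex sets of $P_u$ and $P_v$ lie in disjoint balls of radius $k$ around $u$ and $v$ respectively. If $d_{G_n}(u,v)<2k$, then $v\in B_u$, and I would pick $P_u$ to begin with an edge $(u,c)$ where $c$ is a child of $u$ in $B_u$ not on the unique $u$-to-$v$ path in $B_u$ (possible since $u$ has at least two children). Then $P_u$ lives in the subtree of $B_u$ rooted at $c$, which is disjoint from $v$. In the favorable sub-case where $\phi(c)$ is also not on the $v$-to-$u$ path in $B_v$, a tree-structure argument forces $P_u\cap P_v=\varnothing$: any shared vertex $x$ would yield two distinct $u$-to-$x$ paths (one through $c$ inside $B_u$, the other routing via $u\to v\to x$ inside $B_v$), creating a cycle inside a tree-region, a contradiction.

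The main obstacle is the residual sub-case in which $\phi(c)$ does lie on the $v$-to-$u$ path in $B_v$: then $P_v$ initially heads back toward $u$ and may even pass through it. I expect this to be the most delicate part of the proof. My approach would be to argue that such a reflection-like action of $\phi$ on the $u$-to-$v$ path forces $u$ and $v$ to be interchangeable on either a pendant cycle or a pair of transposable pendant paths of length at most $2k$, a structure that lies entirely inside $B_u$ since its diameter is at most $k<3A\ln n$. But such an interchangeable configuration already supplies two weakly disjoint length-$k$ paths with matching all-degree-2 pattern, namely the two halves of the pendant cycle or the two transposable pendant paths themselves, directly contradicting Claim~\ref{cl:weakly_disjoint_paths}. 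Thus in every sub-case we reach a contradiction.
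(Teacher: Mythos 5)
Your overall strategy (deduce a rooted-tree isomorphism $B_u\cong B_v$ from $C(u)=C(v)$ via Claim~\ref{cl:locally_trees_CR}, transfer the degree-2 pattern along a long path, and contradict Claim~\ref{cl:weakly_disjoint_paths}) is the same as the paper's, and your disjoint case and the ``favorable'' sub-case are essentially sound (modulo two small repairs: at $d_{G_n}(u,v)=2k$ the two balls of radius $k$ can still share a vertex, and ``$u$ has at least two children'' does not by itself give a length-$k$ path through the chosen child $c$ --- you need to pick $c$ among core-neighbours of $u$ and use that the core has minimum degree $2$ to run a non-backtracking walk of length $k$ inside the tree ball).

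The genuine gap is the residual sub-case, which is exactly the crux of the problem, and your proposed resolution does not work as sketched. First, the structure you invoke cannot exist where you place it: a pendant cycle, or a pair of transposable pendant paths sharing both endpoints, contains a cycle, and you claim it ``lies entirely inside $B_u$'' --- but $B_u$ is a tree by the definition of $D$, so no such configuration of diameter at most $k<3A\ln n$ can sit inside it. Second, even granting some interchangeable configuration, your final contradiction asserts that ``the two halves of the pendant cycle or the two transposable pendant paths themselves'' are weakly disjoint length-$k$ paths with matching degree-2 pattern; this conflates core-degrees with $G_n$-degrees. Claim~\ref{cl:weakly_disjoint_paths} concerns degrees in $G_n$, and the internal vertices of a pendant cycle have core-degree $2$ but arbitrary $G_n$-degrees coming from the hanging trees --- indeed those hanging trees are precisely what lets CR (run on $G_n$) separate vertices that are symmetric in the core, and symmetric pairs on long pendant cycles genuinely do occur and can lie in $D$. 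So the implication ``interchangeable $\Rightarrow$ contradiction with Claim~\ref{cl:weakly_disjoint_paths}'' is unjustified, and the implication ``$\phi(c)$ lies on the $v$--$u$ path $\Rightarrow$ interchangeable'' is only announced, not proved. The paper avoids this trap by working with \emph{two} paths of length $\lfloor 1.9A\ln n\rfloor$ emanating from $v$ in different directions (available since $v$ is a core vertex and the ball is a tree), taking their $\phi$-images from $u$, and observing that the image doubled path is cut by the original doubled path $P$ into at most three pieces, one of which has length exceeding $A\ln n$ and is weakly disjoint (here the ``extends'' clause of weak disjointness does the work for overlapping pieces) from a subpath of $P$ with aligned $G_n$-degree pattern, contradicting Claim~\ref{cl:weakly_disjoint_paths}. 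Some argument of this kind is needed to close your residual sub-case; as written, the proposal does not contain it.
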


\begin{proof}
Assume that the statement of Claim~\ref{cl:weakly_disjoint_paths} holds deterministically in $G_n$. Fix two different $u,v\in D$. Let us show towards contradiction that trees $B_v$ and $B_u$ are not isomorphic. 

Take an arbitrary path $v_1\ldots v_k$ of length $k:= \lfloor 1.9 A\ln n\rfloor$, where $v_1=v$. Since, by assumption, $B_v\cong B_u$, there exists a path $\mathbf{u}=u_1\ldots u_k$ such that $u_1=u$ and, for every $i\in\{2,\ldots,k-1\}$, $v_i$ has degree 2 if and only if $u_i$ has degree 2. In the same way, since $v\in\mathrm{core}(G_n)$ and $B_v$ is a tree, we may consider a path $v'_1\ldots v'_k$ that shares only the vertex $v'_1=v=v_1$ with $v_1\ldots v_k$. Since $B_u\cong B_v$, there should be a path $\mathbf{u}'=u'_1\ldots u'_k$ that shares with $u_1\ldots u_k$ the only vertex $u'_1=u=u_1$ and such that, for every $i\in\{2,\ldots,k-1\}$, $v'_i$ has degree 2 if and only if $u'_i$ has degree 2. Since $B_v$ is acyclic and since pairs of paths $v_1\ldots v_k,\, u_1\ldots u_k$ and $v'_1\ldots v'_k,\, u'_1\ldots u'_k$ cannot be disjoint due to Claim~\ref{cl:weakly_disjoint_paths},  $u$ must lie on the path $P:=v_k\ldots v_1\ldots v_k'$. Moreover, since $B_u$ is acyclic, once $\mathbf{u}$ or $\mathbf{u}'$ leave $P$, they never meet with $P$ again. Thus, the path $u_k\ldots u_1\ldots u_k'$ is divided by $P$ in at most 3 parts: the first part does not have common vertices with $P$, the second part is a subpath of $P$, and the third part does not have common vertices with $P$ again. Let $Q$ be the longest part of the three. Then $Q$ has length $\ell\geq\frac{1}{3}(2k-1)>A\ln n$. Moreover, since $\mathrm{deg}_{G_n}u=\mathrm{deg}_{G_n}v$ by assumption and $u\neq v$, there should be a subpath $P'\subset P$ such that $P'$ and $Q$ are weakly disjoint, and the degrees of internal vertices in $P'$ and $Q$ are aligned in the sense that the $i$-th inner vertex of $P'$ have degree 2 if and only if the $i$-th vertex of $Q$ has degree 2. This is impossible due to Claim~\ref{cl:weakly_disjoint_paths}.

Thus, $B_v\ncong B_u$ implying $C(u)\neq C(v)$ due to Claim~\ref{cl:locally_trees_CR}.
\end{proof}

\subsubsection{Completing the proof of Main Lemma (Lemma~\ref{th:giant_main})}
\label{sc:proof3}

Due to Claim~\ref{cl:CR_from_core_to_graph}, Claim~\ref{cl:good_distinguish_large_p}, and Claim~\ref{cl:good_distinguish_small_p} in order to prove Main Lemma, it remains to prove the following:
\begin{enumerate}
\item for $1+\omega(n^{-1/3})= pn\leq \gamma$,
\begin{itemize} 
\item for every  $v\in V(\mathrm{C}_n)\setminus D$ and $u\in D$, $C^{\mathrm{core}}(u)\neq C^{\mathrm{core}}(v)$; 
\item for any not interchangeable pair of different vertices $u,v\in V(\mathrm{C}_n)\setminus D$, $C^{\mathrm{core}}(u)\neq C^{\mathrm{core}}(v)$;
\end{itemize}
\item for $1-n^{-1/3}\ln n\leq pn=1+o(1)$,
\begin{itemize}
\item for every  $v\in V(\mathrm{C}_n)\setminus D$ and $u\in D$, $C(u)\neq C(v)$; 
\item for any not interchangeable pair of different vertices $u,v\in V(\mathrm{C}_n)\setminus D$, $C(u)\neq C(v)$.
\end{itemize}
\end{enumerate}

For the sake of brevity, below we prove both statements in two different regimes simultaneously. Thus, with some abuse of notation, in the supercritical phase (i.e., $1+\omega(n^{-1/3})= pn\leq \gamma$), since we only consider CR on $\mathrm{C}_n$, we let $G_n:=\mathrm{C}_n$ and $C:=C^{\mathrm{core}}$. We also assume that, when $1+\omega(n^{-1/3})= pn\leq \gamma$, the core is equipped with the fractional distance $d^f$, constituting the metric space $\mathcal{M}_n$. If $1-n^{-1/3}\ln n\leq pn\leq 1+o(1)$, then $G_n$ is equipped with the usual shortest-path distance, that we denote by $d^f$ as well.
We also use the following notation: $d:=\lfloor(\ln (\delta_n^3n))^{2/3}\rfloor$ when we prove the assertion for $1+\omega(n^{-1/3})= pn\leq \gamma$ and $d=\lfloor 3A\ln n\rfloor$ when we prove it for $1-n^{-1/3}\ln n\leq pn=1+o(1)$. In what follows, we assume that the assertions of Claim~\ref{cl:complex_components_sizes}, Claim~\ref{cl:kernel_complex}, Claim~\ref{cl:good_distinguish_large_p}, and Claim~\ref{cl:good_distinguish_small_p} hold deterministically in $G_n$.\\

{\bf 1.} Assume that some $v\notin D$ and $u\in D$ have $C(u)=C(v)$. We know that $v$ is $d$-close to a cycle $F$ of length at most $2d$. If $v\in V(F)$, then let $v'$ be the closest to $v$ vertex on $F$ that has degree more than 2 in the core. Otherwise, let $v'=v$. Let $P$ be the shortest path from $v'$ to $F$. Let us extend this path by a path $P'$ of length $10d$ beyond $v'$. Due to Claim~\ref{cl:complex_components_sizes} and Claim~\ref{cl:kernel_complex}, it has a subpath $w\ldots w'$ of length $5d$ consisting of vertices from $D$ only such that $d^f(w,v')\leq d$. We know that all elements of the vector $\mathbf{c}:=(C(w),\ldots,C(w'))$ are different. Then, due to our assumption, $u$ must have a vertex $z$ at distance at most $d^f(w,v)$ such that $z$ is the first vertex of a path $z\ldots z'$ with $C(w)=C(z)$, $\ldots,$ $C(w')=C(z')$. 

We now consider separately two cases: 1) $w=z$; 2) $w\neq z$. In the first case, we have that the distance from $u$ to the closest cycle (which is $F$ --- the same as for $v$) is at most 
$$ 
 d^f(w,u)+d^f(w,v')+d^f(v',F)\leq \text{length of }F+2(d^f(w,v')+d^f(v',F))\leq 6d.
$$
Let $P$ be the shortest path between $u$ and $v$. Due to Claim~\ref{cl:complex_components_sizes} and Claim~\ref{cl:kernel_complex}, there exists a path $u\tilde w\ldots \tilde w'$ of length $5d+1$ that does not meet $P$ and consists of vertices from $D$ only. Due to Claim~\ref{cl:good_distinguish_large_p} and Claim~\ref{cl:good_distinguish_small_p} all elements of the vector $\mathbf{\tilde c}:=(C(\tilde w),\ldots,C(\tilde w'))$ are different and no element of $\mathbf{\tilde c}$ equals to any element of $\mathbf{c}$. Moreover, by construction, $d^f(v,\tilde w)>d^f(u,\tilde w)$. Then, due to our assumption, $v$ must have a neighbor $\tilde z\neq\tilde w$ such that $\tilde z$ is the first vertex of a path $\tilde z\ldots \tilde z'$ with $C(\tilde w)=C(\tilde z)$, $\ldots,$ $C(\tilde w')=C(\tilde z')$. Note that $\tilde w\neq\tilde z,\ldots,\tilde w'\neq\tilde z'$ due to Claim~\ref{cl:complex_components_sizes} and Claim~\ref{cl:kernel_complex}. Since all vertices in $D$ are distinguished by $C(\cdot)$, we conclude that all vertices $\tilde z,\ldots,\tilde z'$ must be outside $D$. Due to Claim~\ref{cl:complex_components_sizes}, Claim~\ref{cl:kernel_complex}, and the definition of $D$, they constitute a (self-avoiding) path and are $d$-close to a cycle of length at most $2d$. Since the path has length $5d$, we get a contradiction with Claim~\ref{cl:complex_components_sizes} or Claim~\ref{cl:kernel_complex}.

We then assume $w\neq z$. It may only happen when $z\notin D$. Moreover, all $z,\ldots,z'$ are not in $D$. Indeed, otherwise, different paths $w\ldots w'$ and $z\ldots z'$ have common vertices. Then the path from $z$ to $F$ that goes through $w$ has length greater than $d$. However, due to Claim~\ref{cl:complex_components_sizes} and Claim~\ref{cl:kernel_complex}, there are no two different paths from $z$ to $F$, both of length at most $12d$ and, also, there is no other cycle $F'$ of length at most $2d$ such that a path from $z$ to $F'$ has at most $d$ vertices. This contradicts the fact that $z\notin D$.
 Thus, we again get a (self-avoiding) path consisting of vertices $z,\ldots,z'$ that are $d$-close to a cycle of length at most $2d$ --- contradiction with Claim~\ref{cl:complex_components_sizes} or Claim~\ref{cl:kernel_complex} again, since the path has length $5d$.

We conclude that every vertex $u\in D$ has $C(u)$ that does not equal to the color of any other vertex in the core.\\

{\bf 2.} It remains to prove that, for any two distinct $u,v\notin D$ that are not interchangeable, $C(u)\neq C(v)$. Fix two such vertices $u$ and $v$. We may assume that $T_u\cong T_v$ since otherwise $C(u)\neq C(v)$ due to Claim~\ref{lem:CGCH}. Let $F_u$ and $F_v$ be two cycles of length at most $2d$ that are closest to $u$ and $v$ respectively (both are at distance at most $d$ from the respective vertices). If $F_u\neq F_v$, then set $F:=F_u$. In this case, we let $u'=u$ when $u\notin V(F)$ and let $u'$ be the closest vertex of degree 3 in $F$ to $u$, otherwise. If $F_u=F_v=:F$, then, without loss of generality, we assume that either $u$ is not in $F$, or both $u,v$ are in $F$. Let $u'=u$ when $u\notin V(F)$ and let $u'$ be a vertex of $F$ that has degree at least 3 and such that $d^f(u,u')\neq d^f(v,u')$, otherwise. Note that such a vertex exists due to the definition of an interchangeable pair. Consider a path $P$ of length $10d$ that starts at $u'$ and does not meet $F$. Due to Claim~\ref{cl:complex_components_sizes} and Claim~\ref{cl:kernel_complex}, this path has a vertex $w$ from $D$ such that $d(w,u')\leq d$. If $F_u\neq F_v$, then 
$$
d^f(v,w)\geq d^f(F_u,F_v)-d^f(v,F_v)-d^f(w,F_u)> d^f(u,w)
$$
due to Claim~\ref{cl:complex_components_sizes} and Claim~\ref{cl:kernel_complex}. Finally, let $F_u=F_v$. Assume, in addition, $u\notin V(F)$. Then the only possibility for $C(u)$ to be equal to $C(v)$ is to have a path $P'$ between $v$ and $w$ of length $d^f(u,w)$. Let us extend $P'$ by a path of length $10d$ beyond $v$. Due to Claim~\ref{cl:complex_components_sizes} and Claim~\ref{cl:kernel_complex} this path has a vertex $w'$ from $D$ such that $d^f(w',v)\leq d$. But then $d^f(u,w')>d^f(v,w')$ implying $C(u)\neq C(v)$.  
 If $u,v\in V(F)$, then
$$
 d^f(v,w)=d^f(v,u')+d^f(u',w)\neq d^f(u,u')+d^f(u',w)=d^f(u,w).
$$
In either case, we get $d^f(v,w)\neq d^f(u,w)$ or $C(u)\neq C(v)$. Recalling that $w$ has a unique color, we immediately get $C(u)\neq C(v)$ always, completing the proof.

\subsection{Coloring the core in the critical phase}
\label{sc:main_critical}

Although in Section~\ref{sc:proof2_small} we color the entire union of complex components, here we show that the outcome of the CR on the core is also easy to analyze. 

Let 
$$p=1\pm O(n^{-1/3}),\quad G_n\sim G(n,p).
$$
We need the following result, see~\cite[Theorems~5.13,~5.21]{Janson_book}.

\begin{theorem}
The total excess of the union of complex components in $G_n$ is bounded in probability. All pendant paths have length $\Theta_P(n^{1/3}).$
\end{theorem}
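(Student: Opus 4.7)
The plan is to deduce this theorem from the classical anatomy of the critical random graph, invoking Wright's asymptotic enumeration of connected graphs together with the contiguity result of Ding--Kim--Lubetzky--Peres (Theorem~\ref{th:contiguous_critical}) that has already been recalled.

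For the total-excess claim, I would compute, for each fixed $\ell\geq 1$, the expected number of $\ell$-complex components in $G_n$. Using Wright's estimate that the number of connected graphs on $k$ labeled vertices with excess $\ell$ is of order $w_\ell\, k^{k-1/2+3\ell/2}$ for fixed $\ell$ and large $k$, the expected number of such components in $G(n,p)$ with $pn=1+O(n^{-1/3})$ reduces, after a standard Laplace-type computation in $k$, to a constant $c_\ell$ (up to $1+o(1)$). The sum $\sum_{\ell\geq 1}\ell\,c_\ell$ converges (it is the expected total excess in the limiting multiplicative coalescent), so Markov's inequality gives $O_P(1)$ total excess. This is the content of \cite[Theorem~5.13]{Janson_book}; I would cite it rather than redo Wright's enumeration from scratch.

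For the pendant-path length, I would combine two facts. First, the total excess being $O_P(1)$ forces the kernel $K(\compn)$ to have $O_P(1)$ vertices and edges, since a multigraph of minimum degree $\geq 3$ with excess $\ell$ has at most $2\ell$ vertices and $3\ell$ edges. Second, the core $\coren$ of a complex component has size $\Theta_P(n^{1/3})$ by the classical 2-core analysis (\cite[Theorem~5.21]{Janson_book}; alternatively, via the contiguous model, the core size is a sum of $O_P(1)$ independent $\mathrm{Geom}(1-\mu)$ variables with $1-\mu=\Theta(n^{-1/3})$). Dividing the core size by the kernel edge count yields average pendant-path length $\Theta_P(n^{1/3})$.

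The main obstacle is strengthening "average" to "every" pendant path. Here I would work inside the contiguous model, where the subdivision lengths are i.i.d.\ $\mathrm{Geom}(1-\mu)$ conditionally on the kernel, and apply a union bound over the $O_P(1)$ kernel edges: for any fixed $\omega(n)\to\infty$,
\[
\mathbb{P}\of{\mathrm{Geom}(1-\mu)\leq n^{1/3}/\omega(n)}\leq (n^{1/3}/\omega(n))\cdot(1-\mu)=O(1/\omega(n)),
\]
and symmetrically the upper tail of a geometric variable with mean $\Theta(n^{1/3})$ gives $\mathbb{P}(\mathrm{Geom}(1-\mu)\geq \omega(n) n^{1/3})=\mu^{\omega(n)n^{1/3}}=o(1)$. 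Transferring back to $G_n$ via Theorem~\ref{th:contiguous_critical} yields the claimed $\Theta_P(n^{1/3})$ bound on every pendant path.
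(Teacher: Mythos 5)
Your first half is fine and essentially coincides with what the paper does: the paper gives no proof of this statement at all, it simply cites \cite[Theorems~5.13 and~5.21]{Janson_book}, and your appeal to Theorem~5.13 for the total excess (plus the elementary observation that excess $\ell$ forces a kernel with at most $2\ell$ vertices and $3\ell$ edges) is the same move.

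The genuine gap is in your strengthening of ``average'' to ``every'' pendant path. You propose to do this inside the contiguous model of Theorem~\ref{th:contiguous_critical}, but that theorem is stated (and is only valid) for $pn=1+\delta_n$ with $n^{-1/3}\ll\delta_n=o(1)$, whereas the statement you are proving lives in the critical window $pn=1\pm O(n^{-1/3})$ — including the weakly subcritical side, where there is no ``largest component'' anatomy of the DKLP type at all. So the transfer step ``work in the contiguous model, then pull back via Theorem~\ref{th:contiguous_critical}'' is not available here. Worse, the quantitative picture you use is internally inconsistent with the validity range of that model: when $\delta_n\gg n^{-1/3}$ one has $1-\mu=\Theta(\delta_n)\gg n^{-1/3}$, so the geometric subdivision lengths have mean $\Theta(1/\delta_n)=o(n^{1/3})$ and the kernel has $\Theta(n\delta_n^3)\to\infty$ edges, not $O_P(1)$; your union bound over ``$O_P(1)$ kernel edges'' with $\mathrm{Geom}(1-\mu)$ of mean $\Theta(n^{1/3})$ describes no regime in which Theorem~\ref{th:contiguous_critical} applies. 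To repair the argument without simply citing \cite[Theorem~5.21]{Janson_book} (or \L uczak--Pittel--Wierman), you should instead condition on the core size $m$ and the kernel: given these, the pendant-path lengths form a uniformly random composition of $m$ into $b$ positive parts, with $m=\Theta_P(n^{1/3})$ and $b=O_P(1)$; for such a composition the probability that some part is smaller than $m/\omega$ is $O(b^2/\omega)$, and every part is trivially at most $m$, which yields the $\Theta_P(n^{1/3})$ bound for all pendant paths simultaneously. This conditional-uniformity device is exactly the one the paper uses immediately after the theorem (to show all pendant paths have distinct lengths), so it is the natural route in this regime.
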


From this we may conclude that whp all pendant paths in $G_n$ have different lengths. Indeed, first of all, the number of complex components is bounded in probability since each component increases the total excess. Moreover, the core of every complex component has size $O_P(n^{1/3})$. Let us fix a non-negative integer $t$, disjoint sets $V_1,\ldots,V_t\subset[n]$ of sizes $O(n^{1/3})$, and $t$ connected multigraphs $K'_1,\ldots,K'_t$ of bounded sizes with minimum degrees at least 3. Consider the event $\mathcal{E}(V_1,\ldots,V_t,K'_1,\ldots,K'_t)$ that 
\begin{itemize}
\item each $V_i$ spans the core of some complex component in $G_n$;
\item the core of the union of complex components is induced by $V_1\sqcup\ldots\sqcup V_t$;
\item the kernel of $V_i$ is isomorphic to $K'_i$ for every $i\in[t]$.
\end{itemize}
We get that
\begin{multline*}
 \sum_{t}\sum_{V_1,\ldots,V_t,K'_1,\ldots,K'_t}\mathbb{P}\left(\mathcal{E}(V_1,\ldots,V_t,K'_1,\ldots,K'_t)\right)=\\
 =\mathbb{P}\left(\bigcup_{t;\,V_1,\ldots,V_t,K'_1,\ldots,K'_t}\mathcal{E}(V_1,\ldots,V_t,K'_1,\ldots,K'_t)\right)=1-o(1).
\end{multline*}
On the other hand, let $\mathcal{E}'$ be the desired event saying that all pendant paths have different length. Clearly, subject to $\mathcal{E}(V_1,\ldots,V_t,K'_1,\ldots,K'_t)$ the probability that there exist two pendant paths with equal lengths is exactly the fraction of decompositions of each $|V_i|-|V(K'_i)|$ into $|E(K'_i)|$ non-negative summands such that at least two integers in the decomposition coincide. By the de Moivre--Laplace local limit theorem, this fraction is $O(n^{-1/6})$. Thus,
$$
 \mathbb{P}(\mathcal{E}')=
 \sum\mathbb{P}\left(\mathcal{E}'\mid\mathcal{E}(V_1,\ldots,V_t,K'_1,\ldots,K'_t)\right)\mathbb{P}\left(\mathcal{E}(V_1,\ldots,V_t,K'_1,\ldots,K'_t)\right)=1-o(1),
$$
as needed.

Now, note that, if two vertices from $\mathrm{C}_n$ that have degrees at least 3 in the core have different multisets of lengths of pendant paths that are adjacent to them, then CR run on $\mathrm{C}_n$ color them differently due to Claim~\ref{cl:locally_trees_CR_2}. This immediately implies that whp the only possibility for two vertices $u\neq v$ from $\mathrm{C}_n$ of degrees at least 3 to be in the same color class is to belong to the same 1-complex component such that all its three pendant paths join $u$ and $v$. From Lemma~\ref{lem:UvsC}, it then follows that for any not interchangeable pair of vertices $x,y$ that have degree 2 in $\mathrm{C}_n$, $C(u)=C(v)$ only when $x$ and $y$ belong to the same pendant path of a 1-complex component as above and $d(x,u)=d(y,v)$, where $u,v$ are the vertices with degree 3 of the core of this component. We get the following lemma that complements our Main Lemma.

\begin{lemma}
Let $p=1\pm O(n^{-1/3})$, $G_n\sim G(n,p)$, and let $\mathrm{C}_n$ be the core of the union of complex components of $G_n$. Let $C$ be the coloring produced by CR run on $\mathrm{C}_n$. Whp, for any two vertices $u\neq v$ of $\mathrm{C}_n$, if $C(u)=C(v)$, then one of the following two options holds:
\begin{itemize}
\item the pair $\{u,v\}$ is interchangeable;
\item there exists a 1-complex component in $\mathrm{C}_n$ with two vertices $x,y$ of degree 3 and three pendant paths $P_1,P_2,P_3$ between $x$ and $y$ such that $u,v\in P_1$ and $d_{P_1}(u,x)=d_{P_1}(v,y)$ (it may also happen that $u=x$ and $v=y$).
\end{itemize}
\label{lm:critical_CR_core}
\end{lemma}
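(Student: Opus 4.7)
My plan is to leverage the structural description of $G_n$ in the critical regime from the cited \cite[Theorems 5.13 and 5.21]{Janson_book}: whp the complex part has total excess $O_P(1)$, so its kernel has only boundedly many vertices and edges, and every pendant path in $\mathrm{C}_n$ has length $\Theta_P(n^{1/3})$. Combined with the standard fact that in this regime whp every kernel vertex has degree exactly $3$, the analysis of the CR-output reduces to a ``colored cubic multigraph'' problem on a bounded cubic multigraph whose edges are labelled by pendant-path lengths. Note that two vertices of different degrees in $\mathrm{C}_n$ automatically receive different CR-colors after a single refinement round, so it suffices to handle pairs of equal degree.

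The central technical step is to show that whp all pendant paths in $\mathrm{C}_n$ have pairwise distinct lengths. I would condition on the event $\mathcal{E}(V_1,\ldots,V_t,K'_1,\ldots,K'_t)$ describing the vertex sets and kernel isomorphism types of the finitely many complex components. Given this conditioning, the assignment of internal degree-$2$ vertices to edges of each $K'_i$ is, by a symmetry argument, a uniform random composition of $|V_i|-|V(K'_i)|$ into $|E(K'_i)|$ non-negative parts. Since each part has typical size $\Theta(n^{1/3})$, the de Moivre--Laplace local limit theorem gives that any two prescribed parts coincide with probability $O(n^{-1/6})$, and a union bound over the $O_P(1)$ pairs of parts concludes. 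This is where most of the work lies; the main obstacle is justifying the uniform-composition claim, which goes back to the standard contiguous-model description of critical kernels.

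Once pendant paths have pairwise distinct lengths, Claim~\ref{cl:locally_trees_CR_2} applied to the ball of radius~$1$ in the kernel shows that each degree-$\geq 3$ vertex $v$ is detected by CR via the multiset of pendant-path lengths incident to $v$. Two distinct such vertices $u\neq v$ can share a CR-color only if their incident-length multisets coincide; but since all lengths are distinct and the kernel is cubic, this forces the three edges at $u$ to be exactly the three edges at $v$, hence the whole component consists of $u$, $v$ and three parallel pendant paths between them --- the theta configuration in the second bullet of the lemma.

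For vertices of degree $2$ in $\mathrm{C}_n$ I would invoke Lemma~\ref{lem:UvsC}: two such $x,y$ share a CR-color iff the universal covers of $\mathrm{C}_n$ rooted at $x$ and $y$ are isomorphic as rooted trees. Since all degree-$\geq 3$ vertices outside the theta components carry individual colors, any such rooted-tree isomorphism must induce an automorphism of the kernel fixing every non-theta degree-$\geq 3$ vertex; because pendant-path lengths are all distinct this forces $x,y$ either to lie on the same pendant path and be exchanged by its unique reflective symmetry --- precisely the definition of an interchangeable pair --- or to lie on a pendant path $P_1$ of a theta component at positions with $d_{P_1}(x,u)=d_{P_1}(y,v)$, where $u,v$ are the two degree-$3$ vertices of that component. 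Combining the three cases yields the dichotomy claimed by the lemma.
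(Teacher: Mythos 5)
Your proposal is correct and follows essentially the same route as the paper: condition on the vertex sets and kernel types of the $O_P(1)$ complex components, use the uniform-composition/local-limit argument to get pairwise distinct pendant-path lengths whp, distinguish kernel vertices by their incident length multisets via Claim~\ref{cl:locally_trees_CR_2} (leaving only the theta components), and handle degree-$2$ vertices through universal covers via Lemma~\ref{lem:UvsC}. Your additional appeal to the standard fact that the kernel is whp cubic in the critical window is a harmless (indeed slightly more careful) supplement that the paper leaves implicit.
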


\section{Proofs of Theorem~\ref{thm:ColRef}~and~Theorem~\ref{th:symmetries}}
\label{sc:proofs_th_Intro}

In this section, we prove Theorems~\ref{thm:ColRef}~and~\ref{th:symmetries}. In the proofs, we call the union of a pair of transposable paths a {\it degenerate cycle}. Let $\Sigma_1,\Sigma_2$ be the set of all pendant and degenerate cycles in $G(n,p)$, respectively. Set $\Sigma:=\Sigma_1\sqcup\Sigma_2$.

\subsection{Proof of Theorem~\ref{thm:ColRef}}
\label{sc:th1_proof}

Due to Main Lemma (Lemma~\ref{th:giant_main}), it remains to prove the following.

\begin{lemma}
Let $\gamma>1$ be a constant and $pn\leq\gamma$. Then $|\Sigma|=O_P(1)$ and whp there are no two cycles in $\Sigma$ that share a vertex.
\label{lm:loops_multiple}
\end{lemma}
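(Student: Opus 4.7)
The plan is to work with the kernel $K_n$ of the complex part and observe that pendant cycles in $\Sigma_1$ correspond exactly to loops of $K_n$, while degenerate cycles in $\Sigma_2$ correspond to pairs of parallel edges of $K_n$ whose two subdivision paths have matching lengths. Two cycles of $\Sigma$ share a vertex of the core if and only if the associated features share a vertex of $K_n$, since every degree-$2$ vertex of a cycle in $\Sigma$ belongs to a unique edge of $K_n$ and therefore to no other cycle in $\Sigma$. Thus the lemma reduces to bounding the number of loops, the number of multi-edges, and the shared-vertex patterns in $K_n$.

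If $pn \le 1 - \omega(n^{-1/3})$ there are no complex components and $\Sigma = \varnothing$. If $pn = 1 \pm O(n^{-1/3})$, the critical-window facts recalled in Section~\ref{sc:main_critical} give that the total excess of the complex part is $O_P(1)$, so $K_n$ has $O_P(1)$ edges and $|\Sigma| = O_P(1)$ trivially; moreover, the conditioning argument used there for pendant-path lengths shows that whp no two loops or multi-edges of $K_n$ share a vertex, since after conditioning on the (bounded-size) kernel shape and degree sequence, each such coincidence cuts out a codimension-one subset of a Poisson length decomposition, of probability $O(n^{-1/3})$ by the de~Moivre--Laplace local limit theorem.

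The main case is $1+\omega(n^{-1/3}) \le pn \le \gamma$, which I would handle by passing to the contiguous multigraphs $\tilde K_n$ of Theorems~\ref{th:contiguous_super_critical}~and~\ref{th:contiguous_critical}: these are uniformly random multigraphs on $N = \Theta(n\delta_n^3)$ vertices with a Poisson-like degree sequence dominated by degree-$3$ vertices. A standard configuration-model computation yields
$$
\mathbb{E}[\text{loops in } \tilde K_n] = \frac{\sum_i D_i(D_i-1)}{2\cdot 2m} = O_P(1),
$$
and the same order bound for multi-edges, so $|\Sigma_1| = O_P(1)$. Since each pair of parallel edges of $\tilde K_n$ produces a degenerate cycle with the constant probability $(1-\mu)/(1+\mu)$ that two independent $\mathrm{Geom}(1-\mu)$ variables agree, this also gives $|\Sigma_2| = O_P(1)$. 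For the shared-vertex statement, the expected number of each offending pattern (two loops at a common vertex, a loop plus a parallel pair, or two parallel pairs at a common vertex) is $O(1/N) = o(1)$, since each extra coincidence pays an additional factor $O(1/m)$ in the configuration-model pairing; Markov's inequality and the contiguity transfer then conclude.

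The delicate point in the last step is that as $\delta_n\to 0$ the maximum degree of $\tilde K_n$ may grow as slowly as $\Theta(\log n/\log(1+1/\delta_n))$, so one must ensure the configuration-model estimates are uniform in $\delta_n$. However, the Poisson degrees concentrate near their bounded mean, so the empirical second moment $N^{-1}\sum_i D_i^2$ stays $O(1)$ in probability, and the standard bounds go through throughout the regime.
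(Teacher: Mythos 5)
Your high-level reduction to loops and multi-edges of the kernel is the right framework, and your configuration-model treatment of the supercritical range is in the same spirit as the paper (which handles the strictly supercritical subrange $np\geq 1+\delta$ by a direct first-moment count using branching-process extinction probabilities, and the $np=1+o(1)$ subrange via the contiguous model and switchings). However, there is a genuine gap in the critical window $np=1\pm O(n^{-1/3})$. There, ``two pendant cycles sharing a vertex'' corresponds to the kernel containing a vertex with two loops (a figure-eight); this is a \emph{structural} feature of the kernel, not a coincidence of pendant-path lengths, and no conditioning-on-lengths argument can rule it out. Your proposal conflates this with the $\Sigma_2$-type coincidence (two parallel kernel edges subdivided to equal lengths), which genuinely is a codimension-one event in the length decomposition and \emph{is} handled by a local-limit-theorem argument. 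For the overlapping-$\Sigma_1$ event you actually need to show the figure-eight kernel shape is unlikely in the first place; the paper does this with a direct first-moment computation in $G(n,p)$, bounding the expected number of pairs of overlapping cycles of lengths $k_1,k_2\le a_n n^{1/3}$ by $\sum_{k_1,k_2} n^{k_1+k_2}p^{k_1+k_2+1} = O(a_n^2 e^{O(a_n)}/n^{1/3}) = o(1)$. (Intuitively: a theta-graph kernel has one more free path-length parameter than a figure-eight, which costs an extra $n^{1/3}$ factor, so the figure-eight is asymptotically rarer.)

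A second, more technical, point: the contiguous models of Theorems~\ref{th:contiguous_super_critical}--\ref{th:contiguous_critical} produce a \emph{uniformly random} multigraph with a given degree sequence, not a configuration-model multigraph, and the two distributions are not the same (the configuration model is biased by automorphism/multiplicity weights). The clean way to get the loop and multi-edge estimates you want for the uniform multigraph is via switchings, as the paper does in Section~\ref{sc:asym_super_critical}: one compares the sets of matchings giving a loop versus no loop at a fixed vertex via a local switch, tracks the matching weights explicitly, and obtains $\mathbb{P}(v \text{ lies on a loop})=(1+o(1))d_v^2/E$ uniformly in $v$. Your ``standard configuration-model computation'' states the right answer but skips this transfer step, and for the shared-vertex patterns the $O(1/N)$ bound likewise needs to be justified in the uniform-multigraph model rather than the configuration model. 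These are fixable, but they should be addressed explicitly; the critical-window gap above, by contrast, requires a genuinely different argument.
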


Indeed,  Lemma~\ref{lm:loops_multiple} implies that all color classes have cardinalities at most 2 whp and their union is stochastically bounded. The fact that each color class is an orbit immediately follows from the definition of an interchangeable pair and the fact that, if an automorphism of $\mathrm{H}_n$ maps $x$ to $y$, then $C(x)=C(y)$. 
 We prove Lemma~\ref{lm:loops_multiple} in Section~\ref{sc:lm_degenerate_cycles_proof}.

\subsection{Proof of Theorem~\ref{th:symmetries}} 

Let $\gamma>1$ and $G_n\sim G(n,p)$. Let $\mathrm{C}_n$ be the core of the largest component of $G_n$. If an automorphism of $\mathrm{C}_n$ maps a vertex $x$ to a vertex $y$, then $C^{\mathrm{C}_n}(x)=C^{\mathrm{C}_n}(y)$ due to the definition of CR. Therefore, if $1+\omega(n^{-1/3})=pn\leq\gamma$, then, due to Main Lemma (Lemma~\ref{th:giant_main}), the only possible automorphisms of $\mathrm{C}_n$ are permutations of interchangeable vertices, that belong to cycles from $\Sigma$. If $np=1\pm O(n^{-1/3})$, then, due to Lemma~\ref{lm:critical_CR_core}, the only possible automorphisms of $\mathrm{C}_n$ are permutations of interchangeable vertices, that belong to cycles from $\Sigma$, and transpositions of 1-complex components described in this lemma --- we will denote the set of such components by $\Sigma_3$ in what follows.

Due to Lemma~\ref{lm:loops_multiple},  $|\Sigma|=O_P(1)$. Moreover, if $1+\omega(n^{-1/3})=pn\leq\gamma$, then  $\aut(\mathrm{C}_n)\cong\prod_{F\in\Sigma}S_2$ whp; if 
$np=1\pm O(n^{-1/3})$, then $\aut(\mathrm{C}_n)\cong\prod_{F\in\Sigma\sqcup\Sigma_3}S_2$ due to Lemma~\ref{lm:critical_CR_core}.
It remains to prove the following.
\begin{lemma}
Let $\delta>0$ and $\gamma>1+\delta$ be constants. Let $pn\leq\gamma$. 
\begin{enumerate}
\item If $pn\geq 1+\delta$, then
both $\Sigma_1,\Sigma_2$ are non-empty with non-vanishing probability.
\item If $pn=1+\omega(n^{-1/3})$ and $pn=1+o(1)$, then whp $\Sigma_2=\varnothing$, while $\Sigma_1$ is non-empty with non-vanishing probability.
\item If $pn=1\pm O(n^{-1/3})$, then whp $\Sigma_2=\varnothing$, while $\Sigma_1,\Sigma_3$  are non-empty with non-vanishing probability.
\end{enumerate}
\label{lm:loops_exist}
\end{lemma}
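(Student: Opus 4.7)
The plan is to extract the presence or absence of $\Sigma_1$, $\Sigma_2$ and $\Sigma_3$ from the structure of the kernel of $\mathrm{H}_n$ --- described, in the supercritical regimes, by Theorems~\ref{th:contiguous_super_critical} and~\ref{th:contiguous_critical}, and, in the critical regime, by the classical analysis sketched in Section~\ref{sc:main_critical} together with the structural results of~\cite{LPW,Luczak-auto}. The key correspondence is: a pendant cycle in $\mathrm{C}_n$ is a self-loop of the kernel after subdivision by a geometric number of degree-$2$ vertices; a pair of transposable pendant paths is a multi-edge of the kernel whose two subdividing lengths happen to coincide; and a $\Sigma_3$-component is a complex component whose kernel is the theta (triple edge on two vertices) with three pairwise distinct subdivision lengths.

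\paragraph{Parts 1 and 2.} I would apply the relevant contiguous model: the kernel is a uniformly random multigraph on $N$ vertices whose degrees are asymptotically concentrated on the value $3$ with total degree $\sim 3N$. A standard configuration-model computation (see, e.g., \cite[Ch.~7]{Janson_book}) shows that the numbers of self-loops and of pairs of parallel edges both converge to Poisson random variables with strictly positive limiting means, hence each is non-zero with probability bounded away from $0$; this yields $\Sigma_1 \neq \varnothing$ with non-vanishing probability in both regimes. For $\Sigma_2$, conditional on a multi-edge being present, its two subdividing path lengths are i.i.d.\ $\mathrm{Geom}(1-\mu)$ and therefore agree with probability $(1-\mu)/(1+\mu)$. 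Under Part 1, $\mu\in(0,1)$ is a constant, so this probability is $\Omega(1)$ and $\Sigma_2 \neq \varnothing$ with non-vanishing probability. Under Part 2, $\mu = 1-o(1)$ drives this probability to $0$, yielding $\mathbb{E}|\Sigma_2| = o(1)$ and hence $\Sigma_2 = \varnothing$ whp via Markov's inequality.

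\paragraph{Part 3.} In the critical regime the total excess is $O_P(1)$, and with probability bounded away from $0$ the complex part of $G_n$ consists of a single complex component of excess exactly $1$. The only three multigraphs of excess $1$ with minimum degree at least $3$ are the theta, the handcuff (two vertices with one loop each, joined by a single edge), and the figure-eight (one vertex with two loops); each arises with positive probability as the kernel of that unique component. A handcuff or figure-eight contains a self-loop and hence contributes an element of $\Sigma_1$. A theta contributes an element of $\Sigma_3$ whenever its three pendant-path lengths are pairwise distinct --- an event of probability $1-o(1)$ conditional on theta, because the joint distribution of the three lengths (conditional on their sum, which is $\Theta_P(n^{1/3})$) is close to uniform over compositions. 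Finally, $\Sigma_2 = \varnothing$ whp follows immediately from the whp-distinctness of all pendant-path lengths already established in Section~\ref{sc:main_critical}.

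\paragraph{Main obstacle.} The most delicate step is verifying the Poisson limits for loops and multi-edges of the contiguous-model kernel under Part 2, where $N\to\infty$ while the degrees concentrate on the value $3$ with a vanishing tail. This is essentially a classical configuration-model computation, but one has to track how the parity conditioning on $\sum_i \eta_i \1_{\eta_i \ge 3}$ built into the contiguous model interacts with the heavily peaked degree distribution, and check that the limiting Poisson mean stays uniformly bounded away from both $0$ and $\infty$.
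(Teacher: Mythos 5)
There is a genuine gap, and it sits at the heart of your argument for the ``non-vanishing probability'' claims. Theorems~\ref{th:contiguous_super_critical} and~\ref{th:contiguous_critical} are one-directional: they only say that an event with probability $o(1)$ in the model $\mathrm{\tilde H}_n$ has probability $o(1)$ for the true giant, i.e.\ they transfer \emph{whp} statements from the model to $G(n,p)$. They do not transfer lower bounds, and no two-sided version can hold as stated: with $\mu$ bounded away from $1$ the model produces a non-simple graph (a kernel loop or double edge subdivided into paths of length $1$) with probability bounded away from $0$, while $G(n,p)$ is always simple. So your derivation of ``$\Sigma_1\neq\varnothing$ and $\Sigma_2\neq\varnothing$ with non-vanishing probability'' from Poisson limits for loops and multi-edges of $\mathrm{\tilde K}_n$ does not follow from the cited theorems; this is exactly why the paper proves all the existence statements \emph{directly in $G(n,p)$}: in the strictly supercritical case by a Paley--Zygmund second-moment computation for an explicit configuration (a pendant triangle together with a $4$-cycle whose two opposite vertices have degree $2$ in the core), in the regime $np=1+o(1)$, $np-1\gg n^{-1/3}$ by a second-moment count of pendant cycles of length $\Theta(1/\delta_n)$, and in the critical window by invoking the known non-vanishing probability of a $1$-complex component together with a counting comparison of the two possible excess-$1$ core types (which also covers $\Sigma_3$). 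The obstacle you single out (the parity conditioning in the Poisson limit) is not the real difficulty; the transfer direction is.

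The ``whp'' halves of your argument are sound in principle, since $\Sigma_2=\varnothing$ and bounds of the form $|\Sigma_1|<a_n$ are whp statements and do transfer through the contiguity theorems; your computation that two independent $\mathrm{Geom}(1-\mu)$ lengths coincide with probability $(1-\mu)/(1+\mu)=o(1)$ when $\mu=1-o(1)$ matches the paper's argument, and in the critical window the whp-distinctness of pendant path lengths from Section~\ref{sc:main_critical} indeed gives $\Sigma_2=\varnothing$ whp. One caveat even there: $\mathrm{\tilde K}_n$ is a \emph{uniform} multigraph with the given degree sequence, not a configuration-model pairing, so the textbook Poisson computation for loops and multi-edges does not apply verbatim; the paper controls the expected numbers of loops and multiple edges in the uniform model by switchings, and you would need either that argument or an explicit comparison between the two models to justify the ``$O(1)$ expected multi-edges'' step.
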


We prove Lemma~\ref{lm:loops_exist} in Section~\ref{sc:lm_degenerate_cycles_proof}. It completes the proof of Theorem~\ref{th:symmetries}.

\subsection{Proofs of Lemma~\ref{lm:loops_multiple} and Lemma~\ref{lm:loops_exist}}
\label{sc:lm_degenerate_cycles_proof}

In this section, we prove Lemma~\ref{lm:loops_multiple} and Lemma~\ref{lm:loops_exist} simultaneously. We divide the proof into three parts: (1) $1+\delta\leq np\leq \gamma$ (Section~\ref{sc:asym_strictly_super_critical}); (2) $np=1\pm O(n^{-1/3})$ (Section~\ref{sc:asym_critical}); (3)~$np=1+\omega(n^{-1/3})$ and $np=1+o(1)$ (Section~\ref{sc:asym_super_critical}).

\subsubsection{Strictly supercritical regime: proof of Lemma~\ref{lm:loops_multiple} and the first part of Lemma~\ref{lm:loops_exist}}
\label{sc:asym_strictly_super_critical}

Here we assume that $1+\delta\leq np\leq\gamma$.

\paragraph{Proof of Lemma~\ref{lm:loops_multiple}.} The fact that whp there are no two cycles in $\Sigma$ that share a vertex follows immediately from Claim~\ref{cl:kernel_complex}.\\

Let us prove that $|\Sigma|=O_P(1)$. It is well-known that whp the trees attached to the core of the giant in strictly supercritical random graph have size $O(\ln n)$ (see, e.g.,~\cite[Lemma 10]{BohmanFLPSSV07}). In particular, it immediately follows from Theorem~\ref{th:contiguous_super_critical} and the fact that the total progeny $Y$ of a subcritical Galton--Watson process with expected number of offsprings $\mu<1$ admits exponential decay of tail probabilities: there exists a constant $c$ such that, for every $t$, $\mathbb{P}(Y>t)\leq e^{-ct}$. The latter fact is fairly standard. It can be proven directly using Cram\'{e}r--Chernoff method and the equation for the generating function of the total progeny~\cite{Feller}. Alternatively, it immediately follows from the fact that $Y$ coincides in distribution with the first time when the \L ukasiewicz random walk hits $-1$, see, e.g.,~\cite{Kort}.

Let us define the event $\mathcal{E}_n$ that the trees attached to the core of the giant have size at most $\ln^2 n$. We have that $\mathcal{E}_n$ holds whp in $G_n$. Let $U\subset[n]$ be a fixed set  of size $n':=n-\lceil\sqrt{n}\rceil$. Fix $u\in U$. The probability that the connected component in $G_n[U]$ that contains $u$ is a tree of size less than $\ln^2n$ is at most the extinction probability $\rho$ of the Galton--Watson process with offspring distribution $\mathrm{Bin}(n',p)$. Since the generating function of this distribution equals
$$
 g(x)=(px+1-p)^{n'}=e^{n'p(1+O(p))(x-1)},
$$
we get that 
$$
\rho=e^{n'p(\rho-1)(1+O(1/n))}.
$$ 
There exists a constant $\delta'>0$ such that $\rho<\frac{1}{np}-\delta'$. Indeed, $(g(x)-x)|_{x=0}>0$ while $g(1/np)-1/np=e^{1+o(1)-np}-1/np<0$.

Let $A$ be a sufficiently large constant, $k=\lfloor A\ln n\rfloor$ and let $X_{k}$ be the number of $k$-paths so that all their vertices do not belong to the kernel. Then
$$
\mathbb{E}\left(X_{k}\cdot\1_{\mathcal{E}_n}\right)\leq n^kp^{k-1}\rho^{k}\leq\frac{1}{p}(1-\delta'(1+\delta))^k=o(1).
$$
It immediately implies that whp there are no $\mathrm{C}_n$-degenerate cycles of length more that $3A\ln n$.

Fix $v\in[n]$, and let $\xi_v$ be the number of cycles of lengths at most $3A\ln n$ that contain $v$ such that all their vertices, but $v$ and the opposite to $v$ vertex on the cycle, have degree~2 in $\mathrm{C}_n$. From the above, we get
$$
 \mathbb{E}(\xi_v\cdot\1_{\mathcal{E}_n})\leq\sum_{k=3}^{\lfloor 3A\ln n\rfloor}n^{k-1}p^k\rho^{k-2}<\frac{1}{n}(np)^2\sum_{k=3}^{\infty}(1-\delta')^{k-2}<\frac{\gamma^2}{\delta' n}.
$$
This completes the proof due to Markov's inequality since, letting $\xi$ be the number of $\mathrm{C}_n$-degenerate cycles, we get that, for every constant $B>0$,
\begin{align*}
\mathbb{P}(\xi>B) & \leq \mathbb{P}(\{\xi>B\}\wedge\mathcal{E}_n)+\mathbb{P}(\neg\mathcal{E}_n)=
\mathbb{P}(\{\xi\cdot\1_{\mathcal{E}_n}>B\}\wedge\mathcal{E}_n)+o(1)\\
&\leq
 \mathbb{P}(\xi\cdot\1_{\mathcal{E}_n}>B)+o(1)\leq\frac{1}{B}\mathbb{E}(\xi\cdot\1_{\mathcal{E}_n})+o(1)\\
 &=\frac{1}{B}\sum_{v=1}^n\mathbb{E}(\xi_v\cdot\1_{\mathcal{E}_n})+o(1)\leq \frac{2\gamma^2}{B\delta'}.
\end{align*}

\begin{remark}
This proof generalizes to smaller $p$ --- say, $1+1/\ln\ln n\leq np=:1+\delta_n\leq 1+\delta$. It is sufficient to note that whp 1) for every $v\in V(\mathrm{C}_n)$, the tree $T_v$ has less than $\ln^2 n$ vertices; 2) for a fixed set $U$ of size at least $n-\sqrt{n}$ and a vertex $u\in U$, the probability that the connected component in $G_n[U]$ that contains $u$ is a tree of size less than $\ln^2n$ is at most $(1-\delta_n/2)/(np)$; 3) $\mathrm{C}_n$ does not contain a path of length more than $\ln^2n$ with all its vertices outside of the kernel.
\end{remark}

\paragraph{Proof of the first part of Lemma~\ref{lm:loops_exist}.} It remains to prove that, for some constant $\varepsilon>0$, 
$$
\mathbb{P}(\Sigma_1\neq\varnothing,\Sigma_2\neq\varnothing)>\varepsilon.
$$
Let us fix seven distinct vertices $u,\tilde u_1,\tilde u_2$, $v_1,v_2,\tilde v_1,\tilde v_2$ and consider the following event $\mathcal{E}_{u,\tilde u_1,\tilde u_2; v_1,v_2,\tilde v_1,\tilde v_2}$: 
\begin{itemize}
\item $u\tilde u_1\tilde u_2$ is a triangle from $\mathrm{C}_n$ with both $\tilde u_1,\tilde u_2$ non-adjacent to any vertex outside of the triangle, 
\item $v_1\tilde v_1v_2\tilde v_2$ is a 4-cycle from $\mathrm{C}_n$ with both $\tilde v_1,\tilde v_2$ non-adjacent to any vertex outside of the 4-cycle.
\end{itemize} 
Let $X$ be the number of events $\mathcal{E}_{u,\tilde u_1,\tilde u_2; v_1,v_2,\tilde v_1,\tilde v_2}$ that occur in $G_n$.

Consider the event $\mathcal{E}'$ that (1) $G_n$ has a component of size at least $\delta'n$, where $\delta'$ is a small enough constant; (2) all trees sprouting from the core have size $o(\ln n)$. We have that $\mathbb{P}(\mathcal{E}')=1-o(1)$ (for example, due to Theorem~\ref{th:contiguous_super_critical}). Let $q_n(N)$ be the probability that a fixed vertex belongs to a component of size at least $\sqrt{n}$ in a fixed set of size $N$. We know that, for any $N=n(1-o(1))$, $q_n(N)\sim q_n(n)=:q$ is bounded away both from 0 and 1 (see, e.g.,~\cite[Theorems 5.1, 5.4]{Janson_book}).  By exploring the giant from $u$ in-width, we may stop the exploration as soon as its size reaches $\lceil\sqrt{n}\rceil$ and switch to the vertex $v_1$. Thus, we may treat the respective events for $u,v_1,v_2$ independently. 
 Note that, as soon as $\mathcal{E}'$ holds, we have that, if $v_1\sim \tilde v_1\sim v_2$ and both $v_1,v_2$ belong to components of size at least $\sqrt{n}$ in $G_n\setminus \tilde v_1$, then $v_1,v_2$ belong to the core. We have that 
$$
\mathbb{P}(\mathcal{E}_{u,\tilde u_1,\tilde u_2; v_1,v_2,\tilde v_1,\tilde v_2}\cap\mathcal{E}')=(np)^7(1-p)^{4n-O(1)}(q+o(1))^3=(1+o(1))c^7e^{-4c}q^3n^{-7}.
$$
We get that 
$$
\mathbb{E}(X\cdot \1_{\mathcal{E}'})=(1+o(1))\frac{n^7}{8}(np)^7e^{-4c}q^3n^{-7}=\frac{1}{8}(np)^7e^{-4c}q^3+o(1)=\Theta(1).
$$
Now, fix  tuples $\mathbf{a}^i=(u^i,\tilde u^i_1,\tilde u^i_2; v^i_1,v^i_2,\tilde v^i_1,\tilde v^i_2)$, $i\in\{1,2\}$. Note that any collection of such tuples on $v<14$ vertices contributes to $\mathbb{E}(X\cdot\1_{\mathcal{E}'})^{(m)}$ the probability 
$$
\mathbb{P}\left(\wedge_{i=1}^2\mathcal{E}_{u,\tilde u^i_1,\tilde u^i_2; v^i_1,v^i_2,\tilde v^i_1,\tilde v^i_2}\right)=O(n^{-v-1})
$$ 
since the total number of required edges is at least $v+1$, while the requirement on the absence of edges contributes a $\Theta(1)$-factor. Thus,
\begin{multline*}
\mathbb{E}(X(X-1)\cdot\1_{\mathcal{E}'})  =\left((1+o(1))\frac{n^7}{8}\right)^2\left(p^7(1-p)^{4n-O(1)}(q+o(1))^3\right)^2+\\
+O\left(\sum_{v=8}^{7m-1}n^v n^{-v-1}\right)=\left(\frac{1}{8}(np)^7e^{-4c}q^3\right)^2+o(1).
\end{multline*}
This completes the proof due to the Paley--Zygmund inequality.

\subsubsection{Critical phase: proof of Lemma~\ref{lm:loops_multiple} and the third part of Lemma~\ref{lm:loops_exist}}
\label{sc:asym_critical}

Let $np=1+O(n^{-1/3})$. We set $\delta_n:=np-1$.
First of all, the fact that, for some constant $\varepsilon>0$, 
$$
\mathbb{P}(\Sigma_1\neq\varnothing,\Sigma_3\neq\varnothing)>\varepsilon
$$
easily follows from~\cite[Theorem 5.20]{Janson_book} that, in particular, states that with probability greater than $\varepsilon$, $G(n,p)$ contains a 1-complex component. It remains to note that there are two types of the core of 1-complex components: 1) disjoint union of two cycles joined by a path; 2) a cycle with a path joining two its vertices. Letting $f_i(k)$ be the number of such (labeled) graphs of type $i$, $i\in\{1,2\}$, on $[k]$ we get that $f_1=\Theta(f_2)$. Since any graph of the first type contributes to $\Sigma_1$ and any graph of the second type contributes to $\Sigma_3$, the fact that $\mathbb{P}(\Sigma_1\neq\varnothing,\Sigma_3\neq\varnothing)$ is bounded away from 0, follows.

It remains to prove Lemma~\ref{lm:loops_multiple} and that whp $\Sigma_2=\varnothing.$ We need the following assertion, see~\cite[Theorem 4]{LPW}.
\begin{claim}
Every connected component of $\mathrm{C}_n$ has diameter $O_P(n^{1/3})$. 
\label{cl:critical_diameter}
\end{claim}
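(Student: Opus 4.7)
The plan is to combine the two structural facts just recalled above the claim --- that the total excess of the complex part of $G_n$ is $O_P(1)$ and that every pendant path has length $\Theta_P(n^{1/3})$ --- with the elementary observation that a path in $\mathrm{C}_n$ between two vertices traverses only boundedly many pendant paths.

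First, I would use that stochastic boundedness of the total excess $\ell$ implies a stochastic bound on the kernel of each complex component. Indeed, a connected component of $\mathrm{C}_n$ of excess $\ell$ has at most $2\ell$ vertices of degree at least $3$ by a routine degree count (since $\sum_{v\in V_{\geq 3}}(\deg(v)-2)=2\ell$), and its kernel therefore has at most $|V_{\geq 3}|+\ell \le 3\ell$ edges. Consequently, the number of complex components, the total number of pendant paths in $\mathrm{C}_n$, and the number of edges of each kernel regarded as a multigraph are all $O_P(1)$.

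Next, I would upgrade the pointwise statement $\Theta_P(n^{1/3})$ on the length of a single pendant path to a uniform bound $O_P(n^{1/3})$ on the maximum pendant path length. This follows by a standard two-step argument: given $\varepsilon>0$, first choose a constant $K$ so that the number of pendant paths is at most $K$ with probability at least $1-\varepsilon$; then take the maximum over those at most $K$ pendant paths, each of length $O_P(n^{1/3})$. The same bound is transparent in the contiguous model of Theorem~\ref{th:contiguous_critical}, in which pendant path lengths appear as i.i.d.\ geometric variables with parameter $1-\mu = \Theta(n^{-1/3})$, so that the maximum of a bounded number of them is still $O_P(n^{1/3})$.

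Finally, for vertices $u$ and $v$ in the same component of $\mathrm{C}_n$ with kernel $\mathrm{K}$, any shortest $uv$-path uses at most $|E(\mathrm{K})|+1$ complete pendant paths, so $d_{\mathrm{C}_n}(u,v) \le (|E(\mathrm{K})|+1)\, L_{\max}$, where $L_{\max}$ denotes the maximum pendant path length in that component. Multiplying the two $O_P$-bounds gives $O_P(1)\cdot O_P(n^{1/3}) = O_P(n^{1/3})$, which is the desired conclusion. The main obstacle is precisely the passage from pointwise to uniform control of $L_{\max}$, which is what the two-step argument (or, equivalently, the geometric tail estimate in the contiguous model) is designed to handle; everything else reduces to bookkeeping of edges in a kernel of bounded size.
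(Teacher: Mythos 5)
Your argument is correct, but it takes a genuinely different route from the paper: the paper does not prove this claim at all --- it is imported directly from \cite[Theorem 4]{LPW} --- whereas you derive it from the two facts quoted from \cite[Theorems~5.13,~5.21]{Janson_book} in Section~\ref{sc:main_critical} (total excess $O_P(1)$ and pendant path lengths $\Theta_P(n^{1/3})$). Your bookkeeping is sound: excess is preserved when passing to the core and to the kernel, the identity $\sum_{v}(\deg(v)-2)=2\ell$ gives at most $2\ell$ branch vertices and at most $3\ell$ kernel edges in a component of excess $\ell$, a simple path traverses each pendant path at most once, and a product of stochastically bounded quantities is stochastically bounded, so the diameter is at most $O_P(1)\cdot L_{\max}=O_P(n^{1/3})$. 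What your derivation buys is self-containedness relative to results already quoted in the paper; what the paper's citation buys is that \cite{LPW} states the diameter bound outright, with no need to interpret the pendant-path statement. On that interpretation point, note that your ``two-step'' upgrade from pointwise to uniform control of $L_{\max}$ is only as strong as the quoted statement: if ``all pendant paths have length $\Theta_P(n^{1/3})$'' were read pointwise, taking a maximum over an $O_P(1)$ random number of paths is not automatic (the implicit constants could a priori depend on which path), and your argument tacitly uses the uniform reading --- which is indeed the intended one, so nothing is lost, but the intermediate step is then redundant rather than doing real work. One genuine slip is the aside invoking Theorem~\ref{th:contiguous_critical}: that contiguous model is stated for $n^{-1/3}\ll\delta_n=o(1)$ and therefore does not apply in the critical window $np=1\pm O(n^{-1/3})$ where this claim is used; you should drop that remark (or replace it by the reference to \cite{LPW}), keeping only the derivation from the quoted theorem of \cite{Janson_book}.
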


Let $a_n\to\infty$ as $n\to\infty.$
Below, we prove that whp $\Sigma_2=\varnothing$, $|\Sigma_1|\leq e^{O(a_n)}$\footnote{It is easy to show an equivalent definition of stochastic boundedness: $\xi_n=O_P(1)$ if and only if, for every increasing and unbounded sequence $a_n$, $\mathbb{P}(\xi_n<a_n)\to 1$ as $n\to\infty$.}, and there are no overlapping cycles in $\Sigma_1$.
From Claim~\ref{cl:critical_diameter}, it follows that, whp, 
\begin{itemize}
\item if $F\in\Sigma_2$ is a cycle with vertices $v_1,v_2$ of degree at least  3, then $F$ has length at most $a_n n^{1/3}$, and there is a path between $v_1,v_2$ with all inner vertices outside of $F$ of length at most $a_n n^{1/3}$;
\item if $F\in\Sigma_1$ is a cycle with a vertex $v$ of degree at least  3, then $F$ has length at most $a_n n^{1/3}$, there exists another cycle of length at most $a_n n^{1/3}$, that is joined with $F$ by a path of length at most $a_n n^{1/3}$.
\end{itemize}
Then
\begin{align*}
 \mathbb{E}|\Sigma_2|
 &\leq \sum_{k_1=4}^{\lfloor a_n n^{1/3}\rfloor}\sum_{k_2=1}^{\lfloor a_n n^{1/3}\rfloor}n^{k_1+k_2}p^{k_1+k_2+1}=p\sum_{k_1=4}^{\lfloor a_n n^{1/3}\rfloor}\sum_{k_2=1}^{\lfloor a_n n^{1/3}\rfloor}(1+\delta_n)^{k_1+k_2}\\
 &\leq\frac{(a_n n^{1/3})^2}{n}(1+\delta_n)^{2a_n n^{1/3}}
 \leq\frac{(1+o(1))a_n^2}{n^{1/3}} e^{O(a_n)}=o\left(1\right)
\end{align*}
and
\begin{align*}
 \mathbb{E}|\Sigma_1|
 &\leq \sum_{k_1=3}^{\lfloor a_n n^{1/3}\rfloor}\sum_{k_2=1}^{\lfloor a_n n^{1/3}\rfloor}\sum_{k_3=0}^{\lfloor a_n n^{1/3}\rfloor}n^{k_1+k_2+k_3}p^{k_1+k_2+k_3+1}\\
 &=p\sum_{k_1=4}^{\lfloor a_n n^{1/3}\rfloor}\sum_{k_2=1}^{\lfloor a_n n^{1/3}\rfloor}\sum_{k_3=0}^{\lfloor a_n n^{1/3}\rfloor}(1+\delta_n)^{k_1+k_2+k_3}\\
 &\leq\frac{(a_n n^{1/3})^3}{n}(1+\delta_n)^{3a_n n^{1/3}}
 \leq
 (1+o(1))a_n^3 e^{O(a_n)}=e^{O(a_n)}.
\end{align*}
Finally, let $X$ be the number of unions of two overlapping cycles from $\Sigma_1$. Then
$$
 \mathbb{E}X
 \leq \sum_{k_1=3}^{\lfloor a_n n^{1/3}\rfloor}\sum_{k_2=3}^{\lfloor a_n n^{1/3}\rfloor}n^{k_1+k_2}p^{k_1+k_2+1}\leq\frac{(a_n n^{1/3})^2}{n}(1+\delta_n)^{2a_n n^{1/3}}
 =o\left(1\right),
$$
completing the proof.

\subsubsection{Supercritical phase: proof of Lemma~\ref{lm:loops_multiple} and the second part of Lemma~\ref{lm:loops_exist}}
\label{sc:asym_super_critical}

It remains to consider the case $np=1+\omega(n^{-1/3})$ and, simultaneously, $np=1+o(1)$. Let us first show that, for some constant $\varepsilon>0$,
\begin{equation}
 \mathbb{P}(\Sigma_1\neq\varnothing)>\varepsilon.
\label{eq:critical_Sigma1_non-trivial}
\end{equation}

Let us fix some positive numbers $a<b$ and let $X$ be the number of cycles from $\Sigma_1$ of length $k\in[a/\delta_n,b/\delta_n]$, where $\delta_n=np-1$. Let $\mathcal{E}_n$ be the event that
\begin{itemize}
\item the giant component in $G_n$ has size $\Theta(n\delta_n)$;
\item the trees attached to the core of the giant have size $O(1/\delta_n).$
\end{itemize}
Then $\mathcal{E}_n$ holds whp and implies that the deletion of any 2-connected block from $G_n$ of size $O(1/\delta_n^2)$ keeps the size of the giant asymptotically unchanged. Let $U\subset[n]$ be a fixed set of size $n-O(1/\delta_n^2)$, $u\in U$. Then the probability that the connected component in $G_n[U]$ that contains $u$ is a tree of size $O(1/\delta_n)$ equals $1-(2+o(1))\delta_n$, which is the extinction probability of the Galton--Watson process with offspring distribution $\mathrm{Bin}(|U|,p)$. We conclude that
\begin{align*}
 \mathbb{E}X & =\Theta(n\delta_n)\times \sum_{k\in[a/\delta_n,b/\delta_n]}n^kp^{k+1}(1-(2+o(1))\delta_n)^k\\
 &=\Theta(\delta_n)\times\sum_{k\in[a/\delta_n,b/\delta_n]}(1-(1+o(1))\delta_n)^k=\Theta(1).
\end{align*}
In the same way, we estimate the second moment. Clearly, two cycles from $\Sigma_1$ may only overlap in the single vertex that has degree more than 2 in $\mathrm{C}_n$. Thus,
$$
 \mathbb{E}X(X-1)=\left(\Theta(n\delta_n)\times \sum_{k\in[a/\delta_n,b/\delta_n]}n^kp^{k+1}(1-(2+o(1))\delta_n)^k\right)^2=\Theta((\mathbb{E}X)^2).
$$
Thus, $\mathbb{E}X^2=\mathbb{E}X(X-1)+\mathbb{E}X=\Theta((\mathbb{E}X)^2)$. Paley--Zygmund inequality completes the proof of~\eqref{eq:critical_Sigma1_non-trivial}.

It remains to prove that $\Sigma_1=O_P(1)$, whp there are no two cycles in $\Sigma_1$ that share a vertex and $\Sigma_2=\varnothing.$ To show this, we will use the contiguous model from~\cite[Theorem 2]{DKLP:anatomy} that we recalled in Section~\ref{sc:complex_structure}. 

In order to analyze the contiguous model, we will use the Bollobas's configuration model~\cite{Bol_configuration} as well as switchings, a very useful tool that allows to study random regular graphs and uniformly random graphs with a specified degree sequence (see, e.g.,~\cite{Hasheminezhad_McKay} for a general overview of the method). 

Assume that we have already exposed the values of $\eta_i$ and the event $\mathcal{B}$ occurs. Notice that whp $\sum\eta_i\1_{\eta_i\geq 3}=\Theta(N)$, $\sum\eta_i^2\1_{\eta_i\geq 3}=\Theta(N)$, and, for every $i\in[n]$, $\eta_i<\ln n$. So, in what follows, we assume that
the degree sequence of the random multigraph on $[N]$ is fixed and the above properties hold deterministically. In particular, assuming that the vertex $i\in[N]$ has degree $d_i$, we get that the number of edges equals $E:=\frac{1}{2}\sum_{i=1}^N d_i=\Theta(N)$, $\sum_{i=1}^N d_i^2=\Theta(N)$, and all $d_i$ are at most $\ln n$. 

Let us now recall the {\it configuration model}. Associate each vertex $i\in[N]$ with a ``block'' of $d_i$ distinct points (also referred to as ``half-edges'') and consider a uniform perfect matching on these points. Let $\mathcal{M}$ be the set of all matchings and $\Pi$ be the set of all multigraphs. Clearly, there exists an injection $f:\mathcal{M}\to\Pi$ such that every simple graph corresponds to exactly $\prod_{i=1}^Nd_i!$ matchings:
if there is an edge of the matching $M\in\mathcal{M}$ between the $i$-th block and the $j$-th block, then $i$ and $j$ are adjacent in $f(M)$. In particular, every matching that does not have repeated edges between the blocks and does not have edges within a block corresponds to a simple graph. For $M\in\mathcal{M}$, let $w(M):=|f^{-1}(f(M))|$ be the {\it weight} of the matching $M$. Clearly, $|\Pi|=\sum_{M\in\mathcal{M}}\frac{1}{w(M)}$.\\

We now can estimate the probability that a given vertex belongs to a loop using switchings. Fix $v\in[N]$. Let $\Pi_v^1\subset\Pi$ be the set of all multigraphs that have a single loop at $v$ and $\Pi_v^0\subset\Pi$ be the set of all multigraphs that have no loops at $v$. In the same way, let $\mathcal{M}_v^1\subset\mathcal{M}$ be the set of all matchings that have a single edge in the block that corresponds to $v$ and let $\mathcal{M}_v^0\subset\mathcal{M}$ be the set of all matchings that have no edges in the block that corresponds to $v$.

Consider the bipartite graph $\mathcal{H}_v$ with parts $\mathcal{M}_v^1$ and $\mathcal{M}_v^0$ with an edge between $M\in\mathcal{M}_v^1$ and $M'\in\mathcal{M}_v^0$ if and only if there are two nodes $x,y$ in the block that corresponds to the vertex $v$ and two nodes $x',y'$ outside of this block so that $x\sim y$ and $x'\sim y'$ in $M$, while $x\sim x'$ and $y\sim y'$ in $M'$. Note that every $M\in \mathcal{M}_v^1$ has degree $E-(d_v-1)$, while every $M'\in\mathcal{M}_v^0$ has degree ${d_v\choose 2}$. We also note that $w(M)=\frac{1}{2}w(M')$. Thus,
\begin{align}
 |\Pi^1_v| &=\sum_{M'\in\mathcal{M}^1_v}\frac{1}{w(M')}=
 \sum_{(M,M')\in E(\mathcal{H}_v)}\frac{1}{{d_v\choose 2}w(M')}\notag\\
 &=
  \sum_{(M,M')\in E(\mathcal{H}_v)}\frac{1}{2{d_v\choose 2}w(M)}=
  \sum_{M\in\mathcal{M}^0_v}\frac{E-(d_v-1)}{2{d_v\choose 2}w(M)}=
  \frac{E-(d_v-1)}{2{d_v\choose 2}}|\Pi^0_v|.
 \label{eq:switchings}
\end{align}
We immediately get that $v$ has a single loop with probability 
$$
\frac{|\Pi^0_v|}{|\Pi|}\leq\frac{|\Pi^0_v|}{|\Pi^1_v|}=\frac{2{d_v\choose 2}}{E-(d_v-1)}<\frac{d_v^2}{E-(d_v-1)}.
$$
In exactly the same way, we may show that the probability that the vertex $v$ has $t$ loops is asymptotically less than the probability that $v$ has $t-1$ loops. Thus, $v$ belongs to a loop with probability at most $(1+o(1))\frac{d_v^2}{E}$ and $v$ belongs to at least two loops with probability $o\left(\frac{d_v^2}{E}\right)$, uniformly over $v$. Therefore, the expected number of vertices that belong to a loop is at most $(1+o(1))\sum\frac{\eta_1^2+\ldots+\eta_n^2}{E}=\Theta(1)$ and the expected number of vertices that belong to at least two loops is $o(1)$. In particular, due to Theorem~\ref{th:contiguous_critical}, we get that whp there are no two cycles in $\Sigma_1$ that share a vertex. Moreover, for every increasing sequence $\{a_n,n\in\mathbb{N}\}$ such that $\lim_{n\to\infty}a_n=\infty$, whp the number of loops is less than $a_n$. Due to Theorem~\eqref{th:contiguous_critical}, the same property holds in the kernel $K(G_n)$, thus $|\Sigma_1|=O_P(1)$.\\

It remains prove that whp $\Sigma_2=\varnothing$. For this, we show that the expected number of multiple edges is $\Theta(1)$ as well. We follow exactly the same strategy as for loops with the following changes in the proof: now, $\mathcal{M}_{v,u}^1$ is the set of all matchings that have exactly two edges between the blocks that corresponds to $v$ and $u$ and $\mathcal{M}_{v,u}^1\subset\mathcal{M}$ is the set of all matchings that have at most one edge between these blocks. We also consider the bipartite graph $\mathcal{H}_{v,u}$ with parts $\mathcal{M}_{v,u}^1$ and $\mathcal{M}_{v,u}^0$ with an edge between $M$ and $M'$ if and only if there are two nodes $x,y$ in the block that corresponds to the vertex $v$, two nodes $x',y'$ in the block that corresponds to the vertex $u$, and vertices $x'',x''',y'',y'''$ that do not belong to the blocks that corresponds to $v,u$, so that $x\sim x'$, $x''\sim x'''$, $y\sim y'$, and $y''\sim y'''$ in $M$, while $x\sim x''$, $x'\sim x'''$, $y\sim y''$, and $y'\sim y'''$ in $M'$. Note that every $M\in \mathcal{M}_{v,u}^1$ has degree at least $(E-(d_v-d_u))^2$, while every $M'\in\mathcal{M}_{v,u}^0$ has degree at most $2{d_v\choose 2}{d_u\choose 2}$. We also note that $w(M)\geq\frac{1}{16}w(M')$. In the same way as in~\eqref{eq:switchings}, we get
$$
 |\Pi^1_{v,u}| \geq
  \frac{(E-(d_v-d_u))^2}{32{d_v\choose 2}{d_u\choose 2}}|\Pi^0_{v,u}|,
$$
implying that the probability that there are exactly two edges between $v$ and $u$ is at most has a single loop with probability at most $(1+o(1))\frac{8d_v^4d_u^2}{E^2}$. Again, we can similarly argue that the probability that there are $t$ edges between $v$ and $u$ is asymptotically less than the probability that $v$ sends $t-1$ edges to $u$. Therefore, the expected number of multiple edges is at most 
$$
(1+o(1))\frac{8\sum_{u\neq v}d_u^2d_v^2}{E^2}\leq
(1+o(1))\frac{4(\sum_u d_u^2)^2}{E^2}=\Theta(1)
$$
as well.

Let $a_n\to\infty$ as $n\to\infty$ slowly enough so that, for every positive integer $k$, $\mathbb{P}(\zeta=k)=o(1/a_n)$ for $\zeta\sim\mathrm{Geom}(1-\mu)$. Clearly, it is enough to take $a_n=o(1/\delta_n)$. For every pair of repeated edges, let us expose the value of the respective geometric variable for one of the two edges. Then, the probability that the value of the second edge is exactly the same is $o(1/a_n)$. Since whp the number of multiple edges is $o(a_n)$, we get that whp there are no repeated edges that are transformed to paths of the same length. Due to Theorem~\ref{th:contiguous_critical}, whp $\Sigma_2=\varnothing$.

\section*{Acknowledgements}

The authors would like to thank Michael Krivelevich for helpful discussions.

\bibliographystyle{abbrv}
\bibliography{evolution}

\begin{thebibliography}{10}

\bibitem{AHU}
A.~V. Aho, J.~E. Hopcroft, and J.~D. Ullmann.
\newblock {\em The Design and Analysis of Computer Algorithms}.
\newblock Addison-Wesley Longman Publishing Co., 1974.

\bibitem{AnastosKM24}
M.~Anastos, M.~Kwan, and B.~Moore.
\newblock Smoothed analysis for graph isomorphism.
\newblock A manuscript, 2024.

\bibitem{Angluin}
D.~Angluin.
\newblock Local and global properties in networks of processors.
\newblock In {\em The 12th Annual ACM Symposium on Theory of Computing}, pages
  82--93, 1980.

\bibitem{ArvindKRV17}
V.~Arvind, J.~K{\"{o}}bler, G.~Rattan, and O.~Verbitsky.
\newblock Graph isomorphism, color refinement, and compactness.
\newblock {\em Comput. Complex.}, 26(3):627--685, 2017.

\bibitem{Babai81}
L.~Babai.
\newblock Moderately exponential bound for {G}raph {I}somorphism.
\newblock In {\em Proc.\ of the 3rd Int.\ Conf.\ on Fundamentals of Computation
  Theory (FCT'81)}, volume 117 of {\em Lecture Notes in Computer Science},
  pages 34--50. Springer, 1981.

\bibitem{Babai-GI}
L.~Babai.
\newblock Graph isomorphism in quasipolynomial time.
\newblock In {\em Proceedings of the forty-eighth annual ACM symposium on
  Theory of Computing (STOC'16)}, pages 684--697, 2016.

\bibitem{Babai-CL}
L.~Babai.
\newblock Canonical form for graphs in quasipolynomial time: preliminary
  report.
\newblock In {\em Proceedings of the 51st Annual ACM SIGACT Symposium on Theory
  of Computing (STOC'19)}, pages 1237--1246, 2019.

\bibitem{BES}
L.~Babai, P.~Erd\H{o}s, and S.~M. Selkow.
\newblock Random graph isomorphism.
\newblock {\em SIAM Journal on Computing}, 9(3):628--635, 1980.

\bibitem{BBG}
C.~Berkholz, P.~Bonsma, and M.~Grohe.
\newblock Tight lower and upper bounds for the complexity of canonical colour
  refinement.
\newblock {\em Theory of Computing Systems}, 60:581--614, 2017.

\bibitem{Biggs}
N.~Biggs.
\newblock {\em Algebraic graph theory}.
\newblock Cambridge University Press, 2nd edition, 1994.

\bibitem{BohmanFLPSSV07}
T.~Bohman, A.~M. Frieze, T.~Luczak, O.~Pikhurko, C.~D. Smyth, J.~Spencer, and
  O.~Verbitsky.
\newblock First-order definability of trees and sparse random graphs.
\newblock {\em Comb. Probab. Comput.}, 16(3):375--400, 2007.

\bibitem{Bol_configuration}
B.~Bollob\'{a}s.
\newblock A probabilistic proof of an asymptotic formula for the number of
  labelled regular graphs.
\newblock {\em European J. Combin.}, 1(4):311--316, 1980.

\bibitem{Bol}
B.~Bollob\'{a}s.
\newblock Distinguishing vertices of random graphs.
\newblock {\em Ann. Discrete Math.}, 13:33--50, 1982.

\bibitem{Bol-evolution}
B.~Bollob\'{a}s.
\newblock The evolution of random graphs.
\newblock {\em Transactions of the American Mathematical Society},
  286(1):257--274, 1984.

\bibitem{Bollobas_book}
B.~Bollob\'{a}s.
\newblock {\em Random graphs}.
\newblock Cambridge University Press, 2001.

\bibitem{CDS}
D.~M. Cvetkovi\'{c}, M.~Doob, and H.~Sachs.
\newblock {\em Spectra of graphs. Theory and applications}.
\newblock Leipzig: J. A. Barth Verlag, 3rd edition, 1995.

\bibitem{CzP}
T.~Czajka and G.~Pandurangan.
\newblock Improved random graph isomorphism.
\newblock {\em Journal of Discrete Algorithms}, 6:85--92, 2008.

\bibitem{DKLP:anatomy}
J.~Ding, J.~H. Kim, E.~Lubetzky, and Y.~Peres.
\newblock Anatomy of young giant component in the random graph.
\newblock {\em Random Structures \& Algorithms}, 39(2):139--178, 2011.

\bibitem{DLP_anatomy}
J.~Ding, E.~Lubetzky, and Y.~Peres.
\newblock Anatomy of the giant component: {T}he strictly supercritical regime.
\newblock {\em European Journal of Combinatorics}, 35:155--168, 2014.

\bibitem{ER-evolution}
P.~Erd\H{o}s and A.~R\'{e}nyi.
\newblock On the evolution of random graphs.
\newblock {\em Publ. Math. Inst. Hungar. Acad. Sci.}, 5:17--61, 1960.

\bibitem{ER-asymmetry}
P.~Erd\H{o}s and A.~R\'{e}nyi.
\newblock Asymmetric graphs.
\newblock {\em Acta Math Acad Sci Hung}, 14:295--315, 1963.

\bibitem{Feller}
W.~Feller.
\newblock {\em An Introduction to Probability Theory and Its Applications}.
\newblock Wiley, 1968.

\bibitem{Gao}
P.~Gao and Y.~Ohapkin.
\newblock Subgraph probability of random graphs with specified degrees and
  applications to chromatic number and connectivity.
\newblock {\em Random Structures \& Algorithms}, 62(4):911--934, 2023.

\bibitem{GaudioRS23}
J.~Gaudio, M.~Z. Rácz, and A.~Sridhar.
\newblock Average-case and smoothed analysis of graph isomorphism.
\newblock \textsf{arXiv:2211.16454}, 2023.

\bibitem{Hasheminezhad_McKay}
M.~Hasheminezhad and B.~D. McKay.
\newblock Combinatorial estimates by the switching method.
\newblock {\em Contemp. Math}, 531:209--221, 2010.

\bibitem{HegedusN16}
L.~Hegedüs and B.~Nagy.
\newblock On periodic properties of circular words.
\newblock {\em Discrete Mathematics}, 339(3):1189--1197, 2016.

\bibitem{Immerman1990}
N.~Immerman and E.~Lander.
\newblock {\em Describing Graphs: A First-Order Approach to Graph
  Canonization}, pages 59--81.
\newblock Springer New York, 1990.

\bibitem{Janson_book}
S.~Janson, T.~{\L}uczak, and A.~Ruci\'{n}ski.
\newblock {\em Random graphs}.
\newblock John Wiley \& Sons, 2000.

\bibitem{KieferSS22}
S.~Kiefer, P.~Schweitzer, and E.~Selman.
\newblock Graphs identified by logics with counting.
\newblock {\em {ACM} Trans. Comput. Log.}, 23(1):1:1--1:31, 2022.

\bibitem{Kort}
I.~Kortchemski.
\newblock Invariance principles for {G}alton--{W}atson trees conditioned on the
  number of leaves.
\newblock {\em Stochastic Processes and their Applications}, 122(9):3126--3172,
  2012.

\bibitem{KrebsV15}
A.~Krebs and O.~Verbitsky.
\newblock Universal covers, color refinement, and two-variable counting logic:
  {L}ower bounds for the depth.
\newblock In {\em 30th Annual {ACM/IEEE} Symposium on Logic in Computer Science
  (LICS'15)}, pages 689--700. {IEEE} Computer Society, 2015.

\bibitem{LM}
N.~Linial and J.~Mosheiff.
\newblock On the rigidity of sparse random graphs.
\newblock {\em Journal of Graph Theory}, 85(2):466--480, 2017.

\bibitem{Luczak-symmetries}
T.~{\L}uczak.
\newblock The automorphism group of random graphs with a given number of edges.
\newblock {\em Math Proc Camb Phil Soc}, 104:441--449, 1988.

\bibitem{Luczak-auto}
T.~{\L}uczak.
\newblock The phase transition in a random graph.
\newblock In D.~Mikl\'{o}s, V.~S\'{o}s, and T.~Sz\H{o}nyi, editors, {\em
  Combinatorics, Paul Erd\H{o}s is Eighty}, volume~2, pages 399--422. Bolyai
  Soc. Math. Stud. 2, J. Bolyai Math. Soc., Budapest, 1996.

\bibitem{LPW}
T.~{\L}uczak, B.~Pittel, and J.~C. Wierman.
\newblock The structure of a random graph at the point of the phase transition.
\newblock {\em Transactions of the American Mathematical Society},
  341(2):721--748, 1994.

\bibitem{Massey}
W.~S. Massey.
\newblock {\em Algebraic topology: An introduction}, volume~56 of {\em Graduate
  Texts in Mathematics}.
\newblock Springer, 5th edition, 1981.

\bibitem{NoyRR15}
M.~Noy, V.~Ravelomanana, and J.~Ru{\'e}.
\newblock On the probability of planarity of a random graph near the critical
  point.
\newblock {\em Proc. Am. Math. Soc.}, 143(3):925--936, 2015.

\bibitem{Pittel}
B.~Pittel.
\newblock On tree census and the giant component in sparse random graphs.
\newblock {\em Random Structures \& Algorithms}, 1(3):311--342, 1990.

\bibitem{Wright}
E.~M. Wright.
\newblock Asymmetric and symmetric graphs.
\newblock {\em Glasgow Math J}, 15:69--73, 1974.

\end{thebibliography}

\appendix

\section{Universal algorithm}
\label{sc:appendix}

\newcommand{\OV}[1]{{\color{purple}{#1}}}

We here show that there exists a polynomial time algorithm that, for any function $p=p(n)$ with values in $[0,1]$,
whp produces a canonical labeling of~$G(n,p)$.

Note that if $1/n\ll p\leq 1/2$, then the core of the giant component of $G(n,p)$ coincides with the core of
the entire graph because the expected number of unicyclic components in $G(n,p)$ is at most 
\begin{align*}
 \sum_{k=3}^n {n\choose k} k^{k-2} k^2 p^k (1-p)^{{k\choose 2}-k+k(n-k)}
 &\leq
 \sum_{k=3}^n (enp(1-p)^{(n-3)/2})^k<\sum_{k=1}^{\infty}(e^{3-np/2}np)^k\\
 &=\frac{e^{3+\ln(np)-np/2}}{1-e^{3+\ln(np)-np/2}}=e^{-np(1/2-o(1))}=o(1).
\end{align*}
Due to the classical linear-time algorithms for canonical labeling of trees,
this observation reduces canonical labeling of $G(n,p)$ with $1/n\ll p\leq 1/2$ to
canonical labeling of its core. Note also that the core in this regime is whp asymmetric; see~\cite{LM}.

Linial and Mosheiff~\cite{LM} suggested an algorithm ${\sf A}_1$ that, for any $p$ with $\frac{1}{n}\ll p(n)< n^{-2/3}$,
whp labels canonically $G(n,p)$ in time $O(n^4)$ by distinguishing between all vertices of the core.
In~\cite{CzP}, it was proved that, if $\frac{\ln^4 n}{n}\leq p\leq\frac{1}{2}$, then CR whp distinguishes between
all vertices of the entire $G(n,p)$. Note that, in this regime, whp $G(n,p)$ has minimum degree at least 2, i.e. its core coincides with the entire graph.
 Finally, in the present paper we designed an algorithm ${\sf A}_2$ that, for any $p=O(1/n)$, whp labels canonically $G(n,p)$
in time $O(n\ln n)$. This algorithm, which is described in Section~\ref{sc:from_CR_to_CL},
succeeds on all graphs with $O(n)$ edges that satisfy the conclusion of Theorem~\ref{thm:ColRef}.

Now, consider the following algorithm ${\sf A}$:
\begin{enumerate}
\item Run CR. If it colors differently all vertices, then halt and output the canonical labeling produced by~CR.
\item If the algorithm does not halt in Step 1, then run ${\sf A}_1$. If it succeeds (i.e., colors differently all vertices
  in the core of the input graph), then halt and output the labelling produced by~${\sf A}_1$.
\item If the algorithm does not halt in Steps 1 and 2, then run ${\sf A}_2$ and output the labeling it produces
  (or give up if ${\sf A}_2$ fails). 
\end{enumerate}

Let us show that the algorithm ${\sf A}$ succeeds whp for any $p$ with $p(n)\leq 1/2$.
Assume, to the contrary, that there exist a constant $\varepsilon>0$ and a sequence $(n_k)_{k\in\mathbb{N}}$ such that 
$$
\mathbb{P}({\sf A}\text{ fails on $G(n_k,p(n_k))$})>\varepsilon
$$
for all $k$. If there is a subsequence $(n_{k_i})_{i\in\mathbb{N}}$ and a constant $C>0$ such that $p(n_{k_i})<C/n_{k_i}$ for all $i$,
then we get a contradiction with the performance of the algorithm ${\sf A}_2$. Therefore, $p(n_k)\gg\frac{1}{n_k}$.
If there is a subsequence $(n_{k_i})_{i\in\mathbb{N}}$ such that $p(n_{k_i})<n^{-2/3}_{k_i}$ for all $i$, then we get a contradiction
with the performance of the algorithm ${\sf A}_1$. It follows that $p(n_k)\geq n^{-2/3}_k$ for all $k$. This, however, contradicts
the result of~\cite{CzP} that CR in this regime produces a discrete coloring of $G(n,p)$ whp.

In order to obtain canonical labeling, whp, for all $p$ with $p(n)\in[0,1]$, we run the algorithm ${\sf A}$
on input $G$ and if it fails, then we run ${\sf A}$ once again on the complement of~$G$.
\end{document}